\numberwithin{equation}{section}									% equation numbering by section
\newcommand{\be}{\begin{equation}}
\newcommand{\ba}{\begin{eqnarray}}
\newcommand{\ea}{\end{eqnarray}}
\newcommand{\ee}{\end{equation}}
\newcommand{\bea}{\begin{eqnarray}}
\newcommand{\eea}{\end{eqnarray}}
\newcommand{\bes}{\begin{equation*}}
\newcommand{\beas}{\begin{eqnarray*}}
\newcommand{\eeas}{\end{eqnarray*}}
\newcommand{\bas}{\begin{array*}}
\newcommand{\eas}{\end{array*}}
\newcommand{\ees}{\end{equation*}}
\newtheorem{theorem}{Theorem}[section]
\newtheorem{corollary}{Corollary}[section]
\newtheorem{conjecture}{Conjecture}[section]
\newtheorem{definition}{Definition}[section]
\newtheorem{proposition}{Proposition}[section]
\newtheorem{remark}{Remark}[section]
\newtheorem{lemma}{Lemma}[section]
\title{Non-isometry, State Dependence and Holography}
\author{Stefano Antonini$^{1}$,Vijay Balasubramanian$^{2,3,4}$, Ning Bao$^{5,6}$, ChunJun Cao$^{7}$ and Wissam Chemissany$^2$}
\affiliation[1]{Berkeley Center for Theoretical Physics and Department of Physics, University of California, Berkeley, California 94720, U.S.A}
\affiliation[2]{David Rittenhouse Laboratory, University of Pennsylvania, Philadelphia, PA 19104, USA}
\affiliation[3]
{Rudolf Peierls Centre for Theoretical Physics, University of Oxford, UK} 
\affiliation[4]
{Theoretische Natuurkunde, Vrije Universiteit Brussel and International Solvay Institutes, Pleinlaan 2, B-1050 Brussels, Belgium}
\affiliation[5]{Computational Science Initiative, Brookhaven National Lab, Upton, NY 11973, USA}
\affiliation[6]{Department of Physics, Northeastern University, Boston, MA 02115, USA}
\affiliation[7]{Department of Physics, Virginia Tech, Blacksburg, VA, 24061, USA}
\emailAdd{santonini@berkeley.edu, vijay@physics.upenn.edu,\\
ningbao75@gmail.com, cjcao@vt.edu, chemissany.wissam@sas.upenn.edu}
\abstract{We establish an equivalence between non-isometry of quantum codes and state dependence of operator reconstruction, and discuss implications of this equivalence for holographic duality. Specifically, we define quantitative measures of non-isometry and state dependence and describe bounds relating these quantities. In the context of holography we show that, assuming known gravitational path integral results for  overlaps between semiclassical states, non-isometric bulk-to-boundary maps with a trivial kernel are approximately isometric and bulk reconstruction approximately state-independent. In contrast, non-isometric maps with a non-empty kernel always lead to state-dependent reconstruction. We also show that if a global bulk-to-boundary map is non-isometric, then there exists a region in the bulk which is causally disconnected from the boundary. Finally, we conjecture that, under certain physical assumptions for the definition of the Hilbert space of effective field theory in AdS space, the presence of a global horizon implies a non-isometric global bulk-to-boundary map.
%We establish an equivalence between non-isometry of quantum codes and state-dependence of operator reconstruction, and discuss implications of this equivalence for the AdS/CFT correspondence and holography more generally. Specifically, we define quantitive measures of non-isometry and state-dependence and show that, for a given amount of one, the other can be bounded above and below. In the context of holography we show that, assuming known gravitational path integral results for the behavior of overlaps between semiclassical states, non-isometric bulk-to-boundary maps with a trivial kernel are approximately isometric and bulk reconstruction approximately state-independent. Contrarily, non-isometric maps with a non-empty kernel always lead to state-dependent reconstruction. Finally, we show that if a global bulk-to-boundary map is non-isometric, then there exists a global causal horizon separating some region in the bulk from the boundary, and conjecture the opposite implication under physical assumptions for the definition of the bulk EFT Hilbert space.
}
\begin{document}

\maketitle

\section{Introduction}
%\sa{We should clarify our definition of state-dependence and how it relates to the definition of state specific in akers-penington, akers et al., etc.}

Black holes are believed to have an entropy proportional to their horizon area \cite{Bekenstein:1972tm,Bekenstein:1973ur,Bekenstein:1974ax,Hawking:1974rv,Hawking:1975vcx}. If this is correct, the local degrees of freedom described by effective field theory (EFT) in the region behind the horizon cannot all be independent.  Similarly, in the holographic AdS/CFT correspondence \cite{tHooft:1993dmi,Susskind:1994vu,Maldacena:1997re,Witten:1998qj,Gubser:1998bc,Aharony:1999ti}, the degrees of freedom of a higher dimensional quantum gravity emerge from a lower dimensional local quantum field theory on the spacetime boundary.  In this case too, a surface area's worth of degrees of freedom is apparently sufficient to encode the local dynamics of the gravitating interior.

In some circumstances, non-perturbative effects in quantum gravity are known to achieve this winnowing of the overcomplete local degrees of freedom in the low energy effective theory.  For example, wormholes can introduce small overlaps of $O(\exp(-\gamma/G))$ between apparently orthogonal states of low energy effective field theories. Such effects have played a prominent role in recent analyses of information recovery from black holes \cite{penington2020entanglement,almheiri2019entropy,Penington:2019kki,almheiri2020replica} and of the microcanonical origin of the entropy of black holes in quantum gravity \cite{Balasubramanian:2022gmo,Balasubramanian:2022lnw}.\footnote{Similar effects are also central in the resolution of the factorisation puzzle in AdS/CFT settings \cite{Guica:2015zpf,Harlow:2015lma,Harlow:2018tqv,Penington:2023dql,Boruch:2024kvv}.} Indeed, even though  non-perturbatively induced overlaps are exponentially small, they can result in an exponential compression of an EFT Hilbert space  \cite{william1984extensions, candes2006near, chao2017overlapping, buhrman2001quantum} while nevertheless approximately preserving many features of the effective theory such as spatial locality \cite{papadodimas2013infalling,chao2017overlapping, cao2023overlapping}. Thus, the apparently orthogonal low energy states are a non-isometric representation of the true underlying quantum gravity Hilbert space in the sense that they do not preserve the inner product.  This sort of non-isometry provides a concrete quantum information-theoretic mechanism to resolve the apparent mismatch between the counting of local EFT degrees of freedom and the entropies predicted by gravity, and has recently found application to the reconstruction of the black hole interior in the late stages of evaporation \cite{Akers:2021fut,akers2022black,balasubramanian2023quantum}.

%A profound observation about black holes is that the apparent degrees of freedom enclosed by the horizon would naively scale as the volume of the black hole even though its entropy is proportional to the horizon area. This “holographic principle” is subsequently extended to other spacetimes where a similar higher dimensional bulk can be encoded in a lower dimensional manifold[], of which the AdS/CFT correspondence is a concrete realization[]. However, the precise manner local degrees of freedom in higher dimensions emerge from the boundary or horizon degrees of freedom remains an unsolved problem. 

%It is long believed that the breakdown of exact locality in quantum gravity would play a pivotal role in the emergence of bulk locality. In particular, these non-perturbative quantum gravity corrections induce small overlaps of $O(\exp(-\gamma/G))$ between otherwise complete orthogonal basis states in the effective field theory description and turn them into an overcomplete one[e.g. Euclidean wormhole]. Despite the corrections being exponentially small, the overlaps allow for  an exponentially compression of the EFT Hilbert space[Johnson Lindenstrauss, compressed sensing, OQ, etc] while approximately preserving many salient features of the effective theory [PR,Chao, dS/MERA etc]. Such non-isometric compressions therefore provide a concrete quantum information-theoretic mechanism to resolve the apparent mismatch in the local degrees of freedom counting and has recently found applications in the reconstruction of BH interior and the related information problem[Vijay, Akers].

However, in circumstances where the AdS/CFT bulk-to-boundary map is non-isometric, some familiar properties of the holographic dictionary, such as state-independent unitary reconstruction in the CFT of operators in AdS, are expected to break down \cite{Akers:2020pmf,Akers:2021fut,akers2022black}. In fact, (approximate) state-independent reconstruction of all bulk operators would necessarily require the bulk-to-boundary map be (nearly) isometric, which cannot be true for local excitations in the black hole interior if the horizon area measures the entropy \cite{Akers:2020pmf,Akers:2021fut,Cao_2021,DeWolfe:2023iuq,DeWolfe:2023jfm,Steinberg_2023,Bueller:2024zvz}. We also expect a connection between non-isometry and state-dependence for holographic subregion reconstruction in the context of  holographic Quantum Error Correcting (QEC) codes, when the encoding map for a bulk subregion is only approximately isometric, as occurs away from the classical limit \cite{Engelhardt:2014gca,faulkner2020holographic}.  Non-isometry and state-dependence have also been related in holographic toy models and more generally in tensor networks \cite{Hayden_2016,Hayden:2018khn,Cao_2021,charlesnote,cao2023overlapping, Bueller:2024zvz,Steinberg_2023} %{\color{red} (ABSC, Noniso,  and other related works)}.

Given these examples, our goal in this paper is to precisely quantify the relationship between non-isometry and state-dependence.
%Our work here seeks to bridge this gap and formulate this connection rigorously. 
In Sec.~\ref{sec:QI}, we define measures for non-isometry and state-dependence by averaging over states and reconstructed operators. We identify an equivalence relation between these concepts---the more non-isometric a map is, the more state-dependence one needs for operator reconstruction, and vice versa. These general quantum information results hold for any linear maps over finite dimensional Hilbert spaces. Then in Sec.~\ref{sec:gravity}, we study consequences of this equivalence for quantum gravity. To do so we have to go beyond the average case measures of Sec.~\ref{sec:QI}. We show that if we insist that the overlap between any two semiclassical states must be consistent with gravitational path integral calculations and that their inner product is approximately preserved by a kernel-less bulk-to-boundary map, then the map is also approximately isometric and the reconstruction of bulk operators is approximately state-independent. On the other hand, non-isometric maps with kernels require state-dependent reconstruction even if we restrict  attention to a subset of semi-classical states and operators \cite{akers2022black}. We further show that if the global bulk-to-boundary map is non-isometric, then there exists a region of the bulk which is causally disconnected from the boundary. This can be achieved by the existence of either a global future and past horizons, or a disconnected patch of spacetime without a boundary (namely, a closed universe). Finally, under physically-motivated assumptions about the bulk EFT degrees of freedom giving rise to the bulk EFT Hilbert space, we conjecture that in the presence of a global horizon the global bulk-to-boundary map must be non-isometric. Finally, in Sec.~\ref{sec:discussion} we discuss our results and future directions.

\section{Non-isometry and state-dependence}
\label{sec:QI} 

Let $A,B$ be Hilbert spaces. A linear map $V:A\rightarrow B$ is isometric if it preserves the inner product, i.e., $\langle\psi|V^{\dagger}V|\phi\rangle =\langle\psi|\phi\rangle$ for all states $|\psi\rangle,|\phi\rangle\in A$. Such maps are commonly used as quantum encoding maps as they map states in a logical Hilbert space to corresponding representations in a physical Hilbert space, also known as codewords. We refer to any map that is not isometric as {\it non-isometric}. In this work, we only consider maps between finite dimensional Hilbert spaces. Concretely, we can let $V$ be the global holographic bulk to boundary map in AdS/CFT, and take $A$ to be the Hilbert space of effective degrees of freedom, and $B$ to be the fundamental Hilbert space.

In contrast to non-isometry, state-independence is a less familiar concept in the quantum coding literature. The term also seems to have been given various, sometimes inequivalent, definitions  in the literature \cite{Papadodimas_2015,Papadodimas_2016,Papadodimas_2013,papadodimas2013infalling,papadodimas2014black,Marolf_2016,Hayden:2018khn,Akers:2021fut,Akers:2020pmf,Cao_2021,Yoshida_2019}.  So we will start by defining the notion of state-dependence that we will use in the present paper. From \cite{Akers:2021fut,akers2022black}, %{\color{red} (Akers-Penington)}, 

\begin{definition}[State-specific reconstruction]
    For a fixed $|\psi\rangle\in A$, a unitary operator $O$ on A admits a state-specific reconstruction if there exists a unitary $\tilde{O}$ on $B$ such that $\tilde{O}V|\psi\rangle=VO|\psi\rangle$. 
\end{definition}
%{\color{blue} VB: Do we really need the separate definition above?  It's a bit clunky.  I think one would generally understand the term ``state-specific reconstruction'' to mean that $O$ varies between states. }

\begin{definition}[State-independent reconstruction]
\label{def:stateindependence}
    A unitary operator $O$ on $A$ admits a state-independent reconstruction if there exists a fixed unitary $\tilde{O}$ on $B$ such that $\tilde{O} V |\psi\rangle = V O |\psi\rangle$ for all $|\psi\rangle \in A$.
%   it is a valid state-specific reconstruction of $O$ for all $|\psi\rangle\in A$.
\end{definition}
%{\color{blue} VB: I restored the equation in the definition.}
%In other words, let $S_{\tilde{O}}(|\psi\rangle)$ be the set of all valid state-specific reconstructions $\tilde{O}$ of $O$ with respect to $|\psi\rangle$, this is to say that the reconstruction is state independent if $\bigcap_{|\psi\rangle\in A} S_{\tilde{O}}(|\psi\rangle )\ne \emptyset$.

When no fixed reconstruction $\tilde{O}$ exists, namely  $\tilde{O}V|\psi\rangle=VO|\psi\rangle$ only holds for some states $|\psi\rangle$, we say the reconstruction is {\it state-dependent}. Similarly, if a fixed $\tilde{O}$ suffices for all states in a  subspace of $A$, but doesn't for states in another subspace, then we can say the reconstruction is {\it subspace-dependent}. 
Note that a state-specific reconstruction here need not be state-dependent. This definition allows us to introduce a computable measure for state dependence. We will further refine this notion at the end of this Section.

Next, we will make precise the relation between state-dependence and non-isometry.

\begin{theorem}
    A linear map $V$ is isometric if and only if all unitary logical operators $O$ can be reconstructed in a state-independent manner.
    \label{thm:exactrecon}
\end{theorem}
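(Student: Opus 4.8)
The plan is to prove the two implications separately: the forward direction ($V$ isometric $\Rightarrow$ every unitary reconstructible) constructively, and the converse by turning the reconstruction relation into a commutation condition on $V^\dagger V$.

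For the forward direction, suppose $V$ is isometric. Then $V$ is injective and its image $V(A)$ is a subspace of $B$ of dimension $\dim A \le \dim B$. Given any unitary $O$ on $A$, I would first define a candidate reconstruction $\tilde O_0$ on the image by $\tilde O_0(V|\psi\rangle) := VO|\psi\rangle$; injectivity of $V$ makes this well-defined and linear. The key check is that $\tilde O_0$ is an isometry of $V(A)$ onto itself: using isometry of $V$ twice and unitarity of $O$ once, $\langle \tilde O_0 V\psi | \tilde O_0 V\phi\rangle = \langle VO\psi|VO\phi\rangle = \langle O\psi|O\phi\rangle = \langle\psi|\phi\rangle = \langle V\psi|V\phi\rangle$, while $O(A)=A$ gives $\tilde O_0(V(A))=V(A)$, so $\tilde O_0$ is onto. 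Finally I extend $\tilde O_0$ to a unitary on all of $B$ by acting as the identity on the orthogonal complement, i.e. $\tilde O = \tilde O_0 \oplus I_{V(A)^\perp}$ on $B = V(A)\oplus V(A)^\perp$. By construction $\tilde O V|\psi\rangle = VO|\psi\rangle$ for every $|\psi\rangle$, so $O$ is reconstructed state-independently.

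For the converse, suppose every unitary $O$ on $A$ has a fixed unitary $\tilde O$ on $B$ with $\tilde O V = VO$ as operators $A \to B$. Composing with the adjoint relation $V^\dagger \tilde O^\dagger = O^\dagger V^\dagger$ gives $V^\dagger V = V^\dagger \tilde O^\dagger \tilde O V = O^\dagger V^\dagger V O$, where I used $\tilde O^\dagger \tilde O = I_B$. Writing $M := V^\dagger V$, this says $M = O^\dagger M O$, i.e. $M$ commutes with every unitary $O$ on $A$. Since every operator on the finite-dimensional space $A$ is a complex linear combination of unitaries, $M$ lies in the center of the full operator algebra, and hence $M = cI_A$ for some scalar $c$; positivity of $M = V^\dagger V$ forces $c \ge 0$. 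This is precisely the statement that $V$ is isometric up to the scalar $c$.

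The main subtlety — and the step I expect to require the most care — is pinning the scalar to $c = 1$, since the commutation argument alone only yields $V^\dagger V \propto I_A$. Indeed $V = \tfrac12 W$ with $W$ isometric satisfies the reconstruction condition yet is not isometric in the strict sense, and $V=0$ (the case $c=0$) reconstructs every $O$ vacuously; so the conclusion $V^\dagger V = I_A$ must invoke the normalization convention implicit in calling $V$ an encoding map (for instance, that $V$ preserves the norm of at least one state, or that codewords are normalized). Under that convention $c=1$ and $V$ is isometric, and I would state this normalization explicitly. The remaining points are routine: the ``linear combination of unitaries'' fact used in the Schur-type step, and the verification that the identity extension in the forward direction yields a genuine unitary on $B$.
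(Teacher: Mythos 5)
Your proof is correct, and your converse takes a genuinely different route from the paper's. The paper argues the contrapositive concretely: from non-isometry it asserts the existence of two equal-norm states $|\psi\rangle,|\phi\rangle$ with $\|V|\psi\rangle\|\neq\|V|\phi\rangle\|$, picks a unitary $O$ with $O|\phi\rangle=|\psi\rangle$, and notes that no norm-preserving $\tilde O$ can then satisfy $\tilde O V|\phi\rangle=VO|\phi\rangle$. You instead turn the reconstruction identity into $V^\dagger V=O^\dagger V^\dagger V O$ for every unitary $O$ and apply a Schur-type commutant argument to conclude $V^\dagger V=cI$. The payoff of your route is that the caveat you flag about pinning $c=1$ is not a cosmetic worry but a genuine gap in the paper's own argument: for $V=\eta W$ with $W$ isometric and $\eta\notin\{0,1\}$, every unitary is exactly state-independently reconstructible (take $\tilde O$ satisfying $\tilde O W=WO$), yet $V$ is non-isometric, and the paper's claimed pair of equal-norm states with unequal image norms simply does not exist, since every unit vector is mapped to a vector of norm $\eta$. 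The paper only excludes this family much later, in its appendix, with the aside that one should ``assume that $V$ is not simply proportional to a linear isometry by constant multiplication''; the main-text proof does not. So the theorem as stated needs either the normalization convention you propose (for instance $\Tr[V^\dagger V]=|A|$, which the paper imposes elsewhere) or an explicit exclusion of scalar multiples of isometries, and your write-up should state that convention up front as you suggest. Your forward direction is a spelled-out version of the paper's one-line appeal to the singular value decomposition and is fine, including the extension of $\tilde O_0$ by the identity on $V(A)^\perp$.
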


\begin{proof}
Let $V:A\rightarrow B$ where we refer to $A$ as the logical Hilbert space and $B$ as the physical Hilbert space. Here by logical operator we mean operators that act on $A$, the input space.
    If $V$ is isometric, then it is well-known that any logical operators can be unitarily represented in the physical space. It is easy to give a construction for any logical operator $\tilde{O}$ by writing down the singular value decomposition of $V$. Hence the reconstruction must be state-independent. Therefore state-dependence would imply non-isometry. 

   For the other direction, let us suppose that $V$ is non-isometric, and show that there exists a unitary logical operator that does not admit a unitary state-independent reconstruction.
   Recall that $O$ admits one such reconstruction if there is some $\tilde{O}$ such that $\tilde{O}V|\psi\rangle=VO|\psi\rangle,\forall |\psi\rangle$. Because $V$ is non-isometric, there must exist states $|\psi\rangle,|\phi\rangle, |||\phi\rangle||=|||\psi\rangle||$ such that the norm induced by the inner product is not preserved under the map, i.e., $||V|\psi\rangle||\ne ||V|\phi\rangle||$. Let $O$ be a unitary such that $|\psi\rangle = O|\phi\rangle$, then $||\tilde{O}V|\phi\rangle||=||V|\phi\rangle||\ne ||V|\psi\rangle||=||VO|\phi\rangle||$. Since a unitary $\tilde{O}$ is norm preserving, there is no reconstruction $\tilde{O}$ for $O$ that can possibly map one vector to another with a different norm.\footnote{Notice that for this operator $O$ there simply exists no unitary reconstruction, not even state-specific.}
   \end{proof}

%In other words, non-isometry is equivalent to state-dependence when we demand exact reconstruction to be satisfied. 

%The more detailed proof is given in App.~\ref{app:proof_thm2.1}, [proof lemma 3.2 note] however it is instructive to build up some intuition from a sketch. 

%The forward direction is straightforward, as it follows from our usual expectation in quantum error correction. By restricting ourselves to the code subspace, all logical operators can be reconstructed in a state-independent manner. For the reverse direction, suppose the map is non-isometric, then it is clear some vectors will be stretched or shrunk under this transformation. Suppose a logical operator $O$ swaps two states that are stretched by different amount under $V$, then no unitary $\tilde{O}$ can accomplish the same feat as it preserves the length of the vector. A similar statement can also be found as Theorem H.1 of \cite{}[Akers et al BH] where they showed that if $V$ is proportional to a random projection, then the state-independent reconstruction of all subexponential operators (or just Pauli operators) implies approximate isometry.

In other words, non-isometry is equivalent to state-dependence if we require exact reconstruction of operators.  We would like to generalize this correspondence beyond exact isometry and exact state-independence.  Namely, if operator reconstruction is approximately state-independent, does it similarly imply $V$ is approximately isometric and vice versa?  We will show that the answer is ``yes''. To make this precise, we first need to define  quantitative measures of non-isometry and state-dependence.

%\CC{set a consistent notation for the norms}
\subsection{Quantifying non-isometricity}
 
\begin{definition}
\label{def:nonisometry_measure}
Let $V:A\rightarrow B$ be a linear map between Hilbert spaces. The degree of non-isometricity of $V$ is
\begin{equation}
    \mathcal{D}(V) = \int d\mu(|\psi\rangle)d\nu(|\phi\rangle)|\langle\psi|V^{\dagger}V|\phi\rangle -\langle\psi|\phi\rangle|^2
    \label{eqn:nonisomeasure}
\end{equation}
 where $d\mu$ and $d\nu$ indicate the Haar measure.
%normalized uniform measure.
\end{definition}
By the Haar measure on states we mean that we sum over $|\psi\rangle = U |0\rangle$ where $U$ are unitaries taken from the Haar measure and $|0\rangle$ is a reference state.  $ \mathcal{D}(V)$ vanishes by construction if $V$ is isometric, and is manifestly positive.  Likewise it follows from the definition that if $W$ is an isometry, then $ \mathcal{D}(WV) =  \mathcal{D}(V)$. Thus we have:
%Intuitively, because isometries preserve inner products, $\mathcal{D}(V)$ quantifies how much of that is altered under $V$. We will take $||\cdot||$ to be the vector 2-norm. This is a valid quantifier of non-isometry because
\begin{proposition}
\label{prop:cDv}
$\mathcal{D}(V)$ satisfies the following conditions
\begin{enumerate}[label=\roman*)]
    \item $\mathcal{D}(V)\geq 0$ and attains equality if and only if $V$ is isometric. 
    \item $\mathcal{D}(V)$ is invariant under isometric composition $\mathcal{D}(WV)=\mathcal{D}(V)$ for isometries $W$.
\end{enumerate}
\end{proposition}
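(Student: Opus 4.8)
The plan is to prove the two items separately, handling non-negativity and the invariance property by direct inspection of the definition, and reserving the bulk of the argument for the equality condition in item (i). For item (i), non-negativity is immediate: the integrand $|\langle\psi|V^{\dagger}V|\phi\rangle - \langle\psi|\phi\rangle|^2$ is a squared modulus, hence pointwise non-negative, and $d\mu,d\nu$ are non-negative (probability) measures, so $\mathcal{D}(V)\geq 0$. For the equality statement I would argue both directions. If $V$ is isometric then $V^{\dagger}V = \mathbb{1}_A$, so $\langle\psi|V^{\dagger}V|\phi\rangle = \langle\psi|\phi\rangle$ for every pair of states and the integrand vanishes identically, giving $\mathcal{D}(V)=0$. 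The substantive direction is the converse. Suppose $\mathcal{D}(V)=0$. Since the integrand is non-negative, it must vanish $\mu\times\nu$-almost everywhere. The integrand is a continuous function of $(|\psi\rangle,|\phi\rangle)$, and the Haar measure pushed forward by $U\mapsto U|0\rangle$ is the uniform (Fubini--Study) measure on the unit sphere of $A$, which has full support. A non-negative continuous function that vanishes almost everywhere with respect to a full-support measure vanishes everywhere on the support; hence $\langle\psi|V^{\dagger}V|\phi\rangle = \langle\psi|\phi\rangle$ for all normalized $|\psi\rangle,|\phi\rangle\in A$.

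The final step is to upgrade this pointwise identity on normalized states to the operator equation $V^{\dagger}V = \mathbb{1}_A$. Both sides of $\langle\psi|(V^{\dagger}V - \mathbb{1}_A)|\phi\rangle = 0$ are sesquilinear in $(|\psi\rangle,|\phi\rangle)$, so vanishing on all unit vectors extends by homogeneous scaling to all vectors; a sesquilinear form that vanishes on every pair of vectors is the zero form, forcing $V^{\dagger}V - \mathbb{1}_A = 0$. Thus $V$ is isometric, completing item (i).

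For item (ii), I would simply observe that $(WV)^{\dagger}(WV) = V^{\dagger}W^{\dagger}W V$, and that an isometry $W$ satisfies $W^{\dagger}W = \mathbb{1}_B$, so $(WV)^{\dagger}(WV) = V^{\dagger}V$. The integrand defining $\mathcal{D}(WV)$ is therefore identical to the one defining $\mathcal{D}(V)$, and the two integrals coincide term by term.

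The only genuinely delicate point is the measure-theoretic step in the converse of (i): passing from ``the integral vanishes'' to ``the integrand vanishes everywhere.'' This requires the two ingredients noted above---continuity of the integrand and full support of the Haar-induced measure---rather than mere almost-everywhere vanishing, since the desired conclusion $V^{\dagger}V = \mathbb{1}_A$ is a statement about \emph{all} states. Everything else reduces to elementary linear algebra.
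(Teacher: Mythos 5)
Your proof is correct and follows the route the paper intends: the paper presents both items as immediate consequences of the definition, and you have simply supplied the details, in particular the continuity-plus-full-support argument needed to pass from vanishing of the integral to $\langle\psi|V^{\dagger}V|\phi\rangle=\langle\psi|\phi\rangle$ for \emph{all} states, and then to $V^{\dagger}V=I$ by sesquilinearity, in the converse direction of (i). (The trace identity $\mathcal{D}(V)=\Tr[(V^{\dagger}V-I)^2]/|A|^2$ of Lemma~\ref{lemma:non_isometry} would yield that converse in one line without any measure theory, but your argument is equally valid.)
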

%The first property  $\mathcal{D}(V)\geq 0$ follows from the definition (\ref{eqn:noniso_measure}). Similarly, by continuity, it vanishes if and only if all pairs of inner products are preserved. The second is obvious by definition.

For some maps $V$ it may be convenient to generalize the  non-isometry measure (\ref{eqn:nonisomeasure})  to sum only over specific subsets of states $S\subset A$ rather than the entire space.  
For instance, in quantum gravity, one might want to focus on semi-classical states for which the background geometry is well-defined.  In this section we will derive general results, and will discuss gravity-related implications in the next section.

If $V$ is injective it must have a trivial kernel and $|A|\leq |B|$, where by $|A|$ we mean the dimension of $A$. This means that no elements of $A$ will be mapped to $0$, but $V$ can still be non-isometric by changing the lengths of vectors.  Such \emph{weakly non-isometric} maps can be found in certain constructions of approximate quantum error correcting codes, e.g. \cite{Cao_2021,Hayden_2016} as global maps. In AdS/CFT, they arise both in the context of subregion reconstruction \cite{Akers:2021fut}---which we will not discuss here---and for global maps \cite{brown2020python,Engelhardt:2021mue,Engelhardt:2021qjs}.
We will refer to maps $V$ that are not injective, and hence have a nontrivial kernel, as \emph{strongly non-isometric}. 
If $V$ is surjective, the rank-nullity theorem tells us that $|B|\leq|A|$. Such maps have been studied in \cite{candes2006near,chao2017overlapping,randomproj,randomproj2} and in \cite{akers2022black,Akers:2021fut,cao2023overlapping,balasubramanian2023quantum}. 

Note that any map $V$ that is non-surjective can be made surjective by restricting to the image of $V$. In other words, one can always identify $B'=Im(V)$ and redefine $V':A\rightarrow B'$ where $V'$ is surjective. Since our non-isometry measure is invariant under such a redefinition, we only need to focus on proofs for surjective non-isometric maps for the remainder of this section. However, for the above reason, we clarify that these measures are fully valid for both the strongly and weakly non-isometric maps regardless of their surjectivity. %that will be discussed in the context of holography in Sec. \ref{sec:gravity} regardless of their surjectivity. 
%This rewriting has no impact on the values of the measures we are about to define. We can now divide up the different cases $V$ is non-isometric. 
%\CC{think about this more carefully}

%Evaluating the integrals, one can easily show that (App.~\ref{app:proof_lemma2.1})
\begin{lemma}\label{lemma:non_isometry}
Let $V:A\rightarrow B$ be a linear map. Then
    \begin{equation}
    \mathcal{D}(V) = \Tr[(V^{\dagger}V-I)^2]/|A|^2.
\label{eqn:noniso_measure}
\end{equation}
If we further normalize as $\Tr[V^{\dagger}V]=|A|$, then 
\begin{equation}\label{eqn:non-iso-square}
    \mathcal{D}(V)=\frac{1}{2|A|^3}\sum_{i,j=1}^{|A|}(\lambda_i-\lambda_j)^2
\end{equation}
where $\{\lambda_i\}$ are the eigenvalues of $V^{\dagger}V$, including those that vanish.
\end{lemma}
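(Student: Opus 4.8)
The plan is to collapse the double Haar integral in Definition~\ref{def:nonisometry_measure} into a single trace using only the first-moment identity for Haar-random states. First I would set $M = V^{\dagger}V - I$, which is Hermitian because $V^{\dagger}V$ is, so that the integrand is simply $|\langle\psi|M|\phi\rangle|^2 = \langle\psi|M|\phi\rangle\langle\phi|M|\psi\rangle$. This recasts $\mathcal{D}(V)$ as the Haar average of $\langle\psi|M|\phi\rangle\langle\phi|M|\psi\rangle$ over independent random $|\psi\rangle,|\phi\rangle$, which is the form that responds directly to moment identities.

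Next I would integrate over $|\phi\rangle$ using $\int d\nu(|\phi\rangle)\,|\phi\rangle\langle\phi| = I/|A|$, which follows from unitary invariance of the Haar measure (equivalently Schur's lemma), since $|\phi\rangle = U|0\rangle$ with $U$ Haar-distributed. This step leaves $\frac{1}{|A|}\langle\psi|M^{2}|\psi\rangle$. Applying the same identity to the $|\psi\rangle$ integral then gives $\mathcal{D}(V) = \frac{1}{|A|}\Tr[M^{2}\,I/|A|] = \Tr[(V^{\dagger}V-I)^{2}]/|A|^{2}$, which is Eq.~(\ref{eqn:noniso_measure}). I would emphasize the pleasant feature that only the \emph{first} Haar moment is needed and no Weingarten (second-moment) computation enters, precisely because averaging over $\phi$ already produces a single-ket expectation value rather than a four-index tensor.

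For the second formula I would diagonalize $V^{\dagger}V$ with eigenvalues $\{\lambda_i\}$ (including zeros), so that $\Tr[(V^{\dagger}V-I)^2] = \sum_i(\lambda_i-1)^2$. The normalization $\Tr[V^{\dagger}V]=|A|$ is exactly the statement $\sum_i\lambda_i = |A|$, i.e. the eigenvalues have unit mean. A short algebraic identity then converts the variance about $1$ into a pairwise sum: using $\sum_{i,j}(\lambda_i-\lambda_j)^2 = 2|A|\sum_i\lambda_i^2 - 2\big(\sum_i\lambda_i\big)^2$ together with the normalization, one finds $\sum_i(\lambda_i-1)^2 = \sum_i\lambda_i^2 - |A| = \frac{1}{2|A|}\sum_{i,j}(\lambda_i-\lambda_j)^2$. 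Dividing by $|A|^2$ reproduces Eq.~(\ref{eqn:non-iso-square}).

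The only genuine subtlety—the part I would treat as the ``main obstacle''—is justifying the first-moment identity $\int d\mu(|\psi\rangle)\,|\psi\rangle\langle\psi| = I/|A|$ and confirming that interchanging the order of the two integrations is legitimate. Both are routine given finite dimensionality and the unitary-invariance characterization of $d\mu$ and $d\nu$ stated after the definition; everything downstream is bookkeeping. It is worth noting explicitly that the normalization assumption is used only for the second formula, so Eq.~(\ref{eqn:noniso_measure}) holds for any linear $V$.
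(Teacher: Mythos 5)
Your proof is correct and follows essentially the same route as the paper's: both reduce the double Haar integral to $\Tr[(V^{\dagger}V-I)^2]/|A|^2$ via the first-moment identity $\int d\mu(|\psi\rangle)\,|\psi\rangle\langle\psi|=I/|A|$ applied twice, and both convert $\sum_i(\lambda_i-1)^2$ into the pairwise sum using the normalization $\sum_i\lambda_i=|A|$ (your scalar identity $\sum_{i,j}(\lambda_i-\lambda_j)^2=2|A|\sum_i\lambda_i^2-2(\sum_i\lambda_i)^2$ is just the expanded form of the paper's tensor-trace identity $\Tr[(\Pi\otimes I-I\otimes\Pi)^2]=2|A|\Tr[(\Pi-I)^2]$). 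The only cosmetic difference is that you perform the $|\phi\rangle$ integral first rather than writing the integrand as a single trace.
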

%\begin{proof}
%See Appendix \ref{app:proof_lemma2.1}.
%\end{proof}
\begin{proof}
Let $V:A\rightarrow B$ and $d=|A|$.
    \begin{align}
    \mathcal{D}(V) &= \int d\mu(|\psi\rangle)d\nu(|\phi\rangle) |\langle\psi|V^{\dagger}V|\phi\rangle -\langle\psi|\phi\rangle|^2\\
    &=\int d\mu(|\psi\rangle)d\nu(|\phi\rangle) \Tr[|\psi\rangle\langle\psi|(\Pi-I)|\phi\rangle\langle\phi|(\Pi-I)]\\
    &=\frac{1}{d^2}\Tr[(\Pi-I)^2],
\end{align}
where we have defined $V^{\dagger}V=\Pi$. If $\Tr[\Pi]=d$ where $d$ is the dimension of the ``logical Hilbert space'', then 
\begin{equation}
    \mathcal{D}(V)=\frac{1}{2d^3} \sum_{i,j}(\lambda_i-\lambda_j)^2.
\end{equation}
We arrive at this by observing that $\Tr[(\Pi\otimes I-I\otimes\Pi)^2]=2d\Tr[(\Pi-I)^2]$. The eigenvalues of $\Pi$ are $\{\lambda_i\}$ where we include the zero eigenvalues for vectors in the null space. 
\end{proof}

We can refine the non-isometry measure above by separating it into  parts that  capture the two different sources of non-isometricity: stretching of vectors and the presence of a kernel in the map. To this end, let $Z$ index basis states  in the kernel, and let $Z^c$ be its complement. Then
\begin{equation}
   \mathcal{D}(V)=\mathcal{D}_{W}(V)+\mathcal{D}_{S}(V)
\end{equation}
where 
\begin{align}
    \mathcal{D}_{W}(V)&=\frac{1}{2|A|^3}\sum_{i,j\in Z^c} (\lambda_i-\lambda_j)^2\\
   \mathcal{D}_{S}(V) &= \frac{1}{|A|^3}\sum_{i\in Z, j\in Z^c}(\lambda_i-\lambda_j)^2 = \frac{\dim(ker(V))}{|A|^3}\Tr[(V^{\dagger}V)^2].
   \label{eq:Ds}
\end{align}
Heuristically, $\mathcal{D}_{S}(V)$ captures \emph{strong non-isometricity} coming from  non-injectivity, as it vanishes when $V$ has a trivial kernel.  Likewise, $\mathcal{D}_{W}(V)$ captures  \emph{weak non-isometricity} which arises if the map stretches vectors without shrinking any to zero length. This is qualitatively similar to the definitions by \cite{Bueller:2024zvz}.
%\footnote{Such separation is of course artificial as one can think of mapping a vector in the kernel as shrinking its length to zero.}.
%\CC{add or leave out relative entropy measure?}

\subsection{Quantifying state dependence}\label{subsec:statedep}
There is more than one way to measure state dependence when one deviates from exact state independence. 
One possible measure for state dependence is to determine how badly exact state-independent reconstruction is broken for logical operators in $A$ for an ``encoding map'' $V$. %\CC{Consider changing notation, have been using different d's.}

%\footnote{For simplicity, here and in the following we refer to this measure as a ``measure of state-dependence''. However, a more precise characterization would be a ``measure of failure of state-independent reconstruction''. The reason is that this measure is non-zero not only for operators which can only be reconstructed state-dependently, but in general for all operators that cannot be reconstructed state-independently. This includes operators that do not admit a unitary reconstruction at all (not even a state-dependent one), such as those considered in the proof of Theorem \ref{thm:exactrecon}.} 
\begin{definition}
    The degree of state-dependence in the reconstruction of an operator $O$ is 
    %$\min_{\tilde{O}\in U(B)}D_V(\tilde{O},O)$ where 
    $$
    \min_{\tilde{O}\in U(B)}D_V(\tilde{O},O) ~~~{\rm where}~~~    
    D_V(\tilde{O},O)=\int d\mu(|\psi\rangle)||\tilde{O}V|\psi\rangle-VO|\psi\rangle||^2$$
    where $\mu(|\psi\rangle)$ is the Haar measure over $A$ and $U(B)$ is the set of unitary operators on $B$. $||\cdot||$ is the standard vector 2-norm.
\end{definition}
The Haar measure is defined below Definition~\ref{def:nonisometry_measure}. $D_V$ vanishes by definition if and only if $O$ admits an exact state-independent reconstruction.
Evaluating the integral, one obtains
    \begin{equation}
    D_V(\tilde{O},O) = \frac{2}{|A|}\Big(\Tr[V^{\dagger}V] - Re\left\{\Tr[V^{\dagger}\tilde{O}VO^{\dagger}]\right\}\Big)
\end{equation}
To quantify how far we are from reconstructing the entire logical algebra in a state-independent manner, we can integrate over all unitaries $O$ supported on $A$, as follows:
\begin{definition}
 We define the  average state-dependence as 
\begin{equation}\label{DefdV}
        d_{\nu}(V) = \frac{1}{vol(\nu)}\int d\nu({O}) \Big[ \min_{\tilde{O}\in U(B)} D_V(\tilde{O},O)\Big]
    \end{equation}
    where $\nu(O)$ is the Haar measure. 
\end{definition}
\noindent By construction, $d_{\nu}(V)$ vanishes if and only if the reconstruction for all $O$ is exactly state-independent. From Theorem~\ref{thm:exactrecon}, we know that when the map is isometric, there is always a state-independent reconstruction for any unitary operator, hence this measure always vanishes for isometries.

Recall that for a map $V:A\rightarrow B$ we can always redefine a surjective version $V': A\rightarrow B'=Im(V)\subset B$ where $V,V'$ are related by a unitary transformation $U_B$ on $B$ such that $V=U_B V'$. By Proposition~\ref{prop:cDv}, we must have $\mathcal{D}(V')=\mathcal{D}(V)$. Furthermore, because any unitary reconstruction $\tilde{O}'$ on $B'$ can be identified with $\tilde{O}=U_B \tilde{O}' U_{B}^{\dagger}$ on $B$, the degree of state-dependence is also equal for $V$ and $V'$. Therefore, without loss of generality, it suffices to focus on the scenarios where $V:A\rightarrow B$ is surjective\footnote{Indeed for any non-surjective $V$, like ones where $|A|<|B|$, we can perform the above transformation and study the non-isometricity/state-dependence measure of $V'$ which would return the same value. } with $|A|\geq |B|$. The definition  of $d_\nu$ can be generalized to only average over specific operators or states as needed.
%Since for any case where $V$ is not onto, we can always identify a subspace $B'$ such that $V:A\rightarrow B'$ is surjective, it suffices to consider  
%\CC{ can sum over different subsets instead of just haar}
%In the same way, we have chosen to sum over the entire unitary group and state over $A$. 
%This definition  of $d_v$ can be generalized to only average over specific operators or states as needed.
%by the specific circumstances such as ones that are more natural to quantum gravity. We do not that in this section because we work with $V$ in complete generality with no additional information to bias one set of state or operator over another.

The minimization of $D_V(\tilde{O},O)$ over reconstructions $\tilde{O}$ that is required to measure state-dependence is difficult to carry out in general.  However, we can find bounds by using the singular values of $V$, as follows.
%The degree of state-dependence depends on the singular values of $V$. Although the minimization of $D_V(\tilde{O},O)$ is hard to do in general, one can bound the state-dependence measure as follows.
\begin{theorem}\label{thm:dvbound}
Suppose $V:A\rightarrow B$ has nonzero singular values $\{\sigma_i\}$ and  define
\begin{align}
f_U(\{\sigma_k\},|A|,|B|)&\equiv 2|A|\sum_{k=1}^{|B|}\sigma_k^2-\frac{2}{|B|}(\sum_{k=1}^{|B|}\sigma_k)^2\\
f_L(\{\sigma_k\},|A|,|B|)&\equiv \max\{0,2\Big[|A|(\sum_{k=1}^{|B|}\sigma_k^2)-\sqrt{|A||B|}(\sum_{k=1}^{|B|} \sigma_k)^2\Big]\} \, .
\end{align} 
Then the average state-dependence satisfies $f_L\leq  d_{\nu}(V)\leq f_U$. 
%\CC{removed a factor of $|A|$ as the note missed it.}
\end{theorem}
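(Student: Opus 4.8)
The plan is to dispose of the inner minimization over $\tilde O$ in closed form, turning $d_\nu(V)$ into a Haar average of a matrix norm, and then to sandwich that norm between quantities that are polynomial in $O$ and hence directly integrable against the Haar measure. Starting from the formula $D_V(\tilde O,O)=\frac{2}{|A|}\big(\Tr[V^\dagger V]-\mathrm{Re}\,\Tr[V^\dagger\tilde O VO^\dagger]\big)$ established above, cyclicity gives $\mathrm{Re}\,\Tr[V^\dagger\tilde O VO^\dagger]=\mathrm{Re}\,\Tr[\tilde O\,(VO^\dagger V^\dagger)]$. Since for any operator $M$ on $B$ one has $\max_{\tilde O\in U(B)}\mathrm{Re}\,\Tr[\tilde O M]=\|M\|_1$ (the nuclear norm, attained at the inverse of the polar unitary of $M$), the minimization is solved exactly, $\min_{\tilde O}D_V(\tilde O,O)=\frac{2}{|A|}\big(\sum_k\sigma_k^2-\|VO^\dagger V^\dagger\|_1\big)$. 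Averaging over $O$ yields the central object
\[
d_\nu(V)=\frac{2}{|A|}\Big(\textstyle\sum_k\sigma_k^2-\mathbb{E}_O\,\|VO^\dagger V^\dagger\|_1\Big),
\]
where $\mathbb{E}_O$ denotes the Haar average. Thus the upper bound $f_U$ follows from a \emph{lower} bound on $\mathbb{E}_O\|VO^\dagger V^\dagger\|_1$, and the lower bound $f_L$ from an \emph{upper} bound.

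Next I would reduce the averaged nuclear norm to a computable second moment. Writing $M=VO^\dagger V^\dagger$, its rank is at most $|B|$, so the Schatten interpolation $\|M\|_2\le\|M\|_1\le\sqrt{|B|}\,\|M\|_2$ brackets the nuclear norm by the Frobenius norm, which is polynomial in $O$. The key computation is the Haar second moment: with $\Pi=V^\dagger V$ one has $\|M\|_2^2=\Tr[O\Pi O^\dagger\Pi]$, and the twirl identity $\mathbb{E}_O[O\Pi O^\dagger]=\frac{\Tr\Pi}{|A|}I$ gives $\mathbb{E}_O\|M\|_2^2=\frac{(\Tr\Pi)^2}{|A|}=\frac{1}{|A|}\big(\sum_k\sigma_k^2\big)^2$. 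For $f_L$ I would combine $\|M\|_1\le\sqrt{|B|}\,\|M\|_2$ with Jensen, $\mathbb{E}_O\|M\|_2\le\sqrt{\mathbb{E}_O\|M\|_2^2}$, to get $\mathbb{E}_O\|M\|_1\le\sqrt{|B|/|A|}\sum_k\sigma_k^2$. For $f_U$ I would use $\|M\|_1\ge\|M\|_2$ together with the deterministic bound $\|M\|_2\le\sigma_{\max}\|V^\dagger\|_2=\sigma_{\max}\sqrt{\sum_k\sigma_k^2}\le\sum_k\sigma_k^2$ and the elementary inequality $\mathbb{E}\sqrt X\ge \mathbb{E} X/\sqrt{\sup X}$ (since $\sqrt X\le\sqrt{\sup X}$), which produces $\mathbb{E}_O\|M\|_1\ge \mathbb{E}_O\|M\|_2^2/\sum_k\sigma_k^2$. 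Feeding these two estimates back into the displayed expression for $d_\nu(V)$, and invoking Cauchy--Schwarz to trade between $\sum_k\sigma_k$ and $\sum_k\sigma_k^2$ (e.g.\ $(\sum_k\sigma_k)^2\le|B|\sum_k\sigma_k^2$), recasts them in the stated forms $f_L,f_U$; the outer $\max\{0,\cdot\}$ in $f_L$ merely enforces the a priori positivity of $d_\nu$.

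The main obstacle is conceptual rather than computational: the optimal reconstruction $\tilde O$ depends on the random $O$, so $\mathbb{E}_O\|M\|_1$ is the Haar average of a non-smooth, non-polynomial function of $O$ and cannot be integrated directly via moment formulas. The entire strategy hinges on trading the intractable nuclear norm for the Frobenius norm through the rank-$|B|$ Schatten inequalities, after which only the tractable degree-two moment $\mathbb{E}_O\|M\|_2^2$ is needed; the unavoidable slack in this trade is precisely what separates $f_L$ from $f_U$. The subtle points to verify are that Jensen is applied in the correct direction for each bound, that the rank bound $\mathrm{rank}\,M\le|B|$ and the deterministic estimate $\|M\|_2\le\sum_k\sigma_k^2$ are used where valid, and that the reduction to surjective $V$ (so that exactly $|B|$ singular values are nonzero) is in force throughout.
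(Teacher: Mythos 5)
Your argument is correct and reaches the stated bounds by a genuinely different route from the paper. You dispose of the inner minimization exactly, $\min_{\tilde O}D_V(\tilde O,O)=\tfrac{2}{|A|}\big(\Tr[V^\dagger V]-\|VO^\dagger V^\dagger\|_*\big)$ (this is the paper's Lemma A.4 applied once and for all), and then reduce everything to the single Haar moment $\mathbb{E}_O\Tr[O\Pi O^\dagger\Pi]=(\Tr\Pi)^2/|A|$ via the twirl identity, sandwiching the nuclear norm with the rank-$|B|$ Schatten inequalities $\|M\|_2\le\|M\|_*\le\sqrt{|B|}\,\|M\|_2$ and Jensen in the appropriate directions. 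I checked the directions: the upper bound on $\mathbb{E}_O\|M\|_*$ gives $f_L$ and the lower bound gives $f_U$, and your intermediate estimates ($\sqrt{|B|/|A|}\sum_k\sigma_k^2$ and $\tfrac{1}{|A|}\sum_k\sigma_k^2$) are in fact slightly tighter than the stated forms, to which they relax by Cauchy--Schwarz exactly as you say. The paper instead fixes the 1-design $\chi_0$ adapted to the singular basis, reduces the trace to nuclear norms of square submatrices $B(k)$ of a unitary, bounds those by entrywise-one and Frobenius norms (Lemmas A.5--A.9), and only at the end Fubini-averages over left multiplications $U\chi_0$ to reach the Haar statement. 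What the paper's route buys is a bound valid for \emph{every} design $\chi(U)$ individually, which is used later (e.g.\ to show $d_1(V,\chi_0)$ saturates the upper bound for co-isometries and dominates $d_\nu$); your route buys a shorter, design-free proof with an exact closed form $d_\nu\propto\Tr\Pi-\mathbb{E}_O\|VO^\dagger V^\dagger\|_*$ and no combinatorics of submatrices. One bookkeeping caveat you inherit from the paper rather than create: with the literal definitions ($\tfrac{2}{|A|}$ in $D_V$ and the normalized Haar integral), your central identity bounds a quantity that is $|\chi_0|=|A|^2$ times smaller than what $f_L,f_U$ are written for; the paper's own appendix silently drops these normalizations, and multiplying your bounds through by $|A|^2$ reproduces $f_L$ and $f_U$ verbatim (including the $\sqrt{|A||B|}$ coefficient of the main-text $f_L$, which the appendix writes as $\sqrt{|A|}$). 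You should state explicitly which normalization of $d_\nu$ you are bounding, since the lower bound is not normalization-covariant.
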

\begin{proof}
    See Appendix~\ref{app:proof_thm2.2}.
\end{proof}

\begin{definition}
A linear map $V$ is a co-isometry if $V^{\dagger}$ is an isometry.
\end{definition}
\noindent To preserve the inner product, all singular values of an isometry are equal to one. Therefore, from the singular value decomposition, the \emph{nonzero singular values} of its conjugate transpose, i.e., a co-isometry, are also equal to one. 
Note that sometimes we are interested not only in the co-isometries themselves, e.g., \cite{cao2023overlapping}, but also maps that are co-isometries rescaled by a constant, e.g., \cite{akers2022black}. A tighter bound on the average state-dependence can be obtained for such maps whose non-zero singular values are all equal to each other. 

\begin{theorem}\label{thm:coiso_bound}
    For $V$ proportional to a co-isometry such that the nonzero singular values are  $\sigma_k=\eta$ for all $k=1,\dots, |B|$, then $$2\eta^2(|A||B|-|B|^2)\geq d_{\nu}(V)\geq \max\{0,2\eta^2(|A||B|-|A|^{1/2}|B|^{2})\}.$$
%    where the upper bound is tight.
\end{theorem}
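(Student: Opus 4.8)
The plan is to reduce the average state-dependence $d_\nu(V)$ to the expected nuclear norm of a truncated Haar-random unitary, and then sandwich that expectation between two Frobenius-norm estimates.

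First I would diagonalize $V$. Since $V$ is proportional to a co-isometry with all nonzero singular values equal to $\eta$, its singular value decomposition reads $V=\eta\sum_{k=1}^{|B|}|b_k\rangle\langle a_k|$, with $\{|b_k\rangle\}$ an orthonormal basis of $B$ and $\{|a_k\rangle\}$ orthonormal in $A$. Substituting this into the explicit expression for $D_V(\tilde O,O)$ quoted above, and using $\Tr[V^\dagger V]=\eta^2|B|$, the reconstruction-dependent term collapses to $\mathrm{Re}\,\Tr[\tilde o\,\eta^2 o^\dagger]$, where $\tilde o\in U(|B|)$ is the matrix of $\tilde O$ in the basis $\{|b_k\rangle\}$ and $o$ is the compression of $O$ onto $\mathrm{span}\{|a_k\rangle\}$. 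The minimization over all unitary reconstructions then becomes a von Neumann trace-norm problem: $\max_{\tilde o\in U(|B|)}\mathrm{Re}\,\Tr[\tilde o\,\eta^2 o^\dagger]=\eta^2\|o\|_*$. Averaging over $O$ with the normalization fixed by the definitions of $D_V$ and $d_\nu$ (the same normalization that produces the $|A|$-dependent prefactors in Theorem~\ref{thm:dvbound}) gives $d_\nu(V)\propto \eta^2\big(|B|-\mathbb{E}_O[\|o\|_*]\big)$.

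Next I would treat $o$ as the top-left $|B|\times|B|$ block of a Haar-random $O\in U(|A|)$, which is legitimate by unitary invariance of the Haar measure, and record its second moment $\mathbb{E}[\|o\|_F^2]=\sum_{l,k}\mathbb{E}[|O_{lk}|^2]=|B|^2/|A|$, using $\mathbb{E}[|O_{lk}|^2]=1/|A|$. The two bounds follow from sandwiching the nuclear norm. For the upper bound on $d_\nu$ I would use that every singular value of a sub-block of a unitary obeys $s_i\le 1$, so $\|o\|_*=\sum_i s_i\ge\sum_i s_i^2=\|o\|_F^2$, whence $\mathbb{E}[\|o\|_*]\ge |B|^2/|A|$. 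For the lower bound I would use the rank estimate $\|o\|_*\le |B|\,\|o\|_F$ followed by Jensen's inequality $\mathbb{E}[\|o\|_F]\le\sqrt{\mathbb{E}[\|o\|_F^2]}=|B|/\sqrt{|A|}$, giving $\mathbb{E}[\|o\|_*]\le |B|^2/\sqrt{|A|}$. Feeding these two estimates back into the formula for $d_\nu$ produces $2\eta^2(|A||B|-|B|^2)$ and $2\eta^2(|A||B|-|A|^{1/2}|B|^2)$ respectively; the latter is then combined with the trivial $d_\nu\ge 0$ to give the stated maximum with zero.

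The main obstacle is the first step: verifying cleanly that minimizing over all unitary reconstructions $\tilde O$ on $B$ is exactly the trace-norm optimization, and tracking the dimensional prefactor through the Haar averages so that it matches the normalization of Theorem~\ref{thm:dvbound}. The reason the equal-singular-value hypothesis yields bounds strictly tighter than those obtained by substituting $\sigma_k=\eta$ into Theorem~\ref{thm:dvbound} is exactly that the scalar $\eta^2$ factors out of the nuclear norm: one never has to control a singular-value-weighted matrix $D o^\dagger D$ with unequal weights, and all estimates reduce to the single random matrix $o$ and its Frobenius moment $|B|^2/|A|$.
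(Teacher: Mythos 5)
Your argument is correct in its essential content, but it takes a genuinely more direct route than the paper. The paper proves Theorem~\ref{thm:coiso_bound} by first bounding $d_1(V,\chi(U))$ for every $1$-design of the form $U\chi_0$ built from clock-and-shift operators (Theorem~\ref{thm:dvbound}, via the cyclic submatrices $B(k)$, Lemma~\ref{lemma:schatten1} and Lemma~\ref{lemma:upperbound}), specializing to $\sigma_k=\eta$ (Corollary~\ref{coro:coiso_bound}), tightening the upper bound with the observation that submatrices of unitaries have singular values $\leq 1$ (Proposition~\ref{prop:coiso_upperbound}), and only then integrating over $U$. You instead exploit that $d_\nu$ is already a Haar average and reduce everything to $\mathbb{E}_O\left[\lVert o\rVert_*\right]$ for the single $|B|\times|B|$ compression $o$ of a Haar-random $O$, using the same polar-decomposition fact as Lemma~\ref{lemma:spectralNorm} to turn the minimization over $\tilde O$ into a nuclear norm. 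Your two estimates are then exact analogues of the paper's: $\lVert o\rVert_*\geq\lVert o\rVert_F^2$ from $s_i\leq 1$ reproduces the mechanism of Proposition~\ref{prop:coiso_upperbound}, while $\lVert o\rVert_*\leq |B|\,\lVert o\rVert_F$ plus Jensen replaces the entrywise-$1$-norm chain of Lemma~\ref{lemma:upperbound}; both land on the stated functional forms $|A||B|-|B|^2$ and $|A||B|-|A|^{1/2}|B|^2$. What your route loses is the design-wise statement $d_\nu(V)\leq d_1(V,\chi(U))\leq d_1(V,\chi_0)$ that the paper also wants for its main-text discussion; what it gains is economy, and one free improvement you did not claim: replacing your loose rank bound by Cauchy--Schwarz, $\lVert o\rVert_*\leq\sqrt{|B|}\,\lVert o\rVert_F$, immediately sharpens the lower bound to $2\eta^2(|A||B|-|A|^{1/2}|B|^{3/2})$, which is exactly the $|B|^{3/2}$ scaling the paper conjectures after Corollary~\ref{coro:coiso_bound} from numerics. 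The one genuine loose end is the overall prefactor: with $D_V$ normalized as in the text your identity reads $d_\nu(V)=\tfrac{2\eta^2}{|A|}\bigl(|B|-\mathbb{E}_O[\lVert o\rVert_*]\bigr)$, which yields the theorem's bounds only up to a factor of $|A|^{-2}$; you flag this yourself, and the paper's own appendix is not internally consistent on this normalization either, so you should pin it down explicitly rather than hide it in a proportionality sign.
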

\begin{proof}
   See Appendix~\ref{app:proof_thm2.3}. 
\end{proof}
\noindent Here $\eta$ is the proportionality constant relating $V$ to a co-isometry.

The above measure quantifies the total state-dependence of all logical operators. Sometimes it may be sufficient or more practical to check whether a subset of operators can be reconstructed state-independently. This could for example be the set of simple operators (i.e., operators of polynomial complexity) or localized operators depending on the context. Here let us examine state-dependence by looking at a set of unitary operators that form an irreducible representation of the generalized Pauli group which forms a complete operator basis over $A$. This will allow us to explicitly compute the state-dependence measure for this set of operators, and make contact between our measures of non-isometry and state-dependence.
%{\color{red} What does it mean for an operator basis to be ``nice''?}
%\CC{Niceness is a detail we don't care about here, so I have removed the ref. it's more thoroughly explained in the original ref \cite{nicebasis}, but basically the basis is a irreducible projective unitary representation of a group.}   
%A unitary 1-design is an example of a nice basis, and allows us to sum over a discrete set of points instead of integrating over the entire unitary group manifold using the Haar measure.  

%\CC{My apologies. Despite their popularity in QI, I remembered that these matrices are not necessarily well-known in hepth where we usually want to think about generalized Gell-Mann matrices. Modified the text here to explain them better, and removed any specialized QI language.}
The generalized Pauli group is a group of matrices that are simply higher dimensional versions of Pauli matrices with the exception that they are not Hermitian in general\footnote{This is not to be confused with generalized Gell-Mann matrices which are Hermitian.}.  Consider $q\times q$ unitary matrices $X,Z$ (also known as shift and clock matrices), where the eigenstates of $Z$ are the computational basis states $\{|j\rangle\}$ such that $$Z|j\rangle = \omega^j|j\rangle$$ with $j\in \mathbb{Z}_q$ and $$\quad X|j\rangle = |j+1\rangle$$ where $+$ is modular addition.

\begin{definition}
    The generalized Pauli group (or Pauli group for short) over any Hilbert space of dimension $q<\infty$ is generated by the identity $I$ with global phase, and the $X,Z$ operators: $$\mathcal{P}=\langle \omega I, X, Z\rangle$$ where $\omega=\exp(2\pi i/q)$ for $q$ odd and $\exp(\pi i/q)$ for $q$ even. 
\end{definition}
\noindent Here $X$ and $Z$ are non-Hermitian unless $q=2$.  Note that the generalized Pauli group and unitary conjugations form a 1-design and so does any orthogonal unitary basis under the trace inner product \cite{Adam2013thesis,kaznatcheev2010structure}. %The Pauli group is generated by the identity and two unitary operators $X,Z$.  The eigenstates of $Z$ are the computational basis states $\{|j\rangle\}$ such that $$Z|j\rangle = \omega^j|j\rangle,\quad X|j\rangle = |j+1\rangle$$ where $+$ is modular addition. They are also known as the clock and shift matrices in some settings.    
%{\color{red} we are talking about the action of X,Z on computational basis states.  Does this mean that we are thinking about qubit systems only here?  If we want to think about more general systems, how are we supposed to use this qubit language?  Can we clarify?} \CC{no we are picking a basis over a hilbert space of any finite dimension $q$. Computational basis in the context of general dimensions does not exclusively refer to the binary qubit basis, but rather the basis $|j\rangle$ defined above. Clarified the index range of $j$ above.}

\begin{theorem}\label{thm:d1X0}
Let $V:A\rightarrow B$ be a linear map  and let $\chi_0$ be a unitary 1-design over $A$  such that $\chi_0=\{QPQ^{\dagger}: P\in \mathcal{P}, \, Q^{\dagger}|\psi_i\rangle=|i\rangle\}$ where $\mathcal{P}$ is the generalized Pauli group over $A$, $\{|\psi_i\rangle\}$ is the eigenbasis of $V^{\dagger}V$, and $\{|i\rangle\}$ is the computational basis spanning $A$. Then
 \begin{equation}\label{Defd1}
        d_1(V,\chi_0) \equiv \frac{1}{|\chi_0|}\sum_{O\in \chi_0} \Big[ \min_{\tilde{O}\in U(B)} D_V(\tilde{O},O)\Big] = \frac{1}{|A|}\sum_{i,j=1}^{|A|}(\sigma_i-\sigma_j)^2
    \end{equation}
    where $\{\sigma_i\}$ are the singular values of $V$ and $|\chi_0|$ is the number of elements in the set $\chi_0$.  For the basis vectors that span the null space we take the singular value to be $0$.
\end{theorem}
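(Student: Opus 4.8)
The plan is to reduce the inner minimization over reconstructions $\tilde O$ to a trace-norm evaluation, and then exploit the fact that conjugation by $Q$ rotates the Pauli design into the singular-value basis of $V$, so that every term in the sum becomes trivially computable. First I would use the evaluated form of $D_V$ quoted above, $D_V(\tilde O,O)=\frac{2}{|A|}\big(\Tr[V^\dagger V]-\mathrm{Re}\,\Tr[V^\dagger\tilde O V O^\dagger]\big)$, and cycle the trace so that
\[
\min_{\tilde O\in U(B)}D_V(\tilde O,O)=\frac{2}{|A|}\Big(\Tr[V^\dagger V]-\max_{\tilde O\in U(B)}\mathrm{Re}\,\Tr[\tilde O\,VO^\dagger V^\dagger]\Big).
\]
The key analytic input is that for any operator $N$ on $B$ one has $\max_{\tilde O\in U(B)}\mathrm{Re}\,\Tr[\tilde O N]=\|N\|_1$, the sum of the singular values of $N$, attained by choosing $\tilde O$ to be the adjoint of the unitary factor in the polar decomposition of $N$. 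Thus the minimization is solved in closed form, and I only need to compute $\|VO^\dagger V^\dagger\|_1$ and average it over $\chi_0$.

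Next I would insert the singular value decomposition $V=\sum_k\sigma_k|b_k\rangle\langle\psi_k|$, where $\{|\psi_k\rangle\}$ is the eigenbasis of $V^\dagger V$. Because $Q^\dagger|\psi_i\rangle=|i\rangle$, writing $O=QPQ^\dagger$ collapses the $Q$'s and gives $VO^\dagger V^\dagger=\tilde V P^\dagger\tilde V^\dagger$ with $\tilde V=\sum_k\sigma_k|b_k\rangle\langle k|$. For a generalized Pauli $P=X^aZ^b$ the matrix elements $\langle k|P^\dagger|l\rangle=\omega^{-bk}\delta_{l,k+a}$ carry a single nonzero entry per row, so that
\[
VO^\dagger V^\dagger=\sum_k\sigma_k\sigma_{k+a}\,\omega^{-bk}\,|b_k\rangle\langle b_{k+a}|
\]
(indices mod $|A|$) is a \emph{generalized permutation matrix} in the orthonormal family $\{|b_k\rangle\}$: the map $k\mapsto k+a$ is a bijection of $\mathbb{Z}_{|A|}$, so each row and column has exactly one nonzero entry. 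Its singular values are therefore just the moduli of those entries, $\sigma_k\sigma_{k+a}$, and $\|VO^\dagger V^\dagger\|_1=\sum_k\sigma_k\sigma_{k+a}$, independent of $b$ and of the overall phase of $P$.

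Finally I would perform the average over $\chi_0$. Since global phases and the $b$-dependence drop out of the trace norm, it suffices to average over the $|A|^2$ representatives $\{X^aZ^b\}$, which form a complete operator basis. Summing the $b$-independent result over $b$ yields a factor $|A|$, and the remaining double sum telescopes,
\[
\sum_{a}\sum_k\sigma_k\sigma_{k+a}=\sum_k\sigma_k\sum_m\sigma_m=\Big(\sum_k\sigma_k\Big)^2,
\]
again because $a\mapsto k+a$ is a bijection. Hence $\frac{1}{|\chi_0|}\sum_{O}\|VO^\dagger V^\dagger\|_1=\frac{1}{|A|}(\sum_k\sigma_k)^2$, and substituting back gives $d_1(V,\chi_0)=\frac{2}{|A|}\big(\sum_k\sigma_k^2-\frac{1}{|A|}(\sum_k\sigma_k)^2\big)$. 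This combines, through the identity $\sum_{i,j}(\sigma_i-\sigma_j)^2=2|A|\sum_i\sigma_i^2-2(\sum_i\sigma_i)^2$, into the advertised quadratic form in the singular values; the precise prefactor is fixed by collecting the powers of $|A|$ coming from the $1/|A|$ in $D_V$, the $1/|\chi_0|$ normalization, and the $b$-summation, which I would track carefully at this last step.

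The main obstacle is the middle step: one must recognize that the specific conjugation by $Q$---chosen precisely so that the Pauli design is rotated into the eigenbasis of $V^\dagger V$---turns each $VO^\dagger V^\dagger$ into a generalized permutation matrix, whose trace norm is then immediate. Without this alignment the singular values of $VO^\dagger V^\dagger$ would not be expressible in terms of the $\sigma_k$ alone, and the average would not close into a function of the singular values. By contrast, the trace-norm maximization in the first step is a standard fact, and the concluding average is purely combinatorial once the permutation structure is in hand.
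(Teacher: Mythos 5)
Your argument is correct in substance but follows a genuinely different route from the paper's. You solve the inner minimization in closed form via $\max_{\tilde O\in U(B)}\mathrm{Re}\,\Tr[\tilde O N]=\|N\|_1$ and then observe that, precisely because $\chi_0$ is rotated into the eigenbasis of $V^\dagger V$, the operator $VO^\dagger V^\dagger$ is a generalized permutation matrix with singular values $\sigma_k\sigma_{k\oplus a}$, so each term equals $\tfrac{2}{|A|}\bigl(\sum_k\sigma_k^2-\sum_k\sigma_k\sigma_{k\oplus a}\bigr)$. The paper instead works at the level of basis states: it writes $D_V(\tilde O,O)=\tfrac{1}{|A|}\sum_i\|\sigma_i\tilde O|\tilde\psi_i\rangle-\sigma_{i\oplus a}|\tilde\psi_{i\oplus a}\rangle\|^2$ and identifies the optimal $\tilde O$ as the permutation of the image Schmidt vectors, with a three-way case analysis according to whether $|\psi_i\rangle$ and $|\psi_{p(i)}\rangle$ lie in $\ker(V)$ or its complement. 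Both yield the same per-operator minimum (note $\sum_i(\sigma_i-\sigma_{i\oplus a})^2=2\sum_i\sigma_i^2-2\sum_i\sigma_i\sigma_{i\oplus a}$), but your trace-norm route dispenses with the kernel case analysis automatically, since vanishing singular values simply drop out of $\|VO^\dagger V^\dagger\|_1$; interestingly, the identity you invoke is exactly the paper's own Lemma on $\max_{\tilde O}\mathrm{Re}\,\Tr[\tilde O B]=\|B\|_*$, which the authors prove in the appendix but deploy only for the bounds in their Theorems 2.2 and 2.3, not here. What the paper's longer argument buys is an explicit identification of the optimal reconstruction $\tilde O$ and the remark about which Pauli elements fail to be even state-specifically reconstructable.

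On the prefactor you deliberately deferred: tracked with the stated normalizations ($D_V$ carrying the Haar factor $1/|A|$ and $|\chi_0|=|A|^2$), your computation gives $d_1=\tfrac{1}{|A|^2}\sum_{i,j}(\sigma_i-\sigma_j)^2$, which is a factor of $|A|$ smaller than the theorem's claim. This is not a defect of your method: the paper's own proof silently drops the $1/|A|$ from $D_V$ when defining $\Sigma_X$, and the normalization of $d_1$ is not fully consistent across the paper (e.g.\ the claimed value $2\eta^2(|A||B|-|B|^2)$ for co-isometries versus the theorem's formula). You should simply be aware that the "careful tracking" you promise will expose this internal inconsistency rather than land cleanly on the advertised $1/|A|$.
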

\begin{proof}
    See Appendix \ref{sec:thmd1X0}.
\end{proof}

Note that not all operators are reconstructable even in a state-specific way. Non-reconstructable operators include, e.g., operators that rotates a state in the kernel to a state outside the kernel. We expand on this point after proving the above theorem in Appendix~\ref{sec:thmd1X0}. While we deal with global non-isometries and related state-dependent reconstructions~\cite{Akers:2020pmf,Akers:2021fut}, similar observations are made for subregion reconstructions. In particular, it is noted that not all operators are reconstructable even state-specifically \cite{Akers:2023fqr}. 

When $V$ is proportional to a co-isometry with non-zero singular values equal to $\eta$, we see that $d_1(V,\chi_0)=2\eta^2(|A||B|-|B|^2)$ which saturates the upper bound of $d_{\rm \nu}(V)$ in Theorem~\ref{thm:dvbound}.  For any 1-design generated by left unitary multiplication, $\chi(U)=\{US, S\in \chi_0\}$,  $d_1(V,\chi_0)\geq d_1(V,\chi(U))$.
We prove this in Appendix~\ref{app:proof_thm2.2} and Appendix~\ref{app:proof_thm2.3}. In other words, $d_1(V,\chi_0)\geq d_{\nu}(V)$ constitutes an upper bound for the Haar average state-dependence if $V$ is proportional to a co-isometry.

Note that since we are not averaging over the entire unitary group when we sum over operators $O$ in the measure $d_1$, there are many choices of 1-designs one can consider. The above shows that other than the technical convenience it provided for the proof, the particular 1-design $\chi_0$ we chose maximizes the total state-dependence. Intuitively, it picks out a set of elements that are most sensitive to the state-dependent nature of $V$ for operator reconstruction. 
%{\color{red} But the one-design is only an approximation of the state-dependence. Is this saying that it maximizes the state-dependence among one designs? Why is that interesting?  Wouldn't I want to minimize the state-dependence to get a better upper bound?}\CC{added explanation. I would think the opposite is true--- we would like to pick a set of operators that are most sensitive to the state-dependence so we don't get fooled into thinking that a state-dependent $V$ has a small or zero measure $d_1$. In any case, that's such a minor point in our narrative, it's just a math thing for me to characterize some properties of $\chi_0$. If it's distracting I can move it to the appendix.}
%\CC{there are some typos and constants that might need to be fixed. Go thru afterwards.}

%Just like the non-isometry measure, this state-dependence measure can be similarly refined by writing $d_1(V,\chi_0)=d_1^{(W)}(V,\chi_0)+d_1^{(S)}(V,\chi_0)$ such that
Our state-dependence measure can be  refined to separate pieces that reflect a strong and weak state-dependence, as follows. 
\begin{definition}[Strong and weak state-dependence]
Let $Z^c$ be a set that indexes the non-zero singular values of $V$ and $Z$ a set that indexes the set of zero singular values. (States in the kernel of $V$ are assigned a zero singular value.)
Define
\begin{align}
    d_1(V,\chi_0)&=d_1^{(W)}(V,\chi_0)+d_1^{(S)}(V,\chi_0) \\
    d_1^{(W)}(V,\chi_0)&:=\frac{1}{|A|}\sum_{i,j\in Z^c}(\sigma_i-\sigma_j)^2\\
    d_1^{(S)}(V,\chi_0)&:=\frac{2}{|A|}\sum_{i\in Z,j\in Z^c}(\sigma_i-\sigma_j)^2
    %:=\frac{2}{|A|}\sum_{i\in Z, j\in Z^c} (\sigma_i-\sigma_j)^2.
    \label{eqn:swdependence}
\end{align}
We say  $d_1^{(W)}$ quantifies weak state-dependence while $d_1^{(S)}$ quantifies strong state-dependence.
\end{definition}
\noindent We have separated the two contributions to the state-dependence measure  in anticipation of results in the next section. Qualitatively, it is clear that $d^{(S)}_1$ receives contributions only if the map $V$ has a non-trivial kernel, because it depends on having vanishing singular values.  This in turn implies strong non-isometry. Meanwhile,  $d_1^{(W)}$ is non-vanishing if the nonzero part of the singular value spectrum is non-flat.  This in turn says that some basis elements are stretched relative to others by $V$, implying weak non-isometry.
% We will refer to thse contributions as coming from strong and weak state-dependence respectively.

Let's simplify the expression in (\ref{eqn:swdependence}) by summing over $i\in Z$. If we further impose the normalization condition $\Tr[V^{\dagger}V]=|A|$ similar to \cite{akers2022black}, then we see that
\begin{equation}
        d_1^{(S)}(V,\chi_0) = \frac{2|Z|}{|A|} \Tr[V^{\dagger}V]=2\dim(ker(V)).
\end{equation}
where strong state-dependence only depends on the size of the kernel.

\subsection{Connecting state-dependence with non-isometry}
%Since we have established $d_1(V,\chi_0)$ as a reasonable and computable estimate of $d_{\nu}(V)$. let us compare it with the non-isometry measure $\mathcal{D}(V)$ (Def.~\ref{def:nonisometry_measure}). First we notice that both are quantities that are measuring some form of non-flatness of the singular value spectrum of $V$. In this physical sense, it is no surprise that they are probing more or less the same thing. 

We can now compare  $d_1(V,\chi_0)$, which is a computable estimate of $d_{\nu}(V)$, with the non-isometry measure $\mathcal{D}(V)$ (Definition~\ref{def:nonisometry_measure}).   Qualitatively,  both  quantities  are measuring some form of non-flatness of the singular value spectrum of $V$, and so they are clearly related. Below we will make this connection more precise, first in a simple case, and then generically.

\begin{proposition}
\label{prop:noniso=statedep}
Let $V:A\rightarrow B$ be proportional to a co-isometry while satisfying the normalization constraint $\Tr[V^{\dagger}V]=|A|$,
then 
\begin{equation}\frac{\dim(ker(V))}{|A||B|}=\frac{|A|-|B|}{|A||B|}=\frac{1}{2|A||B|}d_1(V,\chi_0)=\mathcal{D}(V).\end{equation}  
\end{proposition}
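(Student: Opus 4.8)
The plan is to reduce everything to the spectrum of $V^{\dagger}V$ and then substitute into the closed-form expressions already derived in Lemma~\ref{lemma:non_isometry} and Theorem~\ref{thm:d1X0}. First I would invoke the reduction established earlier in this section: since both $\mathcal{D}$ and $d_1$ are unchanged under passing to the surjective restriction $V':A\to \mathrm{Im}(V)$, I may assume $V$ is surjective with $|A|\geq |B|$. Because $V$ is proportional to a co-isometry, its nonzero singular values are all equal, say $\sigma_k=\eta$ for $k=1,\dots,|B|$, while the remaining $|A|-|B|$ singular values vanish. Rank-nullity then gives $\dim(ker(V))=|A|-|B|$, which establishes the first equality in the chain. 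The normalization $\Tr[V^{\dagger}V]=\sum_i \sigma_i^2 = |B|\eta^2 = |A|$ fixes $\eta^2 = |A|/|B|$; this is the only input that ties the proportionality constant to the dimensions.

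Next I would evaluate the two middle expressions by the same cross-term bookkeeping. For $d_1(V,\chi_0)=\frac{1}{|A|}\sum_{i,j}(\sigma_i-\sigma_j)^2$, the only nonvanishing contributions pair one of the $|B|$ values $\eta$ with one of the $|A|-|B|$ zeros; there are $2|B|(|A|-|B|)$ such ordered pairs, each contributing $\eta^2$, so the sum equals $2|B|(|A|-|B|)\eta^2$. Substituting $\eta^2=|A|/|B|$ collapses this to $d_1(V,\chi_0)=2(|A|-|B|)$, whence $\frac{1}{2|A||B|}d_1 = \frac{|A|-|B|}{|A||B|}$. For $\mathcal{D}(V)=\frac{1}{2|A|^3}\sum_{i,j}(\lambda_i-\lambda_j)^2$ with $\lambda_i=\sigma_i^2$, the identical cross-term structure yields $2|B|(|A|-|B|)\eta^4$; using $\eta^4 = |A|^2/|B|^2$ and simplifying gives exactly $\frac{|A|-|B|}{|A||B|}$ again, closing the chain.

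The computation is essentially bookkeeping, so the only obstacle is minor: one must keep track of which quantity is built from the singular values $\sigma_i$ (for $d_1$) versus the eigenvalues $\lambda_i=\sigma_i^2$ (for $\mathcal{D}$), and be careful to count the vanishing singular values in the kernel as $|A|-|B|$ genuine entries rather than dropping them. As a consistency check, one can read these values off the refined decompositions directly: since $V$ is proportional to a co-isometry the nonzero spectrum is flat, so the weak pieces $\mathcal{D}_{W}$ and $d_1^{(W)}$ vanish, leaving only the strong pieces $\mathcal{D}_{S}$ and $d_1^{(S)}$; the equalities are therefore driven entirely by the kernel, consistent with the stated dependence on $\dim(ker(V))$.
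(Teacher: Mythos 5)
Your proposal is correct and follows essentially the same route as the paper, which simply notes that the result follows from Lemma~\ref{lemma:non_isometry} and Theorem~\ref{thm:d1X0} once one recalls that all nonzero singular values of a (rescaled) co-isometry coincide; you have merely filled in the cross-term bookkeeping and the normalization $\eta^2=|A|/|B|$ explicitly. Your arithmetic checks out, and the observation that only the strong (kernel) pieces survive is consistent with the paper's decomposition.
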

This follows from Lemma~\ref{lemma:non_isometry} and Theorem~\ref{thm:d1X0} by remembering that all non-zero singular values of a co-isometry are equal to each other. In other words, for co-isometric maps,  non-isometricity is quantitatively equivalent to state-dependence. 
%{\color{red} The sentence here said  ''We see that they are actually equivalent for a class of maps $V$ when $|A||B|$ is a constant.''  I edited because I was not sure what the qualification  ``when $|A||B|$ is a constant'' meant.  Of course, it's constant, since it's just a number.}\CC{yeah im fine getting rid of that. INdeed it's always a number, but it's a different number for different classes of maps. Any two numbers are proportional to each other by constant multiplication, so just saying that they are prop to each other with some number is not good enough. This is saying though that for a class of maps, these measures are proportional to each other by the same const multiple. In any case, let's get rid of it since it has confused more than one person so far.}

%This is basically the story we were trying to tell --- non-isometricity gives rise to state-dependence and vice versa. We see that they are actually equivalent for a class of maps $V$ when $|A||B|$ is a constant. 

%A similar albeit more general relationship holds. For example, by slightly rewriting $\mathcal{D}(V)$ in (\ref{eqn:non-iso-square}), it is easy to read off that

In fact, we can state a more general relationship by  rewriting $\mathcal{D}(V)$ in (\ref{eqn:non-iso-square}) as
\begin{equation}\label{eqn:dDbound}\frac{\lambda_{\rm min}}{2|A|^2} d_1(V,\chi_0)\leq \mathcal{D}(V)\leq  \frac{4\lambda_{\rm max}}{2|A|^2} d_1(V,\chi_0)\end{equation}
where $\lambda_{\rm min},\lambda_{\rm max}$ are respectively the minimum and maximum nonzero eigenvalues of $V^{\dagger}V$.

One can arrive at this bound by recognizing that $\sigma_i^2=\lambda_i$ and expanding the expression in Lemma \ref{lemma:non_isometry} using a difference of squares: $(\sigma_i^2-\sigma_j^2)^2 = (\sigma_i+\sigma_j)^2(\sigma_i-\sigma_j)^2$. Let $\sigma_{\rm max}, \sigma_{\rm min}$ be the maximum and minimum nonzero singular values of $V$. % after normalization, i.e. $\sum_{i}\lambda_i^2=|A|$. Then it is clear that $\lambda_{\rm min}>0$ by definition and $\lambda_{\rm max}^2<|A|$. %Note that this normalization is what Akers and Pennington used for the random projection map, which tries to rescale the magnitude of vectors under the map such that the overall inner product is preserved on average. 
Because for any  $\sigma_i,\sigma_j$ such that at least one of them is not zero, $\sigma_i+\sigma_j\leq 2\sigma_{\rm max}$ and $\sigma_i+\sigma_j\geq \sigma_{\rm min}$, we must have

\begin{equation}
    \sigma_{\rm min}^2\sum_{i,j}(\sigma_i-\sigma_j)^2 \leq \sum_{i,j}(\sigma_i^2-\sigma_j^2)^2\leq 4\sigma_{\rm max}^2\sum_{i,j}(\sigma_i-\sigma_j)^2.
\end{equation}
Note that for the terms where both eigenvalues are 0, $\sigma_{\rm min}(\sigma_i+\sigma_j) =0$ anyway and they don't contribute to the sum. Restoring the coefficients from Lemma~\ref{lemma:non_isometry} and Theorem~\ref{thm:d1X0}, we obtain the inequality above.

%\noindent{\color{red} Can we add a sentence explaining how the rewriting of $ \mathcal{D}(V)$ allowed us to show this? }\CC{added a proof}
In other words, \emph{any amount of non-isometry will  lead to some state-dependence and vice versa}.  We can derive an even tighter bound by splitting the strongly and weakly non-isometric/state-dependent components. It is easy to see that by grouping the sums in $\mathcal{D}(V), d_1(V,\chi_0)$ over $i,j$ into sums where $i,j\in Z$ and sums where $i\in Z, j\in Z^c$ or $i\in Z^c, j\in Z$, one obtains 
\begin{equation}
\frac{2\lambda_{\rm min}}{|A|^2}d_1^{(W)}(V,\chi_0)\leq\mathcal{D}_{W}(V)\leq \frac{2\lambda_{\rm max}}{|A|^2}d_1^{(W)}(V,\chi_0)\end{equation}
and 
\begin{equation}\frac{\lambda_{\rm min}}{2|A|^2}d_1^{(S)}(V,\chi_0)\leq \mathcal{D}_{S}(V)\leq \frac{\lambda_{\rm max}}{2|A|^2} d_1^{(S)}(V,\chi_0).\end{equation}

%Heuristically, we can look at the above as two types of contributions to non-isometry: the unequal singular values which contribute to the stretching and shrinking of vectors which can happen without a kernel (in weak non-isometries) and the size of the kernel
%(in strong non-isometries).
%\CC{copied from note. smooth later}
 \noindent\emph{Therefore, strong (weak) non-isometry implies strong (weak) state-dependence and vice versa.} 

 Recombining these equations we find that the total non-isometry is lower and upper bounded by state-dependence measures:
\begin{theorem}
Let $V:A\rightarrow B$ be a linear map, then
\begin{equation}
   \frac{2\lambda_{\rm min}}{|A|^2}(d_1^{(W)}(V,\chi_0)+\frac 1 4 d_1^{(S)}(V,\chi_0))\leq \mathcal{D}(V)\leq \frac{2\lambda_{\rm max}}{|A|^2}(d_1^{(W)}(V,\chi_0)+\frac 1 4 d_1^{(S)}(V,\chi_0))
\end{equation}
\end{theorem}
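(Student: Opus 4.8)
The plan is to obtain the claimed two-sided bound by simply adding the two sector-wise inequalities already established just above and then factoring out a common prefactor. Concretely, I would start from the additive decomposition $\mathcal{D}(V)=\mathcal{D}_{W}(V)+\mathcal{D}_{S}(V)$ together with $d_1(V,\chi_0)=d_1^{(W)}(V,\chi_0)+d_1^{(S)}(V,\chi_0)$, and take as given the two component bounds
\begin{equation*}
\frac{2\lambda_{\rm min}}{|A|^2}d_1^{(W)}(V,\chi_0)\leq\mathcal{D}_{W}(V)\leq \frac{2\lambda_{\rm max}}{|A|^2}d_1^{(W)}(V,\chi_0),
\end{equation*}
\begin{equation*}
\frac{\lambda_{\rm min}}{2|A|^2}d_1^{(S)}(V,\chi_0)\leq \mathcal{D}_{S}(V)\leq \frac{\lambda_{\rm max}}{2|A|^2} d_1^{(S)}(V,\chi_0).
\end{equation*}

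Next I would add the two lower bounds and, separately, the two upper bounds, substituting the decomposition of $\mathcal{D}(V)$ into the middle. For the lower bound this yields $\mathcal{D}(V)\geq \frac{2\lambda_{\rm min}}{|A|^2}d_1^{(W)}+\frac{\lambda_{\rm min}}{2|A|^2}d_1^{(S)}$, and the only manipulation needed is the arithmetic observation $\frac{\lambda_{\rm min}}{2|A|^2}=\frac{2\lambda_{\rm min}}{|A|^2}\cdot\frac14$, which lets me pull out the common coefficient $\frac{2\lambda_{\rm min}}{|A|^2}$ and leave $d_1^{(W)}+\frac14 d_1^{(S)}$ inside the parentheses. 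The identical bookkeeping with $\lambda_{\rm max}$ produces the matching upper bound, completing the statement.

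I expect there to be no genuine obstacle at this final stage: the substantive content already lives in the two component bounds, which themselves follow from the difference-of-squares identity $(\sigma_i^2-\sigma_j^2)^2=(\sigma_i+\sigma_j)^2(\sigma_i-\sigma_j)^2$ together with the estimates $2\sigma_{\rm min}\le\sigma_i+\sigma_j\le 2\sigma_{\rm max}$ in the weak sector (both indices outside the kernel) and $\sigma_{\rm min}\le\sigma_j\le\sigma_{\rm max}$ in the strong sector (one index in the kernel, so $\sigma_i+\sigma_j=\sigma_j$). The one point worth checking explicitly is that both component inequalities reference the \emph{same} global quantities $\lambda_{\rm min}=\sigma_{\rm min}^2$ and $\lambda_{\rm max}=\sigma_{\rm max}^2$, i.e. the extremal nonzero eigenvalues of $V^{\dagger}V$; because these factors are uniform across the two sectors, a single common coefficient can indeed be extracted, which is precisely what makes the recombination clean. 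The theorem is thus a repackaging of the already-proven sector bounds rather than an independent computation.
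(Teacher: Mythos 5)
Your proposal is correct and follows the paper's own route exactly: the paper likewise obtains the theorem by simply adding the two already-displayed sector bounds for $\mathcal{D}_W$ and $\mathcal{D}_S$, using $\mathcal{D}(V)=\mathcal{D}_W(V)+\mathcal{D}_S(V)$, and factoring out the common prefactor via $\frac{\lambda}{2|A|^2}=\frac{2\lambda}{|A|^2}\cdot\frac14$. Your sanity checks (the difference-of-squares estimates in each sector and the fact that $\lambda_{\rm min},\lambda_{\rm max}$ are the same global extremal nonzero eigenvalues in both sectors) match the paper's justification of those component bounds.
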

\noindent These bounds are tighter when $\lambda_{\rm min}\approx \lambda_{\rm max}$, i.e., when $V$ is  proportional to an approximate (co-)isometry. 
%{\color{red}   Fo co-siometries I thought we said that all the eigenvalues are actually equal to 1. So in that case, aren’t we done?   Or is the “approximately propotional” nature the reason why the eigenvalues are not exactly 1 here? }
%\CC{the second one, we are talking about the case where they are close, not but exactly 1. CHanged approximately prop to to prop to approx (co)-isometry. Like before, proportional to means the eigenvalues are not 1, but is a constant times 1. Indeed your comment is in the next sentence. } 
In fact, when $\lambda_i=\lambda_j=\lambda,~\forall \lambda_i,\lambda_j\ne 0$  we recover Proposition~\ref{prop:noniso=statedep} as a corollary.  For approximate holographic codes introduced by \cite{Hayden_2016,Cao_2021} as well as the non-isometric codes of \cite{akers2022black,cao2023overlapping},  the above bounds are either saturated or close to being saturated. 
%\CC{comment on instances where you have near-flat spectrum, such as near vacc holographic code, non-isometric code for BH, for approxi QECC in practice, these bounds are very close to being equalities.}

%\CC{got rid of the remark structure}

For maps that are co-isometries up to constant rescaling, since $d_{\nu}(V)\leq d_1(V,\chi_0)$, we have 
\begin{equation}
    \frac{\lambda_{\rm min}}{2|A|^2} d_{\nu}(V)\leq \mathcal{D}(V).
\end{equation}
As we don't have an upper bound of $d_1(V,\chi_0)$ in terms of the Haar averaged measure, we leave the upper bound on $\mathcal{D}(V)$ and the generalization beyond co-isometric maps for future work.

Note that the results above did not require any statements about quantum error correction. In  holographic quantum error correcting codes and, indeed, in every finite dimensional approximate quantum error correcting code the authors can think of, the spectrum is flat or nearly flat \cite{gottesmannote,ErrorCorrectionZoo,Hayden_2016,Cao_2021,Akers:2021fut,akers2022black},
%{\color{red} citations?}\CC{I thought about adding something, but then i dont want to cite all works in QEC. Maybe I cite the error correction zoo or Gottesman lecture note, is that ok with everyone? QECCs are defined to be (partial) isometries, and the only mentions of non-isometric ones (based on post-selection) are from ads cft, which I will cite here.} 
meaning that the difference between $\lambda_{\rm min}$ and $\lambda_{\rm max}$ is very small or zero. Therefore, for all these systems we expect our bounds to collapse to equalities precisely relating non-isometry to state-dependence.
 
%{\color{red} But we started this section saying that the non-isometry measure and the state-dependence measure are both measuring some form of non-flatness. But then we are saying that the spectrum is flat in the cases of interest.  So doesn't that mean that in addition to collapsing the bounds to equality, the state-dependence and non-isometry will actually vanish?}
%\CC{no, for weak non-isometry this is true, but when the kernel is non-trivial, one can have a flat spectrum for the non-zero singular values but still retain a non-trivial state-dependence measure prop to the size of the kernel. We have this in one of the corollaries above I believe, but we can add a bit more explanation here.}

%It's interesting to note that to this point, the error correction aspect of the story has not been invoked, only the state-dependent nature of the map. In the holographic quantum error correcting code (and, indeed, in every approximate quantum error correcting code the authors can think of), the spectrum is flat or nearly flat, meaning that the difference between $\lambda_{\rm min}$ and $\lambda_{\rm max}$ is very small or zero. Therefore, we believe that for the systems in question, that these bounds collapse to equalities precisely relating non-isometry to state-dependence in a quantum error correction context.

So far we have been using a  definition where state dependence means that no single ``physical'' operator $\tilde{O}$ can exactly perform the same ``logical'' operation as $O$  for all states $|\psi\rangle$. However, we could alternatively define state dependence as a situation where a ``logical operator'' $O$ does not admit the same best physical reconstruction for all states. In other words,  the optimal reconstruction $\tilde{O}$ of $O$ may be different for different states. If we allow the reconstruction to be imperfect, the quantification of the amount of state dependence that we defined earlier in this section does not characterize the amount of the alternative form of state dependence we just defined. For example, suppose the best possible operator reconstruction $\tilde{O}$ one can achieve state-specifically is the same for all states, but the size of $||\tilde{O}V|\psi\rangle-VO|\psi\rangle||$ differs from state to state. Then, this operator is clearly not state-independently reconstructable according to Definition~\ref{def:stateindependence}, but  the reconstruction $\tilde{O}$ does not look like it depends on the states at all. For example, this is the case for the operators considered in the proof of Theorem \ref{thm:d1X0} in Appendix \ref{sec:thmd1X0}, at least for the basis states in which $V$ is diagonal. Instead, we can define an alternative state dependence measure:
\begin{definition}
The operator variance of $O$ with respect to a linear map $V$ is 
    \begin{equation}
       \mathcal{V}_V(O)=\min_{\tilde{O}\in U(B)}D_V(\tilde{O},O)-\int d\mu(|\psi\rangle)\min_{\tilde{Q}\in U(B)} ||\tilde{Q}V|\psi\rangle - VO|\psi\rangle||^2
       \label{eqn:statevariance}
    \end{equation}
    where $\mu(|\psi\rangle)$ is the Haar measure.
\end{definition}
\noindent In this definition, we subtract off any non-varying contribution like in the example above such that the leftover contribution corresponds to the amount of true ``dependence''. Since the minimization of the second term in equation (\ref{eqn:statevariance}) is done inside the integral, it is clear that $\min_{\tilde{O}\in U(B)}D_V(\tilde{O},O)\geq \int d\mu(|\psi\rangle)\min_{\tilde{Q}\in U(B)} ||\tilde{Q}V|\psi\rangle - VO|\psi\rangle||^2$, i.e. $\mathcal{V}_V(O)\geq 0$. A sum similar to Definition \ref{DefdV} or Theorem~\ref{thm:d1X0} can also be performed over $O$ if operator averaging is needed. We will leave the analysis of this alternative measure and its relation with our non-isometry measure for future work.

%It is interesting to note that $d_1(V,\chi_0)$ can be rewritten as a sum of two parts, one that computes the dimension of the kernel, and a contribution that comes from the non-flatness of the spectrum of $V$. This is roughtly consistent with our physical intuition that non-isometry or state-dependence can arise from both the stretching of vectors even when $|A|\leq |B|$ and from the non-trivial kernel when $|A|>|B|$. Generally speaking, this distinction is somewhat artificial as one can think of the kernel as shrinking the length of vectors to zero.

%\CC{Do we want to say more about the implication of the above results?}

\section{Non-isometry and state-dependence in holography}
\label{sec:gravity}

Our goal in this section is to interpret some of the general results arrived at above within the framework of holographic duality.  We will also explore the interplay between state-dependent bulk reconstruction, non-isometric bulk-to-boundary maps, and global causal structure of the bulk spacetime. Note that here, and in the rest of the paper, we use the term ``bulk-to-boundary map'' as shorthand for the map between the bulk effective field theory (EFT) description and the fundamental, UV-complete description in the dual theory.\footnote{Of course the full holographic duality maps {\it all} degrees of freedom of the bulk into the dual boundary theory, but we will be interested in the restriction of the domain of this map to a bulk EFT.} Some recent work has considered scenarios in  which auxiliary, non-gravitating degrees of freedom are either coupled to, or adjoined to, the holographic theory \cite{almheiri2019entropy,Penington:2019kki,almheiri2020replica}. In these cases, by ``bulk'' we will mean  the EFT description of the spacetime and matter plus the low-energy EFT description of the auxiliary variables, while by ``boundary'' we will mean the microscopic description of both the holographic theory and the auxiliary DOF.
%{\color{blue}In situations where auxiliary,  non-gravitational degrees of freedom (DOF) are present in addition to a holographic theory (either coupled \cite{}\sa{Add citations} or not \cite{}), by ``bulk'' we will mean by the spacetime, including bulk matter, dual to the holographic theory \textit{plus} the low-energy EFT description of the auxiliary DOF.   Meanwhile, by ``boundary'' we will mean the microscopic description of both the holographic theory and the auxiliary DOF.} 

We focus on global reconstruction and global bulk-to-boundary maps, as in Sec.~\ref{sec:QI}. By ``global'' we mean that we are including in our description all connected and disconnected components of both the bulk and the boundary theories.  This implies that all states both in the bulk theory and in the dual microscopic theory can be taken to be pure.  This is because, if we want to consider a mixed state, we can simply purify it by entangling it with some auxiliary degrees of freedom, and then consider the whole system.   While we focus on AdS/CFT, we expect our arguments to extend to recently proposed scenarios where a compact region of spacetime may encode, through underlying entanglement, the local physics of a larger region \cite{Bousso:2022hlz,Bousso:2023sya,Balasubramanian:2023dpj}.

%For simplicity, we will focus our attention on AdS/CFT settings, although we do not see significant obstacles preventing the extension of our arguments to more generic holographic setups.

\subsection{Strong vs weak non-isometry in holography}
\label{sec:weakvsstrong}

Multiple lines of evidence suggest that the holographic bulk-to-boundary map can manifest both strong and weak non-isometry. To see this, let us consider the bulk-to-boundary map from the bulk effective field theory (EFT) Hilbert space, namely a low-energy approximation of the full quantum gravity defined with respect to some cutoff, and the fundamental boundary Hilbert space.\footnote{Note that when defining the bulk EFT Hilbert space, we typically ignore non-perturbative quantum gravity effects such as contributions to the gravitational path integral from higher topology saddles. As we will see, this ``naive'' definition of the bulk EFT is exactly what leads to non-isometry, because these saddles give non-trivial contributions to the inner product in the fundamental theory.} Some states in this Hilbert space are well-described as perturbative field excitations around a semiclassical background. We typically treat such states as basis elements for the EFT. Note that not all states in the resulting Hilbert space have semiclassical descriptions---e.g., a superposition of two very different geometries \cite{Balasubramanian:2007zt}, or a superposition of exponentially many perturbative excitations \cite{akers2022black} are not expected to have a good semiclassical description.\footnote{In the AdS/CFT literature, the EFT Hilbert space is sometimes called the ``code subspace'' because, when the bulk-to-boundary map is isometric, it is isomorphic to its image in the (physical) boundary Hilbert space, which is the properly defined code subspace. However,  this nomenclature is misleading if the map is strongly non-isometric, because the map has a kernel, and the EFT states are overcomplete.   To avert confusion we avoid mention of code subspaces, and instead
specify if we are talking about the bulk EFT Hilbert space or a subspace of the boundary Hilbert space, the latter understood as the physical non-redundant description of the physics. 
} We will denote the subset of states in the bulk EFT Hilbert space which admit a semiclassical description by $S_{SC}$. Note that $S_{SC}$ does not form a subspace in general as superpositions of states in this set need not remain semi-classical.

Now consider a black hole with horizon area $\mathcal{A}_h$ and Bekenstein-Hawking entropy $S_{BH}=\mathcal{A}_h/4G$.  
%We could also consider a two-sided, eternal black hole, for which there are two horizons with this area, doubling the entropy.  
If  $S_{BH}$ has a standard interpretation as a coarse-grained entropy, we expect the underlying fundamental Hilbert space to have dimension $e^{S_{BH}}$.  However, it is easy to see that in the EFT the apparent dimension can be much larger. That is because black holes have large interiors, and the Hilbert space of local EFT degrees of freedom subject to a cutoff can be very big, even if we restrict just to states with good semiclassical descriptions.  Recent work has shown that this tension is resolved by nonperturbative quantum gravity effects that induce small overlaps between states that are orthogonal in the EFT, implying non-isometry \cite{penington2022replica,almheiri2020replica,akers2022black,Balasubramanian:2022gmo, Balasubramanian:2022lnw,Climent:2024trz, Balasubramanian:2024yxk}.

%that are orthogonal in the EFT, such as different geometries \cite{Balasubramanian:2022gmo, Balasubramanian:2022lnw,Climent:2024trz, Balasubramanian:2024yxk} or perturbative  excitations behind the horizon \cite{Antonini:2023hdh}, and  states in toy models with constructs like end-of-the-world branes of different flavors \cite{Penington:2019kki}. {\color{red}  (more cites?)}  Thus, the bulk-to-boundary map  takes  orthogonal states in the EFT to overlapping states in the fundamental, dual CFT description.  So it is at least weakly isometric.  But if we pick the cutoff so that the dimension of the EFT is greater than the exponential of the Bekenstein-Hawking entropy, the map must also have a kernel, and so is strongly isometric. 

For example, consider the explicit construction of semiclassically well-controlled AdS black hole microstates in \cite{Balasubramanian:2022gmo, Balasubramanian:2022lnw} as shells of matter of different masses propagating behind the horizon, and modifying the geometry there while the exterior remains fixed, see Figure \ref{fig:shells}. Consider a basis of $\Omega$ shell states $|\psi_p\rangle$ that are mutually orthogonal in the EFT
\begin{equation}
\bra{\psi_p}\ket{\psi_{p'}}_{\rm EFT}\equiv \bra{\psi_p}\ket{\psi_{p'}} =\delta_{pp'} \, ,
\end{equation}
because the shells have different inertial masses and positions behind the horizon. The authors of \cite{Balasubramanian:2022gmo, Balasubramanian:2022lnw} argued that, non-perturbatively, the overlap between two different shell states is 
\begin{equation}
    \bra{\psi_p}\ket{\psi_{p'}}_{\rm full} \equiv \bra{\Psi_p}\ket{\Psi_{p'}}= \delta_{pp'} + R_{pp'}  \, ,
    \label{eq:bulkEFTinner}
\end{equation}
where $R_{pp'}$ is a pseudorandom variable, as inferred from wormhole effects in quantum gravity. The second term arises from pseudorandom statistics of the heavy operators creating the shell states \cite{Belin:2020hea,Balasubramanian:2022gmo,Balasubramanian:2022lnw,sasieta2023wormholes,antonini2024holographic}. These statistics, which are an avatar of the Eigenstate Thermalization Hypothesis \cite{Belin:2020hea}, are coarse-grained over in saddle point computations of the gravitational path integral \cite{Belin:2020hea,Balasubramanian:2022gmo,Balasubramanian:2022lnw,sasieta2023wormholes,Antonini:2023hdh,antonini2024holographic, Chandra:2022bqq, Chandra:2023dgq}.

This means that orthogonal EFT states actually overlap in the fundamental theory.  In terms of the bulk-to-boundary map $V$ we can write this as 
\begin{equation}
\bra{\psi_p} V^\dagger V\ket{\psi_{p'}} = \delta_{pp'} + R_{pp'},  
\label{eq:mappedinner}
\end{equation}
where $V\ket{\psi_p}=\ket{\Psi_p}$.
So orthogonal EFT states can overlap in the dual CFT, implying, at a minimum, weak non-isometry of the bulk-to-boundary map.  The authors of \cite{Balasubramanian:2022gmo, Balasubramanian:2022lnw} further showed that if $\Omega < e^{S_{BH}}$, the CFT Gram matrix of overlaps has full rank so that the span has dimension $\Omega$.  But if $\Omega > e^{S_{BH}}$, namely the shell state basis size is chosen larger than the dimension of the underlying black hole Hilbert space, then linear combinations of basis elements include null states of zero norm (i.e., vanishing eigenvalues of the Gram matrix) in just the right number to ensure that the span has dimension $e^{S_{BH}}$. This happens because small overlaps between states, in this case neglected in the bulk EFT, can drastically reduce the dimension of the  Hilbert space \cite{cao2023overlapping, akers2022black, Akers:2021fut, balasubramanian2023quantum}. Thus, the bulk-to-boundary map, which maps orthogonal states in the naive EFT to non-orthogonal states in the fundamental theory, must have a kernel, and thus be strongly non-isometric.  Just as in \cite{akers2022black}, the non-isometricity of the map between the bulk EFT and the fundamental description arises when we consider a naive bulk EFT in which non-perturbative corrections to the inner product have been neglected.

A completely analogous conclusion can be reached by constructing an overcomplete basis of black hole microstates just in terms of EFT matter excitations around a fixed background geometry \cite{penington2020entanglement, almheiri2019entropy,Penington:2019kki,almheiri2020replica,akers2022black}. These will once again acquire non-perturbative corrections to the overlaps leading to a non-isometric bulk-boundary map. The shell state basis of \cite{Balasubramanian:2022gmo, Balasubramanian:2022lnw} and such an EFT excitation basis can presumably then be written as linear combinations of each other, but to determine the required superposition phases we would need control of the complete non-perturbative quantum gravity.  Recent work investigating the Page curve for black hole evaporation \cite{penington2020entanglement, almheiri2019entropy,Penington:2019kki,almheiri2020replica,akers2022black} effectively uses such perturbative states, with the map switching between weak and strong non-isometry after the Page time. Similar states also appear in studies of Python's lunch geometries in AdS/CFT \cite{brown2020python,Engelhardt:2021mue, Engelhardt:2021qjs}. 
In this case the bulk EFT Hilbert space is typically taken to be  smaller than the fundamental Hilbert space, with the bulk-to-boundary map being therefore weakly-non-isometric.

Comparable results can also be obtained for closed universes entangled with AdS spacetimes \cite{Antonini:2023hdh}, in which the overlaps of EFT states receive corrections suppressed by the entanglement entropy between bulk fields in the closed universe and bulk fields in the AdS spacetimes. These are $O(G^0)$, but nevertheless non-perturbative in the EFT, and can produce weak or strong non-isometry depending on how large the EFT Hilbert space is taken to be. The same phenomenon occurs in toy models of black hole microstates involving End-Of-The-World branes behind the horizon \cite{Kourkoulou:2017zaj,Almheiri:2018ijj,Penington:2019kki,Cooper:2018cmb,Antonini:2019qkt,Antonini:2021xar,Waddell:2022fbn,Antonini:2024bbm,Geng:2024jmm}.

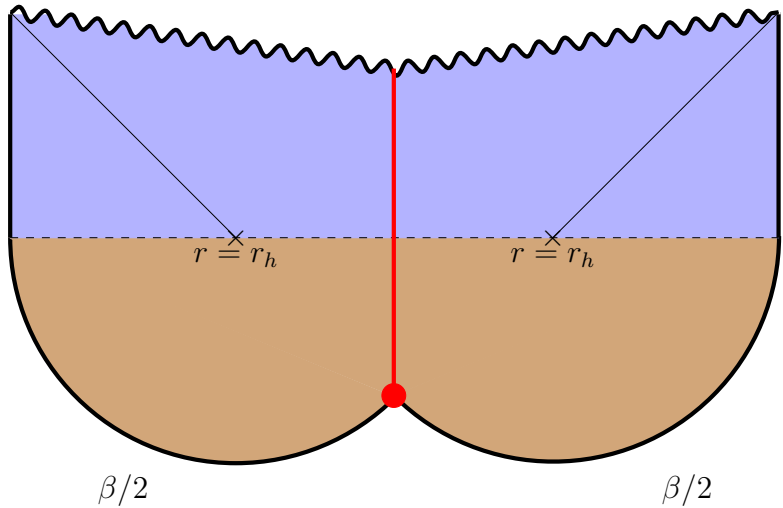
\begin{figure}[ht]
    \centering\begin{tikzpicture}[scale=1.5]
       \filldraw[color=brown!70] (0,0)--(6.81,0)--(3.4,-1.4)--cycle;
       \draw[ultra thick,fill=brown!70] (0,0) arc[start angle=-180, end angle=-45, radius=2];
       \draw[ultra thick,fill=brown!70] (3.4,-1.4) arc[start angle=-135, end angle=0, radius=2];
       \draw[thick, color=brown!70] (3.4,-1.4)--(0,0);
       \filldraw [red] (3.4,-1.4) circle (3pt);
       \filldraw[color=blue!30](6.81,2) rectangle (0,0);
       \draw[ultra thick] (0,0)--(0,2);
       \draw[ultra thick] (6.81,0)--(6.81,2);
       \draw[ultra thick,color=white] (6.81,2) rectangle (0,2.2);
       \draw[dashed, ultra thin, fill=brown!70] (0,0)--(6.81,0);
       \draw[ultra thin] (0,2) -- (2,0);
       \draw[ultra thin] (6.81,2) -- (4.81,0);
       \node at (4.81,0) {$\times$};
       \node[anchor=north] at (4.81,0) {$r=r_{h}$};
       \node[anchor=north] at (1,-2) {$\beta/2$};
       \node[anchor=north] at (6,-2) {$\beta/2$};
       \node at (2,0) {$\times$};
       \node[anchor=north] at (2,0) {$r=r_{h}$};
       \draw[ultra thick, decorate, decoration=snake,fill=white] (0,2)-- (3.4,1.5) -- (6.81,2);
       \draw[ultra thick, red] (3.4,-1.4)--(3.4,1.5);
        \end{tikzpicture}
    \caption{Geometric black hole microstate with a spherically-symmetric shell of pressureless matter behind the horizon. In the bottom half of the figure we show the Euclidean preparation of the microstate via the gravitational Euclidean path integral. In the top half we depict the corresponding Lorentzian geometry. We are considering here the case where the two exterior geometries are identical, but generically the two black holes can be at different temperatures \cite{Chandra:2022fwi, Balasubramanian:2022gmo, Climent:2024trz}. Shells of different mass, evolved for accordingly different amounts of Euclidean times, give rise to identical exterior geometries but different interior geometries, and therefore give rise to different black hole microstates.}
    \label{fig:shells}
\end{figure}

\subsection{Approximate isometry and state-independent reconstruction}
\label{sec:approximate}

%\sa{One thing missing from this subsection is a relationship between the two definitions given here and the measures of state-dependence/non-isometry given in the QI section. I think that the ones here are more stringent (they are worst-case definitions rather than average ones). I think from a holography point of view, the ones I included here are more meaningful, but it would be nice to make a further connection between the QI and gravity section by employing the QI ones as well. Any ideas?}

As we have seen in Sec. \ref{sec:QI}, reconstruction is always state-dependent for both weakly and strongly non-isometric maps.\footnote{Note that in this section we will make use of stronger, worst-case measures for state-dependence and non-isometry with respect to the average ones used in Sec.~\ref{sec:QI} and in \cite{akers2022black}.} Moreover, if $V:A\to B$ is non-isometric, there are unitary operators acting on $A$ which cannot be exactly reconstructed by corresponding unitaries on $B$, not even in a state-specific manner. We already encountered such operators in the proof of Theorem \ref{thm:exactrecon}. Let us analyze them in more detail. Consider a non-isometric map $V$ and two states $\ket{\psi_1}$, $\ket{\psi_2}$ such that $V\ket{\psi_1}=l_1\ket{\Psi_1}$, $V\ket{\psi_2}=l_2\ket{\Psi_2}$, where $|\Psi_{1,2}\rangle$ and $|\psi_{1,2}\rangle$ are unit normalized with $l_1\neq l_2$. Now consider a unitary operator $O$ acting on $A$ such that $O\ket{\psi_1}=\ket{\psi_2}$. We would like to find a unitary operator $\tilde{O}$ on $B$ such that $VO\ket{\psi_1}=\tilde{O}V\ket{\psi_1}$. However, the norm squared of the left hand side is given by 
\begin{equation}
    \langle\psi_1|O^\dagger V^\dagger V O|\psi_1\rangle=l_2^2
\end{equation}
and the norm squared of the right hand side is
\begin{equation}
    \langle\psi_1|V^\dagger \tilde{O}^\dagger \tilde{O}V |\psi_1\rangle=l_1^2,
\end{equation}
where we used the unitarity of $\tilde{O}$. Since $l_1\neq l_2$, the unitary $\tilde{O}$ does not exist.

On the other hand, if we allow the reconstruction $\tilde{O}$ of the operator $O$ to be approximate such that
\begin{equation}
    ||(\tilde{O}V-VO)\ket{\psi_1}||<\epsilon
    \label{eq:recbound}
\end{equation}
for some $\epsilon\ll 1$, then a unitary $\tilde{O}$ reconstructing $O$ can exist for sufficiently small $|l_1-l_2|$. In fact, we have 
\begin{equation}
    ||(\tilde{O}V-VO)\ket{\psi_1}||=\sqrt{l_1^2+l_2^2-l_1l_2\left(\langle\Psi_2|\tilde{O}|\Psi_1\rangle+\langle\Psi_1|\tilde{O}^\dagger|\Psi_2\rangle\right)}\geq |l_1-l_2|,
    \label{eq:differentscalings}
\end{equation}
where the optimal choice of $\tilde{O}$ for which the equality holds is  $\tilde{O}\ket{\Psi_1}=\ket{\Psi_2}$. With this choice and further choosing $1\gg \epsilon>|l_1-l_2|$, equation \eqref{eq:recbound} is satisfied and $\tilde{O}$ is an approximate reconstruction of $O$ within the allowed error.

Similar reasoning can be applied to state-dependent reconstruction, leading us to define a notion of approximate reconstruction \cite{Akers:2021fut} as follows:
\begin{definition}
    Consider a map  $V:A\to B$ between Hilbert spaces, a set of states $S=\{\ket{\psi_i}\}\subseteq A$, and a unitary operator $O$ which only admits a state-dependent reconstruction when acting on $A$. The operator $O$ is said to be approximately state-independently reconstructible on $S$ if there exists a unitary operator $\tilde{O}$ acting on $B$ such that
    $$||(\tilde{O}V-VO)\ket{\psi_i}||<\epsilon \quad \quad \forall \ket{\psi_i}\in S,$$
    where $\epsilon\ll 1$ is a fixed parameter.
    \label{def:apprec}
\end{definition}

In the context of semiclassical holography,  the reconstruction of bulk EFT operators is  approximate, and obtained only up to non-perturbative quantum gravitational corrections of $O\left(e^{-\gamma/G}\right)$ where $G$ is Newton's constant and $\gamma>0$ is $O(1)$ \cite{Hayden:2018khn}. In this context there is a natural value for the $\epsilon$ parameter in Definition \ref{def:apprec}: $\epsilon_{QG}=O\left(e^{-\gamma/G}\right)$. It is interesting to ask when all bulk EFT unitary operators with a well-defined semiclassical description are approximately state-independently reconstructible up to $\epsilon_{QG}$-suppressed errors. When this is possible, we will say that bulk reconstruction is approximately state-independent. 

Consider the set $S_{SC}$ of bulk EFT states admitting a semiclassical description. In the model of non-isometry introduced in \cite{akers2022black}, attention was focused on states and operators constructed as superpositions of a sub-exponential (in $S_{BH}$, i.e. in $1/G$) number of such elements, following the intuition that an arbitrary superposition of an exponential number of semiclassical states need not have a well-defined semiclassical description.\footnote{Note that the fact that these bulk operators have sub-exponential complexity, in the sense that they map sub-exponential states to other sub-exponential states,
does not mean that their boundary reconstruction is also sub-exponential. In fact, when ``simple'' bulk EFT operators mapping between ``simple'' bulk EFT states (e.g. by flipping a single qubit in the bulk) act inside a Python's lunch \cite{brown2020python,Engelhardt:2021mue, Engelhardt:2021qjs}, their boundary reconstruction is conjectured to be exponentially complex.}  Although  not all such ``exponential states'' have a semiclassical description, \textit{some} may. For this reason, we will avoid identifying ``semiclassical'' with ``sub-exponential''.   Thus we will not restrict to sub-exponential superpositions, even though we will be mostly interested in states with interpretations as small excitations around a semiclassical background. 

Now consider a set of states $S_{SI}$ for which the reconstruction of operators mapping between elements is approximately state-independent according to Definition \ref{def:apprec}.  All  states $\ket{\psi_i}\in S_{SI}$ must satisfy $V\ket{\psi_i}=l_i\ket{\Psi_i}$ with $|l_i-l_j|<\epsilon$, because if $\ket{\psi_i},\ket{\psi_j}$  violate this assumption, the bulk operator mapping $\ket{\psi_i}$ to $\ket{\psi_j}$ would not be approximately state-independently reconstructible. Rescaling $V$ then allows us to set all $l_i=1$ up to $\epsilon$-suppressed corrections, i.e., the norms of all states in $S_{SI}$ must be approximately preserved. 
%We will consider states formed by superposition of this set.
%\footnote{For example, the superposition of a small (polynomial in $1/G$) number of semiclassical states with a given geometry generically has a semiclassical description with the same geometry and the state of bulk matter given by the superposition of the states of bulk matter in the original states.} 
The fact that  norms are approximately preserved by $V$ also implies that all overlaps between superpositions of states in the set will be preserved up to $\epsilon$-suppressed corrections. In other words, $S_{SI}$ must be approximately isometrically encoded in $B$ as per the following definition, with $\delta=c\epsilon$ and $c$ an order 1 constant whose precise value is unimportant for our discussion:

\begin{definition}
    Consider a map between two Hilbert spaces $V:A\to B$
    %, where $\log|B|\propto N^2$ and $N\gg 1$ (in holography $N^2\propto 1/G$), 
    and a set of states $S=\{\ket{\psi_i}\}\subseteq A$. $S$ is approximately isometrically encoded in $B$ if the overlap between any two states in $S$ is preserved up to a fixed small error $\delta\ll 1$:
    $|\bra{\psi_i}V^\dagger V \ket{\psi_j}-\langle\psi_i|\psi_j\rangle|<\delta.$
    \label{def:appiso}
\end{definition}
In known examples involving holographic non-isometric maps, overlaps between semiclassical states are approximately preserved under the bulk-to-boundary map up to corrections which behave erratically.  As we discussed in Sec.~\ref{sec:weakvsstrong}, these corrections arise from spacetime wormhole contributions to the gravitational path integral and are (pseudo)randomly distributed (see e.g. \cite{Penington:2019kki,Balasubramanian:2022gmo, Balasubramanian:2022lnw,sasieta2023wormholes,antonini2024holographic,Climent:2024trz, Balasubramanian:2024yxk,cao2023overlapping,penington2020entanglement, almheiri2019entropy,almheiri2020replica,Engelhardt:2021mue, Engelhardt:2021qjs,akers2022black}).   Thus, if we take $|\psi_{i,j}\rangle$ to be orthogonal states in the EFT, and $V|\psi_{i,j}\rangle$ to be their image in the fundamental theory, then we can follow (\ref{eq:mappedinner}) to write
\begin{equation}
    \bra{\psi_i}(V^\dag V-I)\ket{\psi_j}=R_{ij},
    \label{eq:BHrandomoverlaps}
\end{equation}
where $R$ is a $|A|\times |A|$ Hermitian random matrix where each element is independently drawn from a distribution with mean zero and variance $\sigma^2$ ($\overline{R_{ij}}=0$, $\overline{|R_{ij}|^2}=\sigma^2$). Here we denoted the dimension of the bulk Hilbert space by $|A|$.\footnote{Just like in Sec.~\ref{sec:QI}, we will consider the dimension of the bulk and boundary Hilbert spaces to be finite. This can be obtained for instance by restricting to a microcanonical window of the CFT.}

%We will use \eqref{eq:BHrandomoverlaps}
%in the worst-case measures of state-dependence and non-isometry given in Definitions \ref{def:apprec} and \ref{def:appiso}, as opposed to the average measures used in Sec.~\ref{sec:QI}.

The size of the variance $|R_{ij}|^2=\sigma^2$ is determined by  wormhole contributions to the gravitational path integral computing the overlap squared $|\bra{\psi_i}V^\dag V\ket{\psi_j}|^2$ 
and is controlled by the size of the accessible Hilbert space.   For instance, for a single-sided black hole $\sigma^2\propto e^{-S_{BH}}$ \cite{Penington:2019kki} and for a double-sided black hole $\sigma^2\propto e^{-2S_{BH}}$ \cite{Balasubramanian:2022gmo, Balasubramanian:2022lnw}.  
%$\footnote{More precisely, the wormhole contribution gives a Renyi-2 entropy in the exponent.}
In general, such wormholes are associated with the appearance of a non-trivial, compact quantum extremal surface (QES) \cite{Engelhardt:2014gca,Hayden:2018khn,penington2020entanglement, almheiri2019entropy, Penington:2019kki,almheiri2020replica,brown2020python,Engelhardt:2021mue,Engelhardt:2021qjs} separating some bulk region $a$ from the boundary.  The  variance then turns out to be given by $\sigma^2\propto 1/d_{\partial,a}$ where $d_{\partial,a}$ is the dimension of the underlying physical  Hilbert space associated with region $a$. For simplicity, let us restrict to the case where we only consider the encoding of dynamical DOF inside this region $a$, so that $|A|=d_{bulk,a}$ and $|B|=d_{\partial,a}$.
The reason to do this is that states that differ in the region $a$ behind the compact QES are those that receive wormhole contributions to their overlaps, leading to non-isometry. Physics in the simple wedge (outside the QES) is state-independently reconstructible via the HKLL map \cite{Hamilton:2005ju} if we are in the causal wedge, or by some more involved procedure if we are between the boundary of the causal wedge and the QES \cite{Engelhardt:2021mue}. Since we are interested in studying the properties of the state-dependently reconstructible regions/non-isometrically encoded part of the Hilbert space, we can focus our attention only on those, and therefore define the bulk EFT Hilbert space as that of bulk EFT DOF defined in region $a$, and the boundary Hilbert space as that associated with boundary DOF encoding $a$.

For example, for black holes this corresponds to restricting the bulk Hilbert space $A$ to the EFT degrees of freedom behind the horizon, with the fundamental (boundary) Hilbert space $B$ restricted to the physical black hole Hilbert space.  In this case $V:A\to B$ maps the interior  EFT degrees of freedom to fundamental black hole degrees of freedom.  With this restriction, we simply have $\sigma^2=1/|B|$. The inclusion of additional degrees of freedom outside the compact QES (e.g. exterior DOFs in the black hole case) does not significantly change the results of our analysis because they are isometrically encoded/state-independently reconstructible. 

As we have seen, a necessary condition for approximate state-independent bulk reconstruction is that the set $S_{SC}$ of all EFT excitations of semiclassical backgrounds is approximately isometrically encoded in $B$.  Assuming $B$ is large, equation \eqref{eq:BHrandomoverlaps} together with $\sigma^2=1/|B|$ guarantees that each overlap is approximately preserved with high probability. However, without further assumptions on the size of $A$ or the specific probability distribution of the entries of $R$, we cannot conclude that the full set $S_{SC}$ is approximately isometrically encoded with high probability.  This is because, even though each individual entry of the random matrix $R$ in \eqref{eq:BHrandomoverlaps} (i.e., each individual overlap) is small with high probability, it does not follow with high probability that {\it all} entries are small and hence that the full set  $S_{SC}$ is approximately isometrically encoded.  We will discuss this in more detail in Sec.~\ref{sec:strong}. Additionally, as we will see, approximately isometric encoding of $S_{SC}$ is in general necessary but not sufficient for approximate state-independent bulk reconstruction. 

In order to understand the assumptions needed for the entire set $S_{SC}$ to be approximately isometrically encoded with high probability, and those needed for bulk reconstruction to be approximately state-independent, we will separately analyze the cases of weak and strong non-isometry in the sections below.

%\sa{Distinguish two cases. Weak ok, strong not in general, but under reasonable assumptions (cite Akers et al) it is, but this is still not sufficient for approximate state independence, cite 5.1 of Akers et al. Mention somewhere that assumption of random $R$ should probably correspond to random map, but anyway our proof is simpler and only relies on PI results rather than assumptions about the structure of the map. Adapt all proofs with $R$ without rescaling by $B$ and with variance $1/|B|$.}

%Without further assumptions about the size of $A$ and/or the specific distribution the entries of $R$ are drawn from, 

%This is in accordance with the results of \cite{akers2022black}, which suggest that inner products in this context are approximately preserved for semiclassical states with high probability.\footnote{In the model introduced in \cite{akers2022black}, bulk semiclassical states have subexponential complexity, and all subexponential states are approximately isometrically encoded with high probability.} Therefore, the set $S_{SC}$ of semiclassical states satisfies a necessary condition to allow approximate state-independent reconstruction. However, as we will now discuss, this condition is not sufficient. In order to identify under which circumstances $S_{SC}=S_{SI}$ (namely, approximately state-independent reconstruction for all bulk EFT operators is possible on $S_{SC}$), let us distinguish between the two cases of weak and strong non-isometry.

\subsubsection{Weak non-isometry}

First, as already noted, states with semiclassical interpretations $S_{SC}$ form a subset, and not  a subspace, of the full bulk EFT Hilbert space, because superpositions of such states need not be semiclassical themselves.  However, such states can provide an over-complete basis for the complete  Hilbert space, see e.g. \cite{akers2022black,Balasubramanian:2022gmo, Balasubramanian:2022lnw}. We now show that when the bulk-to-boundary map $V:A\to B$ is weakly non-isometric (which requires $|A|\leq |B|$), equation  \eqref{eq:BHrandomoverlaps} with $\sigma^2=1/|B|$ implies that, with high probability, the complete bulk Hilbert space $A$---and thus also the subset of states with semiclassical interpretation---is approximately isometrically encoded in the boundary Hilbert space $B$ according to  Definition \ref{def:appiso}, with $\delta\leq 2\sqrt{|A|/|B|}$:

%First of all, we remark once again that the set of semiclassical states $S_{SC}$ does not in general form a subspace of the full bulk Hilbert space, because exponential states formed by superposition of an exponential number of semiclassical states need not have a semiclassical description. However, it does form a (over)complete basis for the bulk Hilbert space. Assuming that the bulk-to-boundary map $V$ is weakly non-isometric (namely $|A|<|B|$), we will now show that equation \eqref{eq:BHrandomoverlaps} with $\sigma^2=1/|B|$ implies that, with high probability, the full bulk Hilbert space $A$ is approximately isometrically encoded in the boundary Hilbert space $B$ according to Definition \ref{def:appiso}, with $\delta\leq 2\sqrt{|A|/|B|}$. 

%More precisely, we will prove the following general proposition:

\begin{proposition}
    Let $V:A\to B$ be a weakly non-isometric map between two Hilbert spaces, with $1\ll |A|\leq |B|^\alpha\ll |B|$ for some $\alpha<1$. Assume that a complete orthonormal basis $\{\ket{\psi_i}\}$ of $A$ satisfies
    \begin{equation}
        \bra{\psi_i}(V^\dag V-I)\ket{\psi_j}=R_{ij},
    \end{equation}
    where $R$ is an $|A|\times |A|$ Hermitian random matrix where each element is independently drawn from a distribution with mean zero and variance $1/|B|$ ($\overline{R_{ij}}=0$, $\overline{|R_{ij}|^2}=1/|B|$). Then, with probability $1 - O(e^{-\gamma |A|})$ for some $O(1)$ constant $\gamma$, the full Hilbert space $A$ is approximately isometrically encoded in $B$ following  Definition \ref{def:appiso} with 
    $$\delta= 2\sqrt{\frac{|A|}{|B|}}.$$
    \label{propiso}
\end{proposition}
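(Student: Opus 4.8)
The plan is to collapse the statement---which ranges over \emph{all} pairs of states in $A$, including arbitrary superpositions of the basis $\{\ket{\psi_i}\}$---into a single spectral bound on the Hermitian matrix $R = V^\dagger V - I$, and then to feed that bound into the semicircle law for its extreme eigenvalues.

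First I would observe that Definition~\ref{def:appiso}, applied to the full space $S = A$, is equivalent to a bound on the operator norm of $R$. Expanding arbitrary unit vectors as $\ket{\psi} = \sum_i a_i \ket{\psi_i}$ and $\ket{\phi} = \sum_j b_j \ket{\psi_j}$ in the given orthonormal basis gives $\langle\psi|(V^\dagger V - I)|\phi\rangle = a^\dagger R\, b$, so that
\begin{equation}
\sup_{\|\psi\| = \|\phi\| = 1}\left| \langle\psi|(V^\dagger V - I)|\phi\rangle \right| = \|R\|_{\mathrm{op}} = \max_i |\lambda_i(R)|,
\end{equation}
where the last equality uses that $R$ is Hermitian. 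Hence $A$ is approximately isometrically encoded with error $\delta$ if and only if every eigenvalue of $R$ lies in $(-\delta,\delta)$, and it suffices to control the two spectral edges $\lambda_{\max}(R)$ and $\lambda_{\min}(R)$. Since $S_{SC}\subseteq A$, controlling all of $A$ controls the semiclassical subset a fortiori.

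Second, I would identify $R$ as a Wigner random matrix: it is $|A|\times|A|$, Hermitian, with entries independent up to the symmetry $R_{ij}=\overline{R_{ji}}$, mean zero, and variance $\sigma^2 = 1/|B|$. Its spectrum therefore follows the semicircle law supported on $[-2\sigma\sqrt{|A|},\,2\sigma\sqrt{|A|}]$, and the spectral radius concentrates at the edge $2\sigma\sqrt{|A|} = 2\sqrt{|A|/|B|} = \delta$, exactly the claimed value. The hypothesis $|A|\le |B|^\alpha \ll |B|$ guarantees $\delta\to 0$, i.e. that we are in the weak-non-isometry regime where $V^\dagger V$ is a genuine small perturbation of the identity and $R$ is a legitimate small-variance Wigner ensemble.

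Finally, the probabilistic statement would follow from a concentration/large-deviation bound for the extreme eigenvalues: for a Wigner ensemble the probability that $\lambda_{\max}$ (or $-\lambda_{\min}$) overshoots the edge is exponentially small in the matrix dimension $|A|$, so a union bound over the two edges yields failure probability $O(e^{-\gamma|A|})$. I expect this to be the main obstacle and the only genuinely technical step, and two points need care. (i) For non-Gaussian entries one must invoke edge universality, or impose a sub-Gaussian/moment condition on the $R_{ij}$ so that a trace-moment estimate or a transfer of the Gaussian bound applies. (ii) Precisely at the edge the top eigenvalue fluctuates on the Tracy--Widom scale and exceeds $2\sigma\sqrt{|A|}$ with $O(1)$ probability, so to obtain a truly exponential tail one either enlarges $\delta$ to $2\sqrt{|A|/|B|}\,(1+o(1))$ or states the bound up to this harmless $o(1)$; this constant-versus-tail trade-off is the delicate part of the argument. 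By contrast, the reduction to $\|R\|_{\mathrm{op}}$ and the identification of the ensemble are routine linear algebra and bookkeeping.
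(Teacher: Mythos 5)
Your proposal is correct and follows essentially the same route as the paper: reduce the claim to a bound on the spectral radius of the Hermitian matrix $R=V^\dagger V-I$ (the paper does this via Cauchy--Schwarz, you via the variational characterization of the operator norm), locate the edge at $2\sqrt{|A|/|B|}$ by the Wigner semicircle law, and control deviations with the Tracy--Widom tail, which is exponential in $|A|$ for order-one relative overshoots of the edge. Your caveat (ii) about edge fluctuations is precisely the paper's ``up to subleading corrections'' qualification, and your point (i) on non-Gaussian entries matches the paper's appeal to edge universality, so nothing is missing.
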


\begin{proof}
    Consider two arbitrary (not necessarily semiclassical) normalized states $\ket{\psi_\alpha},\ket{\psi_\beta}\in A$. Then we can use the Cauchy-Schwarz inequality and the definition $R=V^\dagger V-I$ to write\footnote{We thank Luca Iliesiu, Guanda Lin, and Pratik Rath for discussions related to this bound.}
\begin{equation}
%\begin{aligned}
    |\bra{\psi_\alpha}(V^\dag V-I)\ket{\psi_\beta}|\leq \sqrt{\langle\psi_\alpha|\psi_\alpha\rangle}\sqrt{\langle\psi_\beta|R^\dagger R|\psi_\beta\rangle} \leq \sigma_{max}
 %   \end{aligned}
\end{equation}
where  $\sigma_{max}$ is the largest singular value of $R$.  Note that, since $R$ is Hermitian and the singular values of a Hermitian matrix are equal to the absolute values of its eigenvalues $\lambda_i$,  $\sigma_{max}=\max_i|\lambda_i|\equiv \lambda_{max}$.

%First, we can use Wigner's semicircle law \cite{Wigner1955CharacteristicVO,Anderson_Guionnet_Zeitouni_2009} to  obtain the expected value of $\lambda_{max}$.

We therefore want to bound $\lambda_{max}$.  Given  $R$ with zero mean and variance $\sigma^2=1/|B|$, define $\tilde{R}=R/\sqrt{|A|}$. The absolute value of the largest magnitude eigenvalue of $\tilde{R}$ is then $\tilde{\lambda}_{max}=\lambda_{max}/\sqrt{|A|}$. Wigner's semicircle law \cite{Wigner1955CharacteristicVO,Anderson_Guionnet_Zeitouni_2009} tells us that $\tilde{\lambda}_{max}=2\sigma=2/\sqrt{|B|}$ when  $|A|\to\infty$. Therefore in the large $A$ limit, the maximum eigenvalue of $R$ is $\lambda_{max}=2\sqrt{|A|/|B|}$ up to subleading corrections suppressed by $1/|A|$ \cite{goetze,feier2012methods}.\footnote{The same result can be obtained by assuming the bulk-to-boundary map $V$ to be random and using the Marchenko-Pastur distribution for the eigenvalues of large rectangular matrices \cite{Marčenko_1967}. We thank Pratik Rath for discussions on this point.} In fact we will show below that $\lambda_{max} \to 2\sqrt{|A|/|B|}$ with probability $1$ in the large $A$ limit.

%{\color{red} I have not read the latter two papers.   Isn't it much stronger than an expected value as was originally stated in the text?  I would have thought the at large $|B|$, we approach this value with probability 1 and this is what we show below? I edited to reflect this.}

The Tracy-Widom distribution \cite{tracywidom} gives the probability of deviating from this value by some  amount.  First assume that the eigenvalue of largest magnitude is positive. Then, consider the probability that the largest eigenvalue of a matrix drawn from a  $N\times N$ random matrix ensemble is less that $x$: $\mathbb{P}_{N,\beta}(\lambda_{max} < x)$, where $\beta=1,2,4$ indicates the orthogonal, unitary and symplectic ensembles.  We are interested in the unitary ensemble ($\beta=2$) since $R$ is Hermitian.  For ensembles with standard deviation $\sigma$ and matrices of size $N$, Tracy and Widom found it convenient to consider $\lim_{N\to\infty} \mathbb{P}( \lambda_{max}/\sigma - 2\sqrt{N} \leq x/N^{1/6} )$. The left hand side of the inequality measures the deviation of the maximum eigenvalue in units of the ensemble standard deviation from the large $N$ limit dictated by the Wigner semicircle distribution. The $N^{1/6}$ in the denominator on the right hand side accounts for the known scaling of this deviation.  For us, $N= |A|$ and $\sigma=1/\sqrt{|B|}$, so we define
\begin{equation}
    F_\beta(x)\equiv\lim_{|A|\to\infty}\mathbb{P}\left(\lambda_{max}-2\sqrt{\frac{|A|}{|B|}}\leq \frac{x}{|A|^{1/6}\sqrt{|B|}}\right) \,.
    \label{eq:tracywidom1}
\end{equation}
$F_\beta(x)$ is the probability that the maximum eigenvalue deviates from the large $N$ limit by less than the amount on the right side where the known scalings with matrix size and ensemble standard deviation have been accounted for. We are therefore interested in studying $1-F_\beta(x)$, namely the probability of a deviation from the large $N$ limit by at least the amount on the right hand side of \eqref{eq:tracywidom1}. Tracy and Widom showed that
\begin{equation}
1-F_\beta(x)\sim e^{-|x|^{3/2}}
\label{eq:tracywidom2}
\end{equation}
for positive $|x|\gg 1$ and decays even faster for large negative $x$ \cite{tracywidom}. A similar result can be obtained if the eigenvalue with largest magnitude $\lambda_{max}$ is negative. These results were initially obtained for Gaussian ensembles \cite{tracywidom}, but apply  more generally \cite{tracywidom-universal1,tracywidom-universal2,Soshnikov1999UniversalityAT}.

%and the index $\beta$ labels the specific matrix ensemble considered. {\color{red} First, what is $x$ in 3.11 and what is $F_\beta$ -- is it probability of a large deviation of some kind?  We should explain! Second, $\beta$ does not appear on the right hand side, so why are we including it on the left hand side.  In any case, isn't $R$ Hermitian, so don't we want an the ensemble of Hermitian matrices, i.e. the GUE ensemble? Third, what does the symbold $\mathbb{P}$ mean in this equation?  Fourth why are we taking the large $A$ limit? Is the quantity within the parentheses the Tracy-Widom result which we take a limit on to get the answer relevant for us, or is their result that for large matrices we should take this to get their result.  The limit is also not well defined as stated becuase the LHS of the inequality becomes minus infinity.}

To obtain the probability of a leading order, i.e., $O(\sqrt{|A|/|B|})$, deviation from the large $|A|$ limit, we can can set $x = \Delta |A|^{2/3} \gg 1$ in (\ref{eq:tracywidom1}), (\ref{eq:tracywidom2}), where $\Delta$ is any positive $O(1)$ number. This gives\footnote{Technically, this result holds in the asymptotic limit $|A|\to\infty$, but it can be extended for large but finite $|A|$ \cite{tracywidom-finiten}.}
\begin{equation}
    \mathbb{P}\left(\left|\lambda_{max}-2\sqrt{\frac{|A|}{|B|}}\right|\gtrsim \Delta\sqrt{\frac{|A|}{|B|}}\right)\sim e^{-\Delta^{3/2}|A|} \, .
\end{equation}
Therefore, with probability exponentially  close to 1, $\lambda_{max}=2\sqrt{|A|/|B|}$ up to subleading corrections. Putting everyting together, for any two states $\ket{\psi_\alpha},\ket{\psi_\beta}\in A$ 
\begin{equation}
%\begin{aligned}
    |\bra{\psi_\alpha}(V^\dag V-I)\ket{\psi_\beta}|\leq \delta
 %   \end{aligned}
\end{equation}
with $\delta=2\sqrt{|A|/|B|}\ll 1$. This concludes our proof that the full Hilbert space $A$ is approximately isometrically encoded in $B$ with probability exponentially close to 1.
\end{proof}

In Proposition \ref{propiso} we have assumed $|A|\gg 1$, which allowed us to use known results for large random matrices. But what if $|B|\gg |A|=O(1)$? This situation can arise for example in Python's lunch geometries \cite{brown2020python,Engelhardt:2021mue, Engelhardt:2021qjs}\footnote{Here and in the following we are referring to bag-of-gold-type geometries, for which the whole dual theory has a Python's lunch, rather than settings involving Python's lunches for boundary subregions.} when we only consider few bulk EFT degrees of freedom inside the lunch. 
%In this case, without further assumptions on the distribution of the entries of $R$, we can obtain approximate isometric encoding for $A$ only with probability $1/|B|$. In fact, we have 
To study this case, first observe that
\begin{equation}
    \lambda_{max}\leq ||R||_1=\max_{j}\sum_{i=1}^{|A|}|R_{ij}|\leq |A|(\max_j\max_i |R_{ij}|),
\end{equation}
where we used the fact that $\lambda_{max}$ is a lower bound for any matrix norm, and in particular for the matrix 1-norm. We can now ask what is the probability that the largest entry of $R$ is larger than some inverse power of $|B|$. From Chebyshev's inequality, and the fact that the standard deviation of the random matrix ensemble is $\sigma=1/\sqrt{|B|}$, the probability that a given of $R$ has a magnitude larger that $1/|B|^\zeta$ for $\zeta>0$ is
\begin{equation}
    \mathbb{P}\left(|R_{ij}|\geq \frac{1}{|B|^\zeta}\right)\leq |B|^{2\zeta-1} \, .
\end{equation}
%
%for a given $\zeta>0$. 
Therefore, the probability that at least one matrix element is larger than $|B|^{-\zeta}$ is given by
\begin{equation}
    \mathbb{P}\left(\max_{ij}|R_{ij}|\geq \frac{1}{|B|^\zeta}\right)\leq 1-\left(1-|B|^{2\zeta-1}\right)^{\frac{|A|(|A|+1)}{2}}\approx \frac{|A|(|A|+1)}{2}|B|^{2\zeta-1}
    \label{eq:probche}
\end{equation}
where  the exponent in the first inequality counts the independent entries of $R$, and we assumed $0<\zeta<1/2$ so that $|B|^{2\zeta-1}\ll 1$. This implies that
\begin{equation}
    \mathbb{P}\left(\lambda_{max}\leq \frac{|A|}{|B|^\zeta}\right)\geq 1-\frac{|A|(|A|+1)}{2}|B|^{2\zeta-1}
\end{equation}
and $A$ is approximately isometrically encoded in $B$ with high probability up to corrections $\delta=O\left(1/|B|^\zeta\right)$ with $0<\zeta<1/2$.

This means that if $|B|=O\left(\exp^{\gamma/G}\right)$, as occurs in the Python's lunch cases mentioned above, the probability of approximately isometric encoding of $A$ with $
\delta=O\left(\exp^{-\gamma\zeta/G}\right)$ is $1-O\left(\exp^{-\gamma(2\zeta-1)/G}\right)$. Second, Chebyshev's inequality generally gives a very loose bound on the probability and under some reasonable assumptions about the nature of the ensemble, the bound \eqref{eq:probche} can be made stricter. For instance, by assuming a Gaussian distribution for the entries of $R$, the probability of approximate isometric encoding approaches 1 exponentially in $|B|$ rather than polynomially. This also allows us to make $\zeta$ larger than $1/2$, obtaining a stricter bound on the corrections to the overlap with very high probability. We can therefore conclude that in this case too the whole Hilbert space $A$ is approximately isometrically encoded in $B$ with very high probability.

So far we have shown that, if the map $V$ is weakly non-isometric, it is enough to have a complete basis of (semiclassically well-defined) states satisfying \eqref{eq:BHrandomoverlaps} to ensure that the whole bulk EFT Hilbert space $A$ is approximately isometrically encoded in $B$, i.e., $V$ is an approximate isometry. This naturally implies that the set of semiclassical states $S_{SC}$ is also approximately isometrically encoded. We can now ask whether all bulk EFT operators mapping between semiclassical states can be approximately state-independently reconstructed. We will show below that the approximately isometric encoding of all of $A$ turns out to be a sufficient condition for all operators defined on $A$  to be approximately state-independently reconstructible on $B$ according to Definition \ref{def:apprec}.

\begin{proposition}
\label{propstate}
    Let $V:A\to B$ be an approximately isometric map between two Hilbert spaces, i.e.
    \begin{equation}
        |\langle\psi_i|(V^\dagger V-I)|\psi_j\rangle|\leq \delta \quad \quad \forall \ket{\psi_i},\ket{\psi_j}\in A.
        \label{eq:isoassumption}
    \end{equation}
    For any unitary operator $O$ acting on $A$ define the unitary operator 
    \begin{equation}
        \tilde{O}=WOW^\dagger
    \end{equation}
    acting on $B$, where $W=UV_0Q$, $V_0$ is a $|B|\times |A|$ rectangular diagonal matrix with all 1's on the main diagonal, and $U,Q$ are the unitary matrices yielding the singular value decomposition of $V$. Then $\tilde{O}$ is an approximately state-independent, unitary reconstruction of $O$ according to Definition \ref{def:apprec} with
    \begin{equation}
        \epsilon=\delta\sqrt{1+\delta}+2\left(1+\delta\right)\left(\sqrt{1+\delta}-1\right)+\sqrt{1+\delta}\left(\sqrt{1+\delta}-1\right)^2\approx 2\delta
    \end{equation}
    with the last approximation valid for $\delta\ll 1$. 
\end{proposition}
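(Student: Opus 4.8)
The plan is to diagonalize the problem through the singular value decomposition of $V$ and reduce the reconstruction error to a single commutator whose size is governed by how far the singular values $\{\sigma_i\}$ sit from unity. Writing $V=U\Sigma Q$ with $\Sigma$ the $|B|\times|A|$ matrix of singular values, the map $W=UV_0Q$ is a genuine isometry since $|A|\le|B|$ forces $V_0^\dagger V_0=I_A$, hence $W^\dagger W=I_A$. Two polar-type identities then follow by direct substitution: $V=WS$ and $W^\dagger V=S$, where $S:=(V^\dagger V)^{1/2}=Q^\dagger\,\mathrm{diag}(\sigma_i)\,Q$ is positive with spectrum $\{\sigma_i\}$. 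Using $W^\dagger W=I_A$ and $O^\dagger O=I_A$ I would compute directly that, for every $\ket{\psi}\in A$,
\begin{equation}
(\tilde O V-VO)\ket{\psi}=WOW^\dagger (WS)\ket{\psi}-(WS)O\ket{\psi}=W[O,S]\ket{\psi},
\end{equation}
so that, $W$ being an isometry, the reconstruction error is \emph{exactly} $\lVert[O,S]\ket{\psi}\rVert$; crucially the same fixed $\tilde O$ appears for all $\ket{\psi}$, which is the uniformity demanded by Definition~\ref{def:apprec}.

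The second step is to convert the hypothesis \eqref{eq:isoassumption} into a spectral statement. Setting $\ket{\psi_i}=\ket{\psi_j}=\ket{\psi}$ in \eqref{eq:isoassumption} gives $|\bra{\psi}(V^\dagger V-I)\ket{\psi}|\le\delta$ for all unit $\ket{\psi}$; since $V^\dagger V-I$ is Hermitian its numerical radius equals its operator norm, so $\lVert V^\dagger V-I\rVert_{\mathrm{op}}\le\delta$ and the eigenvalues $\sigma_i^2$ lie in $[1-\delta,1+\delta]$, i.e. $\sigma_i\in[\sqrt{1-\delta},\sqrt{1+\delta}]$. Consequently $\lVert S-I\rVert_{\mathrm{op}}=\max_i|\sigma_i-1|$ is $O(\delta)$. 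Since $[O,S]=[O,S-I]$, the triangle inequality together with the unitarity of $O$ gives $\lVert[O,S]\ket{\psi}\rVert\le 2\lVert S-I\rVert_{\mathrm{op}}$ uniformly in $\ket{\psi}$, which already establishes approximate state-independent reconstructibility with $\epsilon=O(\delta)$. To recover the precise closed form it suffices to retain the cruder factors $\lVert V\rVert_{\mathrm{op}}\le\sqrt{1+\delta}$ and $\sigma_i\le\sqrt{1+\delta}$ rather than simplifying via $W^\dagger W=I_A$; the three summands displayed in the stated $\epsilon$ are precisely the cross terms of this looser accounting and sum to $4(1+\delta)(\sqrt{1+\delta}-1)\approx 2\delta$.

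The main obstacle is not the estimate itself but the two structural points surrounding it. First, $\tilde O=WOW^\dagger$ is literally only a partial isometry on $B$ (a genuine unitary only when $|A|=|B|$), so I would complete it to $\tilde O=WOW^\dagger+(I_B-WW^\dagger)$ and verify $\tilde O^\dagger\tilde O=I_B$; this completion is harmless for the bound because $V\ket{\psi}=WS\ket{\psi}\in\mathrm{Im}(W)$, whence $(I_B-WW^\dagger)V\ket{\psi}=0$ and the added piece never acts on the relevant vector. Second, and more conceptually, the entire argument hinges on promoting the \emph{quadratic-form} hypothesis \eqref{eq:isoassumption} to an \emph{operator-norm} bound on $V^\dagger V-I$; this is clean only because $V^\dagger V-I$ is Hermitian, and it is this step that forces the error bound to be uniform over all states. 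By contrast, a naive term-by-term bounding of $\lVert(\tilde O V-VO)\ket{\psi}\rVert^2$ loses the cancellation between the diagonal contributions $\bra{\psi}V^\dagger V\ket{\psi}$ and the cross term, degrading the estimate to $O(\sqrt{\delta})$; the commutator reorganization above is therefore essential to obtaining the correct $O(\delta)$ scaling.
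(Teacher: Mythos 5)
Your proof is correct, and it takes a genuinely different route from the paper's. The paper uses the \emph{additive} splitting $V=W+\Delta$ (with $\Delta=UV_1Q$), expands $\tilde O V - VO=(V-\Delta)O(V^\dagger-\Delta^\dagger)V-VO$ into four terms, and bounds each by the triangle inequality and submultiplicativity of the spectral norm, which is exactly how the three summands in the stated $\epsilon$ arise. You instead use the \emph{multiplicative} polar factorization $V=WS$ with $S=(V^\dagger V)^{1/2}$, which collapses the error identically to $\|[O,S]\ket{\psi}\|$ and yields the uniform bound $2\|S-I\|_{\mathrm{op}}\leq 2(1-\sqrt{1-\delta})\approx\delta$ --- in fact slightly sharper than the paper's $\approx 2\delta$, and certainly sufficient since the proposition's $\epsilon$ dominates your bound. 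Both arguments rest on the same key input, namely promoting the quadratic-form hypothesis to $\|V^\dagger V-I\|_{\mathrm{op}}\leq\delta$ via Hermiticity, which the paper does in its equation bounding $\sigma_{\max}^{(V)}$. Your algebraic observation that the paper's three summands sum exactly to $4(1+\delta)(\sqrt{1+\delta}-1)$ is correct, though your sketch of how to ``recover'' that precise form from the looser accounting is gestural rather than a derivation; this is harmless because you have already proved a strictly stronger bound. Finally, you correctly flag a minor imprecision in the statement that the paper glosses over: when $|A|<|B|$, $WOW^\dagger$ satisfies $WOW^\dagger(WOW^\dagger)^\dagger=WW^\dagger\neq I_B$ and is only a partial isometry, so it must be completed to $WOW^\dagger+(I_B-WW^\dagger)$ to be a genuine unitary on $B$; your check that the added piece annihilates $V\ket{\psi}\in\mathrm{Im}(W)$ shows this completion costs nothing in the estimate.
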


\begin{proof}
    For any operator $O$ acting on $A$, we are interested in bounding the spectral norm of $\tilde{O}V-VO$:
    \begin{equation}
        ||\tilde{O}V-VO||_2\equiv\max_{\ket{\psi}\in A}||(\tilde{O}V-VO)\ket{\psi}|| \, ,
    \end{equation}
    and we recall that the spectral norm of a matrix $M$ is equal to the largest singular value of $M$, $||M||_2=\sigma_{max}^{(M)}$.  In what follows all states are normalized to 1. 
    
    First, consider the singular value decomposition of $V$
    \begin{equation}
        V=UV_DQ=U\left(V_0+V_1\right)Q
    \end{equation}
    where $V_0$ is a rectangular diagonal matrix with all 1's on the main diagonal and $V_1=V_D-V_0$. Then define $V=W+\Delta$, with $W=UV_0Q$ and $\Delta=UV_1Q$. Note that all singular values of $W$ are 1 by construction  ($\sigma_i^{(W)}=1 \quad \forall i=1,...,|A|$) and therefore $W$ is isometric $W^\dag W=I$. Given the assumption \eqref{eq:isoassumption}, we can bound the largest singular value of $V$:
    \begin{equation}
        \left(\sigma_{max}^{(V)}\right)^2-1=\max_{\ket{\psi}\in A}\langle\psi|V^\dag V|\psi\rangle-1\leq \max_{\ket{\psi}\in A}|\langle\psi|(V^\dag V-I)|\psi\rangle|\equiv \lambda_{max}^{(V^\dag V-I)}\leq \delta,
        \label{eq:boundV}
    \end{equation}
    where $\lambda_{max}^{(V^\dag V-I)}$ labels the absolute value of the largest magnitude eigenvalue of $(V^\dag V-I)$. Equation \eqref{eq:boundV} immediately implies $||V||_2=\sigma_{max}^{(V)}\leq \sqrt{1+\delta}$. Because by construction $\sigma_i^{(\Delta)}=\sigma_i^{(V)}-1$, this also implies the bound $||\Delta||_2=\sigma_{max}^{(\Delta)}\leq \sqrt{1+\delta}-1$.

    Now consider the operator $\tilde{O}=WOW^\dag$ on $B$. Note that since $O$ is unitary by assumption and $W$ is isometric, $\tilde{O}$ is unitary. We can now bound the spectral norm of interest:
    \begin{equation}
    \begin{aligned}
        ||\tilde{O}V-VO||_2&=||\left(V-\Delta\right)O\left(V^\dag-\Delta^\dag \right)V-VO||_2\\
        &\leq ||VO\left(V^\dag V-I\right)||_2+||\Delta O V^\dag V||_2+||VO\Delta^\dag V||_2+||\Delta O \Delta^\dagger V||_2\\
        &\leq ||V||_2||V^\dag V-I||_2+2||\Delta||_2||V||_2^2+||\Delta||_2^2||V||_2
        \end{aligned}
    \end{equation}
    where we used the definition of $W$, the triangle inequality, the submultiplicative property of the norm $||AB||_2\leq ||A||_2||B||_2$, $||O||_2=1$, and $||M||_2=||M^\dag||_2$.\footnote{It is easy to check that the non-zero singular values of a matrix (and therefore the largest one) are equal to those of its conjugate.} We have already bounded $||\Delta||_2$ and $||V||_2$. Moreover, in equation \eqref{eq:boundV} we saw that the absolute value of the largest magnitude eigenvalue of $V^\dag V-I$ is bounded as $\lambda_{max}^{(V^\dag V-I)}\leq \delta$. Notice that $V^\dag V-I$ is real and symmetric, and therefore $\lambda_{max}^{(V^\dag V-I)}=\sigma_{max}^{(V^\dag V-I)}=||V^\dag V-I||_2\leq \delta$. Putting together all the bounds on the norms, we finally obtain
    \begin{equation}
        ||\tilde{O}V-VO||_2\leq \epsilon\equiv \delta\sqrt{1+\delta}+2\left(1+\delta\right)\left(\sqrt{1+\delta}-1\right)+\sqrt{1+\delta}\left(\sqrt{1+\delta}-1\right)^2\approx 2\delta
    \end{equation}
    where in the last approximation we assumed $\delta\ll 1$ and kept only the leading order in $\delta$.
\end{proof}

To summarize, we showed that weak non-isometry and the existence of a basis of (semiclassical) states satisfying equation \eqref{eq:BHrandomoverlaps} with $\sigma^2=O(1/|B|)$ and $|B|\gg 1$ are together sufficient to ensure that the bulk-to-boundary map $V:A\to B$ is an approximate isometry, and that all unitary operators on $A$ can be approximately state-independently reconstructed by unitary operators on $B$.

To interpret these  results within the holographic framework, let us start by considering a weakly non-isometric bulk-to-boundary map with $|B|=O\left(e^{\gamma/G}\right)$ and $|A|\leq |B|^\alpha$ with $\alpha<1$. Examples involving black holes with this kind of encoding map were described in Sec. \ref{sec:weakvsstrong}. Our results then indicate that the whole bulk EFT Hilbert space $A$ is approximately isometrically encoded in the fundamental Hilbert space $B$ up to $e^{-\gamma/G}$-suppressed corrections with probability approaching 1 at least exponentially fast in $1/G$.\footnote{This statement does not make any assumptions about the size of $|A|$ or the probability distribution of corrections to the EFT overlaps. If $|A|$ is also exponential in $1/G$, Proposition \ref{propiso} guarantees that the probability of approximately isometric encoding is doubly-exponentially close to 1. The same result can be also obtained for $|A|=O(1)$ under assumptions on the probability distribution of overlaps that were discussed above.} Similarly, all operators acting on $A$ are approximately state-independently reconstructible up to corrections suppressed by $\epsilon_{QG}=O\left(e^{-\gamma/G}\right)$. Hence, the set of bulk unitaries mapping between semiclassical states can also be reconstructed in an approximately state-independent manner. Therefore, we can conclude that bulk reconstruction of the EFT is approximately state-independent in such holographic settings. Since  the  reconstruction  of the EFT (with its naive inner product that ignores quantum gravity corrections) is always approximate up to $\epsilon_{QG}$-suppressed errors, our reconstruction is as exact as it can be. 
%We can then conclude that in this case bulk reconstruction is simply state-independent at any attainable level of accuracy.  {\color{red} Here I have the same confusion that I mentioned earlier.  There is an exact underlying duality and an associated eact reconstruction.  Is this supposed to be saying that a certain kind of ``semiclassical'' reconstruction via HKLL or something is approximate?  I think we should clarify a bit what we want to mean by bulk reconstruction.}

%Note that, as we have already mentioned, the case we just analyzed is exactly the situation faced when considering 
In Python's lunch spacetimes, the bulk EFT Hilbert space of the lunch $A$ (i.e., the code subspace) is chosen to be much smaller than the corresponding fundamental Hilbert space $B$, whose size is set by $\exp\left(\mathcal{A}(\gamma_{aptz})/4G\right)$, where $\gamma_{aptz}$ is the ``appetizer'' non-minimal QES. The existence of $\gamma_{aptz}$ is associated with the presence of wormhole contributions to the overlaps between different semiclassical states of the Python, signaling a weakly non-isometric map.\footnote{We thank Arvin Shahbazi-Moghaddam for discussions on this point.} This can also be concluded from the use of postselection in the bulk-to-boundary map in tensor network toy-models of Python's lunches  \cite{brown2020python}. Notice that in the setups studied in  \cite{brown2020python,Engelhardt:2021mue,Engelhardt:2021qjs,Engelhardt:2023bpv}, $\gamma_{aptz}$ is a geometric surface with area $O(G^0)$ and the correction to overlaps is therefore $e^{-\gamma/G}$-suppressed.  As we have seen, this implies that the whole bulk Hilbert space is approximately isometrically encoded, which in turns implies that reconstruction is approximately state-independent for all bulk operators up to $\epsilon_{QG}$-suppressed errors. This is so even though the reconstruction of even simple operators acting in the Python's lunch is exponentially complex \cite{brown2020python,Engelhardt:2021mue,Engelhardt:2021qjs,Engelhardt:2023bpv}. 

%\sa{This might need to be modified. I think not even state-dependent reconstruction can work in this case. If overlaps (and norms as well) are only polynomially suppressed, even for semiclassical states, then operators mapping between different states cannot be unitarily reconstructed with exponential precision not even state-dependently, see arguments at beginning of 3.2. What are the implications of this? Probably that there are some perfectly fine semiclassical ops in the bulk that just can't be reconstructed unitarily precisely, and so basically don't make sense in the boundary. I think this could be a solid sign of EFT breakdown. Notice that it does not require a strong non-isometry like the old BH one, it's the same even for weak non-isometry like the closed universes setups. This seems to point to the fact that these semiclassical theories just don't make sense in holography (if there's a sense in which they make sense, it's very non-standard (like options c and d of essay with Pratikx)'').}
 What happens when $|A|\ll |B|$ and $|B|$ is sub-exponential in $1/G$? Then $\delta$ and $\epsilon$ appearing in Propositions \ref{propiso} and \ref{propstate} are at best polynomially suppressed in $G$. One example of this situation, for which $|B|=G^0$, can be obtained in the setting involving closed universes discussed in \cite{Antonini:2023hdh} (for a sufficiently small bulk Hilbert space $A$ and a sufficiently large entanglement between the closed universe and the two auxiliary AdS spacetimes). In this case, the approximately state-independent reconstruction of bulk operators we described is only polynomially precise. Can a state-specific reconstruction do better? From equation \eqref{eq:differentscalings}, we can conclude that the answer is no. Bulk operators mapping between states whose norms differ polynomially in $G$ under the map $V$ can only be unitarily reconstructed up to errors polynomial in $G$. The only way to obtain an exponentially precise bulk reconstruction would be to either allow the boundary reconstruction of bulk unitaries to not be unitary, or modify the bulk-to-boundary map by introducing a state normalization after the action of the map. Clearly, both options are undesirable: the former would imply that bulk physical operations do not correspond to boundary physical operations; the latter would lead to a state-dependent and non-linear bulk-to-boundary map. We are therefore led to conclude that some apparently well-defined semiclassical EFT operations in the bulk simply cannot be implemented in the fundamental description beyond polynomial (in $G$) precision. In setups similar to that described in \cite{Antonini:2023hdh}, the errors are $O(G^0)$. This is suggestive of a breakdown of bulk EFT, which is not a reliable low-energy description of the fundamental theory at all orders in $G$ \cite{antonini2024holographic}.

It is also interesting to ask is what happens when $V$ is weakly non-isometric, the set of semiclassical states is approximately isometrically encoded up to $O\left(e^{-\gamma/G}\right)$-suppressed corrections, $|B|=O\left(e^{-\gamma/G}\right)$, and $|A|/|B|=O(G^\beta)$ for some $\beta\geq 0$.   In this case, $\delta$ and $\epsilon$ appearing in Propositions \ref{propiso} and \ref{propstate} are again polynomially suppressed in $G$, and our bounds do not guarantee that the approximately state-independent reconstruction of bulk operators is exponentially precise for generic operators. However, it seems plausible that all unitary operators mapping between semiclassical states are approximately state-independently reconstructible up to $\epsilon_{QG}$-suppressed errors on the subset of semiclassical states, and therefore bulk reconstruction is approximately state-independent. We leave a further exploration of this interesting possibility to future work.

Finally, a similar result was obtained in the context of the Haar random map model of \cite{akers2022black} (see Sec.~5.4). In that case the map is strongly non-isometric, but a weakly non-isometric map arises by restriction of the domain to a subspace $C \subset A$ such that $|C|=|B|^\gamma$ with $\gamma<1$. Assuming $|B|$ is large and using the properties of the Haar random map, the authors showed that $C$ is approximately isometrically encoded in $B$. Consequently, all operators acting on $C$ can be approximately state-independently reconstructed with high probability. A first difference in our argument is that we did not make specific assumptions about the random nature of the bulk-to-boundary map $V$, but rather directly employed the random behavior of overlaps between semiclassical states signaled by wormhole corrections in the gravitational path integral. This allowed us to obtain a more immediate proof of our statements which applies to generic gravitational setups displaying this behavior. A second difference is that our boundary operators $\tilde{O}$ reconstructing bulk unitaries $O$ are exact unitaries, whereas in \cite{akers2022black} they are approximate unitaries built using the approximately isometric map. Since we would like to construct physical boundary observables corresponding to physical bulk observables, having a (physical) exact unitary seems preferrable.\footnote{Naturally, this difference is subtle because it seems plausible that exact unitaries exist within the construction of \cite{akers2022black} which approximate the approximate unitaries well enough to not alter the main result.}

\subsubsection{Strong non-isometry}
\label{sec:strong}
%\sa{Keep going}

%\sa{Explain that now semiclassical states need not be approximately isometrically encoded, but they are under reasonable assumptions about the distribution (measure concentration results), and this is in accordance with the results of \cite{akers2022black}, which suggest that inner products in this context are approximately preserved for semiclassical states with high probability.\footnote{In the model introduced in \cite{akers2022black}, bulk semiclassical states have subexponential complexity, and all subexponential states are approximately isometrically encoded with high probability.}}

If $V$ is strongly non-isometric and thus $|A|>|B|$, clearly not all of $A$ can be approximately isometrically encoded in $B$. In fact, for any state $\ket{\psi_N}$ in the kernel of $V$ we have $|\bra{\psi_N}(V^\dag V-I)\ket{\psi_N}|=1$. In other words, $V$ is not an approximate isometry in this case. Therefore, there can be no approximate state-independent reconstruction for all operators on $A$. On the other hand, since we are only interested in semiclassical EFT states around a fixed background and unitary operators mapping between them, one might hope that restricting to the subset $S_{SC}$ of semiclassical states and to the appropriate bulk EFT operators mapping between them, an approximate state-independent reconstruction for all such operators might exist.  In fact, we will show that we are not guaranteed, without further assumptions, that all semiclassical states are approximately isometrically encoded, and that even if we make those assumptions and $S_{SC}$ is approximately isometrically encoded,  the reconstruction of operators still cannot be approximately state-independent.

First of all, notice that now we are not guaranteed that the set of all semiclassical states $S_{SC}$ is approximately isometrically encoded with high probability, and therefore satisfies the necessary condition for approximate state-independent reconstruction. To obtain such a result, we would need to make some assumption about the size of $S_{SC}$ and the specific probability distribution governing the corrections to the overlaps between semiclassical states \eqref{eq:BHrandomoverlaps}. This was achieved for example in \cite{akers2022black} by a) only considering states that are a sub-exponential superpositions of a basis of semiclassical states, b) bounding the size of the set of all such sub-exponential states, and c) using measure concentration results for the specific Haar random map considered there.

Let us assume that in our case, under analogous reasonable assumptions, the set of all semiclassical states $S_{SC}$ is similarly approximately isometrically encoded. Then, the results of \cite{akers2022black} suggest that reconstruction of bulk EFT operators mapping between them is nonetheless state-dependent (in particular, it is not even approximately state-independent). In fact, they imply that\footnote{Here we are adapting the results of \cite{akers2022black} to our setup and notation. For the exact result see Theorem 5.1 and Appendix H in \cite{akers2022black}.} if a) $\log |A|$ is sub-exponential in $\log |B|$ and b) all sub-exponential (in $\log|B|$) operators are approximately state-independently reconstructible on all sub-exponential states with errors $\epsilon=O\left(1/|B|^\gamma\right)$ for some $\gamma>0$, then $|A|/|B|\leq 1+\eta$, with $\eta=O\left(1/|B|^\gamma\right)$. In other words, the kernel of $V$ must be (almost) empty. Although, as we have discussed, the identification between sub-exponential states/operators and semiclassical states/operators can be subtle, the proof of this result given in \cite{akers2022black} relies only on the existence of an approximately isometrically encoded basis of states for $A$ (which in \cite{akers2022black} was taken to be a basis of sub-exponential states).\footnote{We thank Geoff Penington for a discussion on this point.} Since a basis of semiclassical states for the bulk Hilbert space $A$ certainly exists,\footnote{In fact, the bulk EFT Hilbert space is defined as the span of a basis of semiclassical states, e.g. different states for bulk matter on top a fixed geometry, or different geometric microstates as explained in Sec. \ref{sec:weakvsstrong}. } the same result can be obtained for semiclassical states and bulk EFT operators mapping between them under the assumption that $S_{SC}$ is approximately isometrically encoded. 
%\sa{It would be nice to verify carefully that the proof of Theorem 5.1 of \cite{akers2022black} really only relies on this, and maybe make this argument more precise with a calculation. Also, I'm claiming here that a basis of semiclassical states certainly exists (I used this also above in this section). The intuition I have is that we define the bulk EFT by putting DOF, say qubits, on top of a fixed geometry. So all states differing from a semiclassical state by a flip of an $O(1)$ number of qubits should also be semiclassical. This implies that the computational basis is made of semiclassical states, and therefore a basis of semiclassical states exists. Do we agree? Do we have a more precise way to claim this? Should we spell it out or is it obvious?} 
Therefore, the set of semiclassical bulk EFT operators cannot all be approximately state-independently reconstructed on $S_{SC}$ if $V$ is strongly non-isometric. This also implies that approximate isometric encoding (in this case of $S_{SC}$) is necessary but not sufficient for approximate state-independence.

As a result, if we insist on keeping all the possible semiclassical states in $A$, bulk reconstruction is necessarily state-dependent in the strongly non-isometric case. As it was already pointed out in \cite{akers2022black}, the only way to obtain approximate state-independent bulk reconstruction is to restrict our attention to a subspace of $A_\alpha \subset A$ of dimension at most $|B|^{\alpha}$ with $\alpha<1$. This way, we simply reduce to the weakly non-isometric case analyzed above, and we obtain approximate state-independent reconstruction for all operators acting on the subspace $A_\alpha$. This result is also in complete accordance with the results of \cite{Hayden:2018khn} in the presence of black holes, which suggest that approximately state-independent bulk reconstruction should work for any subspace of $A$ of size $e^{\alpha S_{BH}}$ with $\alpha<1$, where $|B|=e^{S_{BH}}$.

However, because semiclassical states are a (over)complete basis for $A$, this procedure clearly also implies the restriction to a subset $S_0\subset S_{SC}$ of semiclassical states as well as a restriction on the allowed EFT operators. In particular any bulk EFT operator mapping a state in $S_0$ to a state in $S_0^\perp \cap S_{SC}$, which seems to be perfectly well-defined from a bulk semiclassical EFT point of view, must be deemed inadmissible. This approach seems rather unphysical from a bulk EFT point of view, especially given the arbitrariness in the choice of subspace. For this reason, it seems reasonable to conclude that, in the presence of a strongly non-isometric map, one should simply keep the whole bulk EFT Hilbert space $A$ and accept the presence of state-dependent bulk reconstruction. This is the point of view adopted by the authors of \cite{akers2022black}.
Although it currently is the best-understood and a physically well-motivated option, embracing strong non-isometry and state-dependence is not free from issues \cite{bousso2020unitarity,bousso2022islands, antonini2024holographic}. For instance, in the context of black hole evaporation, corrections to the overlaps between semiclassical states become polynomial in $G$ when $\mathcal{A}_{hor}=O(G)$  but still $|B|\gg 1$, and we incur in the EFT breakdown we already discussed in the weakly non-isometric case. Note that this happens before the black hole itself becomes Planckian, and thus before EFT is expected to break down. For instance, the corrections are already polynomial in $G$ when $\mathcal{A}=O\left(G\log \frac{1}{G}\right)$ and therefore much larger than the Planck scale.

\paragraph{Summary: } Before moving on, let us summarize the results of this subsection. We found that if the overlap between any two semiclassical states is approximately preserved by the bulk-to-boundary map (as suggested by gravitational path integral computations) and the map is weakly non-isometric, then the map is an approximate isometry and the reconstruction of bulk operators is approximately state-independent. In particular, if the corrections to the overlaps are suppressed by $e^{-\gamma/G}$, this approximate reconstruction is as accurate as it can be, because quantum gravity effects (which are unaccounted for in this discussion) will also give contributions at the same order. On the other hand, if the map is strongly non-isometric, bulk EFT operators can only be reconstructed state-dependently, even when restricting to bulk semiclassical operators and bulk semiclassical states. Finally, regardless of whether the map is weakly or strongly non-isometric, if the corrections to the inner products between semiclassical states are only polynomially suppressed in $G$---as it is the case e.g. for the closed universe setup of \cite{Antonini:2023hdh} and for evaporating black holes at late times, when $\mathcal{A}_{hor}=O(G)$---then bulk EFT operators can only be reconstructed to polynomial precision $G$. This is a sign that the bulk EFT is not actually valid to all orders in $G$ as one might have naively expected.\footnote{Notice that in some examples of this class of setups, radically different semiclassical bulk descriptions corresponding to the same boundary state can be constructed \cite{bousso2020unitarity,bousso2022islands, antonini2024holographic}. This also points to an ambiguity in the holographic dictionary, which could be related to the breakdown of EFT \cite{antonini2024holographic}.} We leave further exploration of the implications of this result to future work.

\subsection{Non-isometry, state-dependence, and causal connectivity}
\label{sec:causal}

In many known examples in holography, the existence of a global causal horizon, i.e., the boundary of the intersection between the bulk causal past and the bulk causal future of the whole boundary,\footnote{The bulk EFT Hilbert space is only defined perturbatively in $G$, and we are considering perturbative bulk-to-boundary maps. Thus, we do not consider non-perturbative effects such as a change of the global causal structure of a spacetime due to the complete evaporation of a black hole. Therefore, for our purposes a black hole event horizon is global.} or a disconnected patch of the gravitational spacetime (i.e., a closed universe) is strictly related to the presence of a non-isometric global bulk-to-boundary map and hence state-dependent reconstruction. In this subsection we will try to sharpen this relationship.  We will first argue that non-isometry implies the existence of a bulk region which is causally disconnected from the boundary where the dual theory is defined. Then we will conjecture that the existence of a global horizon implies non-isometry and state-dependence under reasonable assumptions about the bulk EFT Hilbert space. Our goal is to clarify what we know about the relationship between non-isometry, state-dependence, and causal connectivity, and to set out questions that still need to be answered.

The first implication is easy to argue for. If a holographic global bulk-to-boundary map $V:A\to B$ is non-isometric, the results of Sec.~\ref{sec:QI} tell us that there must exist a subregion of the bulk spacetime for which global bulk reconstruction is necessarily state-dependent. But we know that reconstruction in the domain of dependence of the causal wedge is state-independent and can be explicitly obtained via the HKLL dictionary\footnote{Note that in the presence of a non-minimal QES, state-independent reconstruction can be extended beyond the domain of dependence of the causal wedge to include the entire ``simple wedge'' \cite{Engelhardt:2021mue}. Therefore, the state-dependently reconstructible region in this case must be outside the simple wedge, and therefore again causally disconnected from the boundary.} \cite{Balasubramanian:1998sn,Balasubramanian:1999ri,Hamilton:2005ju,hamilton2006holographic,hamilton2007local,hamilton2008local}. This automatically implies that the region that requires state-dependent reconstruction is outside the domain of dependence of the causal wedge of the whole dual theory for at least some states in the bulk EFT Hilbert space.\footnote{Recall that we are including in the definition of ``dual theory'' all degrees of freedom present in the fundamental description, including any non-gravitational bath coupled to the holographic theory.} 
Since the holographic dual theory lives on the AdS boundary,\footnote{The bulk EFT representation of any non-gravitational auxiliary bath (such as that considered in  \cite{almheiri2020replica}) coupled to the holographic theory living on the boundary is outside the gravitational bulk spacetime, and coupled to it by transparent boundary conditions.} the region  must be outside the domain of dependence of the causal wedge of the whole boundary, namely it must be causally disconnected from the whole boundary. Clearly, this is  possible only if the spacetime contains a global future and past (event) horizon, or if there is a disconnected patch of spacetime without a boundary, i.e. a closed universe. 

There is a different, intuitive way to understand why a bulk region causally disconnected from the boundary must exist if the global encoding is non-isometric. Recall that the  Euclidean wormhole contributions to the gravitational path integral that contribute to the inner product of states, thus signaling non-isometric encoding of the EFT, also lead to the existence of compact QESs. Since compact QESs must be causally disconnected from the boundary \cite{Engelhardt:2014gca}, wormhole corrections signaling non-isometry also imply that the bulk spacetime must contain a region causally disconnected from the boundary. 

With this intuition in mind, we can now formulate a conjecture that the presence of a global horizon implies that the global bulk-to-boundary map is non-isometric. Clearly, this conjecture requires some assumption about the bulk EFT Hilbert space in order to be true. As a trivial example of why this is the case, if we (artificially) define the bulk EFT Hilbert space of a black hole interior to not have any DOF whatsoever, the bulk-to-boundary map is (trivially) isometric. Similarly, if we consider a single-sided black hole and define the bulk EFT to only include ingoing modes which do not heavily backreact on the geometry, the whole spacetime remains in the domain of dependence of the causal wedge of the boundary for all states in the code subspace. Therefore, state-independent HKLL reconstruction can be successfully implemented and the global map is isometric. 

But what defines a ``physically reasonable'' bulk EFT Hilbert space? A naive choice could be to assign one degree of freedom to each Planck volume of a nice slice of the interior. This naturally leads to a strongly non-isometric map for a sufficiently late time slice, because the volume of such slices grows without bound. In the following conjecture we make a milder assumption, which nonetheless leads to a non-isometric map. We will focus here on the case of a connected bulk spacetime with an event horizon, and will comment on the closed universe case at the end of this subsection.

\begin{conjecture}
    Consider a black hole spacetime with a global (event) horizon. Then a) any bulk EFT Hilbert space should contain degrees of freedom associated with interior and exterior radially outgoing modes,\footnote{Of course, the radially outgoing modes inside the horizon never leave the black hole interior.} and b) any such bulk EFT Hilbert space is non-isometrically encoded in the fundamental black hole (i.e. boundary) Hilbert space.
    \label{conj:hor}
\end{conjecture}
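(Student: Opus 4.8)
Since this statement is a conjecture, the plan is to give a physically-motivated argument rather than a fully rigorous proof, and to isolate which steps are controlled and which are the genuine assumptions, treating parts (a) and (b) separately and using (a) as the input that feeds the counting argument behind (b). For part (a), I would argue from locality of the EFT together with the equivalence principle: a freely falling observer crossing a large global horizon sees locally ordinary low-curvature physics, so the local excitations available just inside the horizon must match those just outside, and outgoing wavepackets are among them on both sides. The exterior outgoing modes are manifestly in any sensible EFT, since they constitute the Hawking flux reaching the asymptotic/boundary region; and one cannot consistently keep these while discarding their interior partners, because the near-horizon vacuum (Hartle--Hawking/Unruh) entangles interior and exterior outgoing modes pairwise. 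I would therefore phrase (a) as the statement that any truncation retaining the exterior EFT but omitting interior outgoing degrees of freedom fails to be a local EFT, thereby ruling out the artificial ``no interior DOF'' and ``ingoing only'' truncations discussed just above the conjecture.

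For part (b), the core is a pigeonhole/counting argument built on two inputs: (i) the Bekenstein--Hawking entropy counts fundamental states, so the boundary Hilbert space has $\dim B = e^{S_{BH}}$; and (ii) the interior outgoing modes mandated by (a) proliferate. The mechanism for (ii) is the unbounded growth of the interior on a nice slice: as the slice is pushed to late Schwarzschild time, near-horizon blueshift packs ever more independent outgoing wavepackets (within any fixed EFT cutoff) behind the horizon, with one Hawking partner added per light-crossing time. Concretely, I would construct an orthonormal family $\{\ket{\psi_p}\}$ of interior EFT states---occupation states of distinct outgoing modes, or the geometric shell microstates of Sec.~\ref{sec:weakvsstrong}---satisfying $\bra{\psi_p}\ket{\psi_{p'}}_{\rm EFT}=\delta_{pp'}$, whose number exceeds $e^{S_{BH}}$ for a sufficiently late slice, so that $\dim A > \dim B$. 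Since at most $\dim B$ vectors can be mutually orthonormal in $B$, the images $\{V\ket{\psi_p}\}$ cannot all remain orthonormal, so $V$ fails to preserve the inner product and is non-isometric; more strongly, rank--nullity forces $\dim(\ker V)\geq \dim A - \dim B > 0$, so $V$ is \emph{strongly} non-isometric in the sense of Sec.~\ref{sec:QI}. By Theorem~\ref{thm:exactrecon} this is equivalent to state-dependent reconstruction, and it provides the converse to the causal-disconnection result of Sec.~\ref{sec:causal}.

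The main obstacle---and the reason this must remain a conjecture---is making the two dimension statements precise and uniform over \emph{all} admissible EFT Hilbert spaces. First, $\dim B = e^{S_{BH}}$ is the ``standard interpretation'' of the horizon entropy as an exact state count, itself an assumption about the fundamental theory. Second, and harder, one must show that (a) \emph{forces} $\dim A > \dim B$ rather than merely permitting it: this requires a cutoff-independent argument that the outgoing-mode content alone overflows the boundary dimension, i.e. that no choice of EFT satisfying (a) can keep $\dim A \le \dim B$. The growing-interior/blueshift estimate makes this compelling but depends on the regulator and on a controlled definition of ``independent modes on a nice slice,'' which is exactly the ill-understood ingredient. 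A secondary subtlety is that for an eternal black hole one must specify the slice on which $A$ is defined, since the overflow is a statement about late nice slices; I would handle this by noting that a global event horizon persists for all time, so sufficiently late slices always exist and the conclusion is insensitive to the choice once the slice is late enough.
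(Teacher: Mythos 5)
Your motivation for part (a) is in the same spirit as the paper's: the paper also appeals to QFT in curved spacetime and to an infalling observer who can detect and alter outgoing modes on both sides of the horizon, adding only the technical point that, since fixed-Rindler-frequency states have divergent energy at the horizon, the relevant degrees of freedom must be smeared wavepackets a small but finite distance from the horizon. That part of your argument is fine.

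For part (b), however, there is a genuine gap: your pigeonhole argument only addresses the \emph{strongly} non-isometric regime, $\dim A > e^{S_{BH}}$, whereas the conjecture is deliberately formulated so that it applies to \emph{any} EFT Hilbert space containing interior outgoing modes, including ones with $\dim A \ll \dim B$. The paper explicitly flags the ``one degree of freedom per Planck volume of a growing nice slice'' route as the naive choice it is moving beyond, precisely because the conjecture rests on a \emph{milder} assumption. The mechanism the paper actually invokes is entirely different: following \cite{Engelhardt:2021qjs}, a small number of interior outgoing wavepackets, treated as independent of (i.e.\ not necessarily entangled with) their exterior partners, contains states that blueshift under backward time evolution, backreact non-perturbatively, and generate a past horizon and a compact (generically non-minimal) QES --- even for a one-sided collapse geometry. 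Compact QESs are in turn associated with wormhole contributions to the gravitational path integral that correct the inner products between EFT states, which is the signature of (weak) non-isometry; only by adding enough wavepackets does the map become strongly non-isometric. Your counting argument cannot reach the $|A|\leq|B|$ case at all, and it also rests on the additional uncontrolled step you yourself identify --- that assumption (a) \emph{forces} $\dim A > e^{S_{BH}}$ --- which the paper never needs. A secondary tension: your justification of (a) via the pairwise Hartle--Hawking/Unruh entanglement of interior and exterior modes sits awkwardly with the fact that the paper's mechanism for (b) requires precisely that these modes be included as independent, unentangled degrees of freedom in the code subspace; if you insist on the entangled vacuum pairing, the blueshifting states that produce the QES are excluded and the argument for non-isometry at small $|A|$ does not get off the ground.
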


Note first that the bulk EFT should be described, at leading order in $G$, by QFT in curved spacetime, for which outgoing modes in the interior and exterior are necessarily present. An infalling bulk observer should in principle be able to detect and alter these outgoing modes, which should therefore be treated as bulk EFT degrees of freedom. In particular, because the energy of a state with a fixed Rindler frequency is infinite at the horizon, these degrees of freedom should be identified with smeared wavepackets located a small but finite distance away from the horizon \cite{Engelhardt:2021qjs}. This is the rationale for  assumption $a)$ about the bulk EFT Hilbert space in our conjecture.

Let us now outline the intuitive reason for the conjecture $b)$. The authors of \cite{Engelhardt:2021qjs} showed that the inclusion in the bulk EFT Hilbert space, defined on a late enough time slice $\Sigma$, of degrees of freedom associated with outgoing interior wavepackets localized in a small interval (in Kruskal infalling coordinate $U$) near the horizon, is enough to guarantee, for some states in the bulk EFT Hilbert space, the existence of a non-trivial compact QES in the past of $\Sigma$. The QES sits at the black hole horizon up to corrections suppressed in $G$. This happens, surprisingly, even in one-sided black holes formed from collapse, in which compact QESs should naively not be present because the whole spacetime is causally connected to the boundary \cite{Engelhardt:2014gca,Engelhardt:2021qjs}. The reason is that the interior and exterior outgoing modes defining the code subspace are being treated independently, without requiring them to be entangled. As a consequence,  some states in the bulk EFT Hilbert space blueshift when evolved backwards in time, eventually backreacting non-perturbatively on the geometry and leading to the formation of a past horizon and a white hole singularity, therefore evading the argument for the non-existence of the QES \cite{Engelhardt:2021qjs}.\footnote{Note that the relative blueshifts of the ingoing and outgoing modes make it impossible to include them at the same time in a conventionally defined effective field theory below a reasonable energy cutoff \cite{Kiem:1995iy,Balasubramanian:1995sm}. This signals a breakdown of EFT near a black hole horizon, a phenomenon that has recurred in many guises since. The construction of \cite{Engelhardt:2021mue} partially avoids this issue by only considering wavepackets a finite distance away from the horizon. When disentangled, these wavepackets will not blueshift to transplanckian frequencies immediately, and the EFT is safe at least for an interval of time in the past of the initial slice. This is enough to argue for the existence of a compact QES \cite{Engelhardt:2021qjs}.}

Notice that this mechanism guarantees that the necessary condition for non-isometric encoding that we have argued for above---namely the existence of this past horizon, and therefore of a region causally disconnected from the boundary for some states in the bulk EFT Hilbert space---is satisfied for at least some states in the bulk EFT Hilbert space. This point is quite subtle. As usual, we are defining the bulk EFT Hilbert space on a (late) time slice, whose geometry is fixed for all states in the bulk EFT Hilbert space. We can think of this EFT, as we did before, as some matter DOF on top of a fixed geometry. However, to discuss causality we necessarily need to investigate global spacetime properties, which, due to backreaction of the matter on the geometry, depend on the particular state we specify on our time slice. As a result, all states in the bulk EFT Hilbert space will have the same geometry of the time slice we are considering, but different spacetime geometries in general. For example, if we start from a time slice of a black hole formed from collapse, some states in the bulk EFT will obviously have only a future horizon and no past horizon, with the entire spacetime causally connected to the boundary. But other states will contain a past horizon, a compact non-minimal QES, and a region causally disconnected from the boundary as advocated in \cite{Engelhardt:2021qjs}.

So a necessary condition for non-isometry is satisfied. Let us now give some intuition for why we expect our assumptions to actually be sufficient to guarantee non-isometric encoding. As we have discussed above, in all known examples, compact QESs arise thanks to  wormhole contributions to the gravitational path integral that are also responsible for correcting the inner product between bulk EFT states,  thus signaling non-isometry. Therefore, the results of \cite{Engelhardt:2021qjs} suggest that the inclusion of even a small number of DOF associated with outgoing interior modes leads to a non-isometric map. This in turn implies that the reconstruction of bulk EFT operators acting on at least some portion of the interior (namely, the part of the interior which is outside the simple wedge for all states in the bulk EFT Hilbert space) is state-dependent.

Notice that in the specific construction of \cite{Engelhardt:2021qjs}, the bulk-to-boundary map is weakly non-isometric, as we have already discussed. This is due to the fact that the outgoing wavepackets are confined to a small enough interval inside the horizon that their number is smaller than $S_{BH}$. In this case, the associated QES is non-minimal, and the bulk spacetime has a Python's lunch. By considering a larger number of outgoing wavepackets degrees of freedom, the map can  be made strongly non-isometric, and the QES minimal for appropriate states in the bulk EFT Hilbert space. For instance, this is the case for an evaporating black hole after the Page time, in which the number of outgoing interior DOF is identified with the number of interior Hawking partners, which exceeds $S_{BH}$ after the Page time. 

 %Suppose we have a holographic theory dual to some asymptotically AdS bulk spacetime plus a closed universe.  If there is no entanglement between the AdS spacetime and the closed universe, the fundamental Hilbert space of the closed universe is believed to be one-dimensional {\color{red} \cite{marolf-maxfield}}. Therefore, if any bulk EFT DOF in the closed universe is considered, the bulk-to-boundary map is strongly non-isometric. {\color{blue} If, on the other hand, there is bulk entanglement $S$ between the closed universe and the AdS spacetime 
Finally, consider a closed universe with  fields entangled with bulk fields in an asymptotically AdS spacetime. This bulk EFT is dual to a CFT defined on the boundary of the asymptotically AdS spacetime, see e.g., \cite{Antonini:2023hdh}.\footnote{A possible ambiguity in the holographic dictionary for this setup, which applies also to evaporating black holes, was discussed in \cite{antonini2024holographic}.} The kernel of the associated bulk-to-boundary map is empty if a small number $N<S$ of bulk EFT DOF in the closed universe is included \cite{Antonini:2023hdh}, where $S$ is the microcanonical entropy of bulk fields (or equivalently, of the CFT). However, in this case a non-minimal QES for the whole boundary of AdS exists, which has vanishing area term and ``excludes'' the closed universe (intuitively, it simply cuts the entanglement lines between the closed universe and the AdS spacetime). In the setup of \cite{Antonini:2023hdh}, such a QES also arises from a wormhole contribution to the gravitational path integral, with the map being weakly non-isometric. We expect this feature to arise generally in constructions involving closed universes.\footnote{An ongoing puzzle for the field \cite{antonini2024holographic} is to ask how the EFT in a closed universe---which can be defined using the gravitational path integral in the saddle-point approximation \cite{Antonini:2023hdh}---can arise if the underlying fundamental quantum gravity Hilbert space of the closed universe is one-dimensional, as suggested by non-perturbative gravitational path integral considerations \cite{marolf2020transcending}. A related interesting question is how the experience of a local bulk observer in the closed universe arises \cite{Chandrasekaran:2022cip,DeVuyst:2024pop}, and whether it can be described holographically \cite{antonini:WIP}.}

%\footnote{An ongoing puzzle for the field is to ask how the EFT in a closed universe can act as if it has one-dimensional Hilbert when studied alone, but nevertheless can act as if it has a non-zero entropy when interacting with an observer \cite{Chandrasekaran:2022cip} or when entangled with the EFT degrees of another universe \cite{Antonini:2023hdh,Balasubramanian:2023xyd,Balasubramanian:2020xqf}.} {\color{red} While I agree with everything said here, it is extremely odd that a closed universe would have a one-dimensional Hilbert space, but then somehow allow a large amount of entanglement with an AdS space, in which case it acts like it has more degrees of freedom. I'm sure how to make this more palatable, but I have added a footnote to discuss. Please take a look.} 

The necessity of state-dependent reconstruction for operators behind black hole horizons has been advocated for in the literature from several different, not always compatible, perspectives, see e.g. \cite{Almheiri:2013hfa,Marolf:2013dba,papadodimas2013infalling,papadodimas2014black,papadodimas2014state,Hayden:2018khn,Kourkoulou:2017zaj,Almheiri:2018ijj}. Given that in Sec.~\ref{sec:QI} we established an equivalence between state-dependence and non-isometry, our Conjecture \ref{conj:hor}, if proven to be true, leads to the same conclusion on general grounds.

%\sa{Comment about Papadodimas-Raju.}
%Our construction for the state-dependent interior operator has some similarities and key differences with the previous proposal of Papadodimas and Raju \cite{papadodimas2013infalling,papadodimas2014black,papadodimas2014state}\footnote{It is worth noting that their proposal has primarily engineered for non-evaporating black holes in AdS.}. The authors of \cite{akers2022black} have already, to fair extent, compared and contrasted their non-isometric state-dependent construction with the PR proposal.  Their discussion of the mirror operator ambiguity,  linearity of the state-dependent reconstruction and the uniqueness of the physical interpretation of the state in the fundamental description may be carried over to our framework. Therefore there is no need to recount them in here…..

To summarize, a non-isometric global bulk-to-boundary map can only arise in a setup in which a bulk subregion is causally disconnected from the boundary. Conversely, under reasonable assumptions for the definition of the bulk EFT Hilbert space on a sufficiently late time slice, we conjectured that the presence of a global horizon or a closed universe in the bulk spacetime implies a non-isometric global bulk-to-boundary map. 

\section{Discussion}
\label{sec:discussion}

In this paper we established a precise, quantitative relationship between non-isometric linear maps between Hilbert spaces and state-dependent unitary operator reconstruction. In the context of holography, we showed that weakly non-isometric bulk-to-boundary maps approximately preserving inner products between semiclassical states are approximately isometric, with the unitary reconstruction of bulk operators being approximately state-independent. On the other hand, maps with a non-trivial kernel (i.e., strongly non-isometric) unavoidably imply state-dependent reconstruction, even if the inner product between any two semiclassical states is approximately preserved under the map. Finally, we argued that non-isometry and state-dependence imply the presence of a bulk subregion causally disconnected from the boundary where the dual holographic theory is defined, and conjectured, under reasonable assumptions for the definition of the bulk EFT, that the presence of a global horizon guarantees a non-isometric global bulk-to-boundary map.

Several questions remain to be answered to fully understand the role of non-isometry and state-dependence in holography. First, as we have mentioned in Sec. \ref{sec:approximate}, there are setups in which the corrections computed by the gravitational path integral to overlaps between semiclassical states are not exponentially suppressed in $1/G$, but rather polynomial in $G$ or even order 1. Typical examples include black holes near the endpoint of evaporation, or closed universes entangled with asymptotically AdS spacetimes \cite{Antonini:2023hdh}. An even more extreme case is provided by closed universes unentangled with any asymptotically AdS spacetime. In that case, the Hilbert space of the closed universe sector is expected to be simply one-dimensional, and therefore the overlap between any two bulk EFT states is trivially 1 \cite{antonini2024holographic, marolf2020transcending}. 

Clearly, this seems to signal a breakdown of bulk EFT, which does not provide a good low-energy approximation of the full, UV complete fundamental theory, not even at leading order in $G$. However, this breakdown is somewhat surprising, because it arises well before the spacetime curvature becomes Planckian. This puzzle is particularly sharp for closed universes, where one could have a macroscopic closed universe with parametrically large curvature radius, and nonetheless a one-dimensional fundamental Hilbert space. In this situation, it should be possible to make sense of ordinary low-energy physics at least on a local scale. In fact, as far as we know, we could live in a closed universe. Within the framework of holography, however, it is not clear how this local physics would emerge from a fundamental description in terms of the dual holographic theory \cite{antonini2024holographic}. It remains to be understood whether and how holography can capture the physics of local observers in a closed universe, and, similarly, in the black hole interior at the late stages of evaporation.

Another important open question is whether a natural definition of the bulk EFT Hilbert space exists for a given holographic system, and whether for such a definition the presence of a global horizon or a closed universe implies a non-isometric bulk-to-boundary map. Motivated by QFT in curved spacetime, in Sec. \ref{sec:causal} we suggested that, in the presence of a horizon, radially outgoing modes in the interior should be part of the bulk EFT Hilbert space and conjectured (using the results of \cite{Engelhardt:2021qjs}) that the associated bulk-to-boundary map is necessarily non-isometric. A deeper understanding of this issue is needed to shed new light on the relationship between non-isometry and horizons.

%Let us formulate here a possible alternative (and in some sense opposite) definition of what a ``natural'' bulk EFT Hilbert space is.

%For concreteness, let us consider a bulk EFT given by QFT coupled to perturbative quantum gravity defined on top of a fixed semiclassical background with a global horizon. This is the case in most settings in which non-isometric maps are discussed. As we have seen in Section \ref{sec:causal}, the results of \cite{Engelhardt:2021qjs} show that the inclusion of outgoing interior modes as DOF in the bulk EFT Hilbert space unavoidably leads to non-perturbative backreaction of these DOF on the metric. For instance, for given states in the bulk EFT Hilbert space, a past singularity is present even though the bulk EFT Hilbert space was initially defined on a time slice of a single-sided black hole formed from collapse. This implies that our perturbative bulk EFT cannot self-consistently predict (backward) time evolution for such states, because the backreaction leading to the new geometry with a past singularity cannot be captured within perturbation theory. Is it then consistent to use this EFT past its breakdown, and conclude that a QES exists and a non-isometric map is present? On one hand, one could argue that in order for the EFT to be self-consistent possible criterion for the definition of the bulk EFT Hilbert space is to restrict the DOF such that there are no states leading to non-perturbative backreaction. 

%\sa{Maybe insert here code subspace compatible with time evolution, if we can consistently define that proposal?}

Finally, the results of Sec.~\ref{sec:causal} suggest a strict relationship between global horizons and global non-isometry. 
%Singularity theorems \cite{wald2010general,penrose1965gravitational,bousso2022singularities,wall2013generalized,bousso2023quantum} establish in turn the existence of a future singularity in the presence of a global horizon. \sa{Need to double check this.} 
Spacetimes with a global horizon typically contain singularities.\footnote{A noteworthy exception is given by the Bardeen black hole spacetime \cite{Bardeen,Borde:1994ai,Borde:1996df,Rodrigues:2018bdc} and similar regular black hole solutions \cite{Fan:2016hvf}, which contain a global event horizon and trapped surfaces behind them, but are nonetheless regular. However, the Bardeen black hole spacetime is not globally hyperbolic, and therefore is able to escape singularity theorems despite the presence of trapped surfaces \cite{Borde:1994ai,Borde:1996df}. We expect that the presence of a global horizon implies the existence of trapped regions behind it, which in turn implies the existence of a singularity if the spacetime is globally hyperbolic \cite{wald2010general,penrose1965gravitational,bousso2022singularities,wall2013generalized,bousso2023quantum}. We leave a more thorough investigation of this point to future work. We thank Raphael Bousso, Geoff Penington, and Arvin Shahbazi-Moghaddam for a discussion on this point.} Note that closed universes with a negative cosmological constant, which are those arising in the setups discussed in  \cite{Antonini:2023hdh,marolf2020transcending}, are Big Bang-Big Crunch universes, and therefore also contain singularities. This suggests a connection between singularities and non-isometry, and possibly that the puzzles arising in the presence of non-isometry and state-dependence---such as the large overcounting of physical states within EFT in the strongly non-isometric case, or the breakdown of bulk EFT discussed above---have their origin in our lack of understanding of the physics of the singularity. 
%\sa{I'm not sure what we're trying to say with this. I think it gets a bit away from the point and confusing. I would personally erase this. Indeed, as we expect from the chaotic aspects of quantum gravity, the underlying time evolution is approximately pseudorandom on an appropriate code subspace for some theories. There the increasing complexity of time-evolving states seems to result in the effective presence of a horizon and a singularity \cite{Engelhardt:2024hpe}.  Following our arguments, the code subspace would then also have a non-isometric embedding and require state-dependent reconstruction.}
The final state proposal \cite{horowitz2004black} is an example of a mechanism through which physics near the singularity could drastically reduce the number of physical DOF in the EFT, and possibly resolve some of these puzzles. The price to pay is  the definition of an unconventional bulk EFT. It would be interesting to understand better what precise role the singularity plays in holographic systems with a non-isometric map.

\section*{Acknowledgements}
We would like to thank Shadi Ali Ahmad, Raphael Bousso, Zuzana Gavorova, Luca Iliesiu, Guanda Lin, Javier Magan, Geoff Penington, Pratik Rath, Suvrat Raju,  Martin Sasieta,  Arvin Shabhazi-Moghaddam, and Thomas Vidick for useful discussions. S.A. is supported by the U.S.
Department of Energy through DE-FOA-0002563. V.B. was supported in part by the DOE through DE-SC0013528 and QuantISED grant DE-SC0020360, and in part by the Eastman Professorship at Balliol College, Oxford. C.C. acknowledges support by the Commonwealth Cyber Initiative at Virginia Tech.
N.B. is supported by the U.S. Department of Energy through the Quantum Telescope Project.

\appendix
\section{Useful lemmas and definitions}
These are used in the proofs in Appendix~\ref{app:qiproof}. We keep them for completeness and convenience.
\begin{lemma}[von Neumann's Trace inequality]
    Consider two complex $n\times n$ matrices $A,B$ whose singular values are $\alpha_1\geq \alpha_2\geq \dots\geq \alpha_n$ and $ \beta_1\geq \beta_2\geq \dots\geq \beta_n$ respectively. Then 
    $$|\Tr[AB]|\leq \sum_i \alpha_i\beta_i$$.
\end{lemma}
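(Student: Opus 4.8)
The plan is to reduce the bound to a statement about doubly stochastic matrices via the singular value decompositions of $A$ and $B$. First I would write $A = U_A \Sigma_A V_A^\dagger$ and $B = U_B \Sigma_B V_B^\dagger$, with $\Sigma_A = \diag(\alpha_1,\dots,\alpha_n)$, $\Sigma_B = \diag(\beta_1,\dots,\beta_n)$, and $U_A,V_A,U_B,V_B$ unitary. Cyclicity of the trace then gives $\Tr[AB] = \Tr[\Sigma_A\, P\, \Sigma_B\, Q]$, where $P = V_A^\dagger U_B$ and $Q = V_B^\dagger U_A$ are both unitary. Expanding in components,
\begin{equation}
\Tr[AB] = \sum_{i,j=1}^n \alpha_i \beta_j\, P_{ij}\, Q_{ji},
\end{equation}
so by the triangle inequality $|\Tr[AB]| \le \sum_{i,j}\alpha_i\beta_j\, |P_{ij}|\,|Q_{ji}|$, since all $\alpha_i,\beta_j\ge 0$.

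The next step is to dominate the coefficient matrix $M_{ij} = |P_{ij}|\,|Q_{ji}|$ by a single doubly stochastic matrix. By the inequality of arithmetic and geometric means, $|P_{ij}|\,|Q_{ji}| \le \tfrac12\big(|P_{ij}|^2 + |Q_{ji}|^2\big)$. Since $P$ is unitary, the matrix with entries $|P_{ij}|^2$ has all row and column sums equal to $1$, i.e. it is doubly stochastic; the same holds for $(|Q_{ij}|^2)$, and hence for its transpose $(|Q_{ji}|^2)$. A convex combination of doubly stochastic matrices is again doubly stochastic, so $S := \tfrac12\big((|P_{ij}|^2) + (|Q_{ji}|^2)\big)$ is doubly stochastic and satisfies $M_{ij}\le S_{ij}$ entrywise, giving
\begin{equation}
|\Tr[AB]| \le \sum_{i,j}\alpha_i\beta_j\, S_{ij}.
\end{equation}

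Finally I would maximize the linear functional $S\mapsto \sum_{i,j}\alpha_i\beta_j\, S_{ij}$ over the compact convex set of doubly stochastic matrices. By Birkhoff's theorem this set is the convex hull of the permutation matrices, so the maximum is attained at some permutation $\pi$, where the functional equals $\sum_i \alpha_i\beta_{\pi(i)}$. The rearrangement inequality, applied to the two decreasingly ordered sequences $\{\alpha_i\}$ and $\{\beta_i\}$, then yields $\sum_i \alpha_i\beta_{\pi(i)} \le \sum_i \alpha_i\beta_i$, which is the claimed bound.

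The SVD bookkeeping and the doubly stochastic observation are routine; the crux is the last step, which is where the ordering of the singular values must be used. Reducing the optimization over the unitaries hidden in $P,Q$ to an optimization over doubly stochastic matrices is the key maneuver, and one must be careful that the two separate doubly stochastic structures coming from $P$ and from $Q$ can be merged into one—this is exactly what the AM-GM step accomplishes. An alternative to invoking Birkhoff's theorem is a direct Abel summation-by-parts argument using the majorization (partial-sum) inequalities satisfied by doubly stochastic matrices, but the convex-hull plus rearrangement route is cleaner and I would present that.
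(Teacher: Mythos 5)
Your proof is correct. Note that the paper itself does not prove this lemma: it is stated in the appendix as a known result (von Neumann's trace inequality) and simply invoked, so there is no in-paper argument to compare against. Your route -- SVD plus cyclicity to reduce to $\Tr[\Sigma_A P \Sigma_B Q]$ with $P,Q$ unitary, the AM--GM step merging the two unistochastic matrices $(|P_{ij}|^2)$ and $(|Q_{ji}|^2)$ into a single doubly stochastic $S$, and then Birkhoff plus the rearrangement inequality -- is the standard Mirsky-style proof, and each step checks out: the entrywise domination suffices because all $\alpha_i\beta_j\geq 0$, and the maximum of the linear functional over the Birkhoff polytope is indeed attained at a permutation matrix, where the decreasing ordering of both singular-value sequences gives $\sum_i\alpha_i\beta_{\pi(i)}\leq\sum_i\alpha_i\beta_i$.
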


\begin{definition}[Nuclear norm]
    The nuclear norm of a matrix $A$ is 
    $$||A||_* = \Tr[\sqrt{A^{\dagger}A}]=\sum_{i}|\alpha_i|$$
    where $\alpha_i$ are the singular values of $A$. 
\end{definition}
\begin{definition}
   The entrywise 1-norm of a matrix $A$ is $$||A||_{E_1}=\sum_{ij}|a_{ij}|.$$
\end{definition}
\begin{lemma}[Bound on nuclear norm]
    Let $A$ be an $n\times m$ complex matrix, $$\frac{1}{\sqrt{mn}}||A||_{E_1}\leq ||A||_F\leq||A||_*\leq ||A||_{E_1}$$
    %\label{lemma:singupper}
    \label{lemma:schatten1}
\end{lemma}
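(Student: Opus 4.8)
The plan is to prove the three inequalities in the chain separately, after rewriting each norm in terms of the singular values $\{\alpha_i\}$ of $A$ and its entries $\{a_{ij}\}$. Recall $||A||_{E_1}=\sum_{ij}|a_{ij}|$, the Frobenius norm is $||A||_F=\sqrt{\Tr[A^\dagger A]}=\sqrt{\sum_{ij}|a_{ij}|^2}=\sqrt{\sum_i\alpha_i^2}$, and the nuclear norm is $||A||_*=\sum_i\alpha_i$ with all $\alpha_i\geq 0$. The leftmost and middle inequalities then reduce to elementary facts about $\ell^p$ norms of finite vectors, while the rightmost inequality is the only one requiring a genuinely matrix-theoretic input.

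For the leftmost inequality $\frac{1}{\sqrt{mn}}||A||_{E_1}\leq ||A||_F$, I would apply the Cauchy--Schwarz inequality to the $mn$-component vector of entries: writing $\sum_{ij}|a_{ij}|=\sum_{ij}1\cdot|a_{ij}|$ and bounding by $\sqrt{\sum_{ij}1}\,\sqrt{\sum_{ij}|a_{ij}|^2}=\sqrt{mn}\,||A||_F$ gives the claim after dividing by $\sqrt{mn}$. For the middle inequality $||A||_F\leq ||A||_*$, I would observe that both sides depend only on the singular value vector and that $\big(\sum_i\alpha_i\big)^2=\sum_i\alpha_i^2+\sum_{i\neq j}\alpha_i\alpha_j\geq \sum_i\alpha_i^2$, since the cross terms are nonnegative (again $\alpha_i\geq 0$); taking square roots yields $||A||_*\geq ||A||_F$. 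This is just the statement that the $\ell^2$ norm of a nonnegative vector is bounded by its $\ell^1$ norm.

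For the rightmost inequality $||A||_*\leq ||A||_{E_1}$, the plan is to combine the triangle inequality for the nuclear norm with the decomposition $A=\sum_{ij}a_{ij}E_{ij}$, where $E_{ij}$ is the elementary matrix with a single $1$ in position $(i,j)$. Each $E_{ij}$ is rank one with unique nonzero singular value $1$, so $||E_{ij}||_*=1$, and subadditivity together with homogeneity give $||A||_*\leq \sum_{ij}|a_{ij}|\,||E_{ij}||_*=\sum_{ij}|a_{ij}|=||A||_{E_1}$. The one nontrivial ingredient is that $||\cdot||_*$ is actually a norm, in particular that it obeys the triangle inequality; I expect this to be the main obstacle and would justify it via the dual characterization $||A||_*=\max\{|\Tr[A^\dagger B]| : ||B||_{2}\leq 1\}$, with $||\cdot||_2$ the spectral norm. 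The upper bound direction of this identity is immediate from the von Neumann trace inequality stated above (taking $\beta_i\leq 1$), and the matching lower bound follows by choosing $B$ to be the partial isometry $UW^\dagger$ from a singular value decomposition $A=U\Sigma W^\dagger$, so that $\Tr[A^\dagger B]=\sum_i\alpha_i$; subadditivity of a maximum of linear functionals is then automatic. If a self-contained derivation of the dual norm is deemed unnecessary, the triangle inequality can instead be cited as a standard property of the nuclear norm, which is in any case encoded in the von Neumann trace inequality already stated.
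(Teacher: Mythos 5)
Your proposal is correct, and for the first two inequalities it coincides with the paper's proof: the bound $\frac{1}{\sqrt{mn}}\|A\|_{E_1}\leq\|A\|_F$ via Cauchy--Schwarz on the vector of entries, and $\|A\|_F\leq\|A\|_*$ as the statement $\|\vec{\alpha}\|_2\leq\|\vec{\alpha}\|_1$ for the nonnegative singular value vector. The only divergence is in the rightmost inequality $\|A\|_*\leq\|A\|_{E_1}$. The paper quotes the characterization of the nuclear norm as an infimum over rank-one decompositions, $\|A\|_*=\inf\{\sum_i|c_i|:A=\sum_i c_i\mathbf{x}_i\mathbf{y}_i^*,\ \|\mathbf{x}_i\|=\|\mathbf{y}_i\|=1\}$ (citing an external reference for this fact), and then plugs in the elementary decomposition $A=\sum_{ij}A_{ij}\mathbf{e}_i\mathbf{e}_j^*$. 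You use the very same decomposition but reach the bound through subadditivity of $\|\cdot\|_*$ together with $\|E_{ij}\|_*=1$, and you justify subadditivity via the dual characterization $\|A\|_*=\max\{|\Tr[A^\dagger B]|:\|B\|_2\leq 1\}$, whose two directions follow from the von Neumann trace inequality and from choosing $B=UW^\dagger$ out of the SVD. These are two faces of the same duality, but your route has the advantage of being self-contained relative to the paper: the dual characterization is essentially the content of the paper's own Lemma on $\max_{\tilde{O}}\Re\Tr[\tilde{O}B]=\|B\|_*$, so you only use ingredients already stated, whereas the paper imports the infimum characterization from outside. The one small caveat is that the von Neumann trace inequality is stated in the paper for square matrices, while here $A$ is $n\times m$; this is harmless (pad with zero rows or columns, or note that $A^\dagger B$ is square), but worth a sentence if you write the argument out in full.
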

\begin{proof}
    Let $\vec{\alpha}=(\alpha_1, \alpha_2,\dots, \alpha_n)$ be the singular values of $A$. We know that $||A||_F=||\vec{\alpha}||_2=\sqrt{\sum_i|\alpha_i|^2}\leq ||\vec{\alpha}||_1=||A||_*$ and $||A||_{E_1}/\sqrt{mn}\leq ||A||_F$ follows from Cauchy-Schwartz inequality. Here we used $||\cdot||_p$ to denote the vector $p$-norm, not to be confused with the matrix $p$-norm in the main text. Note that the entrywise 1-norm upper bound is saturated when the only non-zero matrix element is $A_{11}=1$. The lower bound is saturated when all entries are 1.   %\CC{rmb to rephrase or rewrite. Also how do you cite this answer} %The original question asks for a real matrix, but I don't see any indication in the proof why it shouldn't generalize to a complex matrix.

%The best such inequality that depends only on $m$ and $n$ is the one above. The right inequality is tight when $A$ is a matrix with a $1$ in the top-left corner and zeroes elsewhere. The left inequality is tight when $A$ is the matrix all of whose entries are 1. These examples also show that you cannot get any better constants even if you let them depend on the rank of $A$.

%To see that the inequalities actually hold, let's start with the left inequality. Write $A$ in its singular value decomposition:
%$$A=\sum_i \sigma_i \mathbf{x}_i\mathbf{y}_i^*,$$
%where $\{\sigma_i\}$ are the singular values of $A$ and $\{\mathbf{x}_i\}$ and $\{\mathbf{y}_i\}$ all have Euclidean norm 1: $||\mathbf{x}_i||=||\mathbf{y}_i||=1$ for all $i$. Then $$||A||_1=||\sum_i\sigma_i \mathbf{x}_i\mathbf{y}_i||_1\leq \sum_i\sigma_i||\mathbf{x}_i\mathbf{y}_i||_1\leq \sqrt{mn}\sum_i\sigma_i||\mathbf{x}_i\mathbf{y}_i||_2=\sqrt{mn}\sum_i\sigma_i=\sqrt{mn}||A||_*$$ where $||\cdot||_2$ is the entrywise 2-norm or the Frobenius norm. 

For the upper bound $||A||_{E_1}\geq ||A||_*$, a proof can be found \href{https://mathoverflow.net/questions/201845/equivalence-of-entrywise-1-norm-and-schatten-1-norm}{here}, but we reproduce it below for convenience. It is shown by \cite{johnston2013duality} (see pages 3 and 4) that the Schatten 1-norm is an infimum over all rank-1 decompositions of A 
$$||A||_*=\inf\{\sum_i|c_i|: A=\sum_{i}c_i\mathbf{x}_i\mathbf{y}_i^*,||\mathbf{x}_i||=||\mathbf{y}_i||=1\},$$
where all vectors $\mathbf{x}_i,\mathbf{y}_i$ have unit Euclidean norm.
One such rank-1 decomposition of $A$ is just the very naive one that writes it in terms of standard basis vectors $\{\mathbf{e}_i\}$:
$$A=\sum_{ij}A_{ij}\mathbf{e}_i\mathbf{e}_j^*.$$
Since $||A||_{E_1}=\sum_{ij}|A_{ij}|$, it follows immediately that $||A||_*\leq ||A||_{E_1}$.

\end{proof}

\begin{lemma}
    Let $B$ be any $n\times n$ complex matrix and $\tilde{O}$ is unitary,
     then 
     \begin{equation}
         \max_{\tilde{O}\in U}\Re\{\Tr[\tilde{O}B]\} = ||B||_*
     \end{equation}
     \label{lemma:spectralNorm}
\end{lemma}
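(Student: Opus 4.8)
The plan is to establish the two inequalities separately: first $\max_{\tilde{O}\in U}\Re\{\Tr[\tilde{O}B]\}\leq ||B||_*$, and then to exhibit an explicit unitary attaining equality. For the upper bound I would invoke von Neumann's trace inequality (the Lemma stated above). Since $\tilde{O}$ is unitary, all of its singular values equal $1$, so writing the singular values of $B$ as $\{\sigma_i\}$ the inequality gives $|\Tr[\tilde{O}B]|\leq \sum_i 1\cdot\sigma_i=\sum_i\sigma_i=||B||_*$. Because $\Re\{z\}\leq|z|$ for any complex $z$, this immediately yields $\Re\{\Tr[\tilde{O}B]\}\leq ||B||_*$ for every unitary $\tilde{O}$, and hence the same bound on the maximum over $\tilde{O}\in U$.

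For the reverse direction I would construct an explicit maximizer from the singular value decomposition of $B$. Writing $B=U\Sigma W^{\dagger}$ with $U,W$ unitary and $\Sigma=\diag(\sigma_1,\dots,\sigma_n)$, I would choose $\tilde{O}=WU^{\dagger}$, which is manifestly unitary. Using cyclicity of the trace, $\Tr[\tilde{O}B]=\Tr[WU^{\dagger}U\Sigma W^{\dagger}]=\Tr[\Sigma]=\sum_i\sigma_i=||B||_*$, which is real and nonnegative. Thus this choice attains the upper bound, so the supremum is in fact a maximum and equals $||B||_*$, completing the proof.

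If one prefers a self-contained argument that does not invoke the previously stated trace inequality, the upper bound can be obtained directly: setting $M=W^{\dagger}\tilde{O}U$ (again unitary), one has $\Tr[\tilde{O}B]=\Tr[M\Sigma]=\sum_i M_{ii}\sigma_i$, and since each diagonal entry of a unitary matrix satisfies $|M_{ii}|\leq 1$ (its rows are unit vectors), $\Re\{\Tr[\tilde{O}B]\}=\sum_i\Re(M_{ii})\sigma_i\leq\sum_i\sigma_i$. This route makes the equality condition transparent, since saturation forces $M_{ii}=1$ for every $i$ with $\sigma_i>0$, which is consistent with the choice $M=I$, i.e. $\tilde{O}=WU^{\dagger}$.

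I do not expect any genuine obstacle here; the only points requiring care are tracking the unitary invariance and cyclicity of the trace, and the harmless ambiguity in the singular value decomposition when singular values are degenerate or vanish, which does not affect the value of the maximum. The statement is the standard duality between the nuclear (Schatten-$1$) norm and its dual, realized by optimizing the trace pairing over the unitary group.
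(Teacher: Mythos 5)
Your proof is correct and follows essentially the same route as the paper's: von Neumann's trace inequality (with the unitary's singular values all equal to $1$) for the upper bound, and the explicit choice $\tilde{O}=WU^{\dagger}$ from the singular value decomposition $B=U\Sigma W^{\dagger}$ to attain it. The self-contained variant you sketch via the diagonal entries of $M=W^{\dagger}\tilde{O}U$ is a harmless bonus but not needed; no gaps.
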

\begin{proof}
Since $\tilde{O}$ is unitary, its singular values are 1s. Let $\beta_i$ be the singular values of $B$, then applying von Neumann's Trace inequality, we obtain an upper bound $$\max_{\tilde{O}}Re\{\Tr[\tilde{O}B]\}\leq \max_{\tilde{O}}|\Tr[\tilde{O}B]|\leq \sum_{i}\beta_i = ||B||_*.$$ 
%    We have previously established that $||B||_*$ is an upper bound by von Neumann's trace inequality.
    Since $B=W\Sigma V^{\dagger}$ is a square matrix and admits a singular value decomposition, let $\tilde{O}_0=VW^{\dagger}\in U(n)$ such that $\Tr[\tilde{O}_0B]=\Tr[\Sigma]=||B||_*$. Since we chose a particular $\tilde{O}$, this must be a lower bound on the maximum.
\end{proof}

%\begin{lemma}
%Let $A, B$ be $n\times n$ matrices, 
%    and $a_{ij}$ denote matrix elements, then $$\sum_{i}|a_{ii}|\leq ||A||_*$$ and $$\Tr[AB]\leq ||A||_*||B||_{\infty}$$
%\label{lemma:abstracebound}
%\end{lemma}
%\begin{proof}
%A short sketch can be found \href{https://math.stackexchange.com/questions/2451958/lower-bound-nuclear-norm-of-a-by-mathrmtra}{here}.
%\end{proof}

\begin{proposition}
    Let $\tilde{O}$ be an $n\times n$ unitary and $R$ be an $n\times m$ matrix with $m\geq n$. Let $B(k)$ be an $n\times n$ submatrix of $R$ by keeping the $i+k$th to $(i+k+n-1)mod(m) +1$th column of $R$. Then $$\sum_{k=1}^m \max_{\tilde{O}}Re\{\sum_{i=1}^n[\tilde{O}R]_{i,i+k}\}= \sum_{k=1}^m||B(k)||_*\leq n||R||_{E_1}.$$
%    When the index value $>m$, we take its value mod $m$. 

  %  Update note: The last inequality is further improved to $\leq n||R||_1$ using Lemma~\ref{lemma:schatten1}.
\end{proposition}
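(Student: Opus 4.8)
The plan is to prove the equality and the inequality separately, each reducing to a lemma already established in this appendix, so that the only genuine work is careful bookkeeping of the cyclic column indices.

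First I would recognize the inner sum as a trace. Writing $[\tilde{O}R]_{i,i+k} = \sum_{j=1}^n \tilde{O}_{ij}R_{j,i+k}$ with the column index $i+k$ read modulo $m$, and summing over $i$, gives $\sum_{i=1}^n[\tilde{O}R]_{i,i+k} = \sum_{i,j}\tilde{O}_{ij}R_{j,i+k}$. If we set $B(k)$ to be the $n\times n$ matrix with entries $B(k)_{j,i} = R_{j,i+k}$---precisely the block of $n$ consecutive (cyclically indexed) columns of $R$ described in the statement---then this double sum is exactly $\Tr[\tilde{O}B(k)]$. Hence for each fixed $k$ we have $\max_{\tilde{O}}Re\{\sum_{i=1}^n[\tilde{O}R]_{i,i+k}\} = \max_{\tilde{O}}Re\{\Tr[\tilde{O}B(k)]\}$, and Lemma~\ref{lemma:spectralNorm} immediately identifies this with $||B(k)||_*$. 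Summing over $k$ yields the claimed equality $\sum_{k=1}^m\max_{\tilde{O}}Re\{\sum_{i=1}^n[\tilde{O}R]_{i,i+k}\} = \sum_{k=1}^m||B(k)||_*$.

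For the inequality I would bound each nuclear norm by its entrywise $1$-norm using Lemma~\ref{lemma:schatten1}, $||B(k)||_* \leq ||B(k)||_{E_1}$, so that $\sum_{k=1}^m||B(k)||_* \leq \sum_{k=1}^m||B(k)||_{E_1}$. The final step is a counting argument: $||B(k)||_{E_1} = \sum_j\sum_{i=1}^n|R_{j,i+k}|$, and as $k$ ranges over $1,\dots,m$ the window of $n$ consecutive columns slides cyclically over all $m$ columns of $R$. Each of the $m$ columns is therefore covered by exactly $n$ of the windows, since the $mn$ window--position incidences distribute evenly over the $m$ columns by cyclic symmetry. Thus every entry $R_{j,l}$ is counted exactly $n$ times across the sum, giving $\sum_{k=1}^m||B(k)||_{E_1} = n\sum_{j,l}|R_{j,l}| = n\,||R||_{E_1}$, which completes the bound.

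The argument is essentially immediate once the trace identity is seen, so the only real obstacle is matching the modular column-indexing conventions correctly: I would need to verify carefully that (a) $B(k)$ is the matrix realizing $\sum_{i=1}^n[\tilde{O}R]_{i,i+k} = \Tr[\tilde{O}B(k)]$, and (b) the multiplicity in the entrywise-norm sum is exactly $n$ for every column, with no off-by-one error from the cyclic wrap-around. Neither is conceptually difficult, but both must be checked against the precise definition of $B(k)$ given in the statement.
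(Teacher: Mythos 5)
Your proof is correct and follows essentially the same route as the paper's: the equality comes from recognizing $\sum_i[\tilde{O}R]_{i,i+k}=\Tr[\tilde{O}B(k)]$ and invoking Lemma~\ref{lemma:spectralNorm}, and the inequality from $\|B(k)\|_*\leq\|B(k)\|_{E_1}$ (Lemma~\ref{lemma:schatten1}) together with the observation that each entry of $R$ appears in exactly $n$ of the cyclic windows $B(k)$. You simply spell out the trace identity and the multiplicity count more explicitly than the paper does, which is fine.
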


\begin{proof}
Note that each $B(k)$ is just taking a $n\times n$ submatrix of $R$ and each $k$ marks a cyclic permutation around the row index of $R$. The left equality follows from Lemma~\ref{lemma:spectralNorm}.
Now we use Lemma~\ref{lemma:schatten1} so that $||B(k)||_*\leq ||B(k)||_{E_1}$.
    \begin{align}
        &\sum_{k=1}^{m}||[B(k)]||_*\leq \sum_{k=1}^m ||B(k)||_{E_1} = \sum_k \sum_{ij}|B(k)_{ij}|\\
        &= \sum_{ij}(\sum_k|B(k)|_{ij})= n \sum_{ij}|R_{ij}| = n||R||_{E_1}
    \end{align}
    where for the first term on the second line, we summed over $k$ first and recognized that this is simply summing over the absolute values of all matrix elements of $R$ $n$ times as each element is contained in $n$ such $B(k)$ matrices.

  %  Note that the first inequality uses lemma~\ref{lemma:singupper}, but using lemma~\ref{lemma:schatten1} shaves off the $\sqrt{n}$ factor, leading to an improved result $n||R||_1$ at the very end.
\end{proof}

\begin{lemma}
    If $R$ is an $n\times m$ matrix with $m\geq n$ such that the rows of $R$ are orthonormal (where vector norms are given by the Euclidean norm $||\cdot||_2$), then $||R||_{E_1}\leq n\sqrt{m}$.
\end{lemma}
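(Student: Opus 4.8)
The plan is to reduce the bound on the entrywise $1$-norm to a row-by-row application of the Cauchy--Schwarz inequality, since $\|R\|_{E_1}$ decomposes additively over the rows of $R$. Writing $r_i$ for the $i$-th row of $R$, I would first observe that the hypothesis that the rows are orthonormal implies in particular that each row has unit Euclidean norm, i.e. $\sum_{j=1}^m |R_{ij}|^2 = 1$ for every $i=1,\dots,n$. Note that the mutual orthogonality of distinct rows plays no role in what follows; only the unit-normalization is used.

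Next I would bound the $\ell_1$ norm of a single row by its $\ell_2$ norm using the standard inequality $\|v\|_1 \le \sqrt{m}\,\|v\|_2$ for vectors $v \in \mathbb{C}^m$, which is itself Cauchy--Schwarz applied to the vector of moduli against the all-ones vector. Concretely, for each fixed $i$,
\begin{equation}
\sum_{j=1}^m |R_{ij}| \le \sqrt{m}\,\Big(\sum_{j=1}^m |R_{ij}|^2\Big)^{1/2} = \sqrt{m}\,\|r_i\|_2 = \sqrt{m}.
\end{equation}

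Finally I would sum this estimate over the $n$ rows to obtain
\begin{equation}
\|R\|_{E_1} = \sum_{i=1}^n \sum_{j=1}^m |R_{ij}| \le \sum_{i=1}^n \sqrt{m} = n\sqrt{m},
\end{equation}
which is the claimed bound. There is no substantive obstacle here: the argument is a one-line Cauchy--Schwarz estimate carried out uniformly across rows. The only points worth flagging are that the result in fact uses strictly less than the stated hypothesis---only unit-normalization of each row, not orthogonality---and that the bound is tight, being saturated precisely when every entry of $R$ has modulus $1/\sqrt{m}$, so that each row is $\ell_1$-extremal among unit vectors.
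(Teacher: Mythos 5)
Your proof is correct and follows essentially the same route as the paper's: bound each row's $\ell_1$ norm by $\sqrt{m}$ times its $\ell_2$ norm via Cauchy--Schwarz, then sum over the $n$ unit-normalized rows. Your added observations---that orthogonality between distinct rows is never used, and that the bound is saturated when all entries have modulus $1/\sqrt{m}$---are accurate but not needed.
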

\begin{proof}
Recall that $||v||_1\leq \sqrt{d}||v||_2$ where the vector $v$ has dimension $d$. Then simply 
    $||R||_{E_1}=\sum_{ij}|R_{ij}|= \sum_i \sum_j|R_{ij}| \leq \sum_i \sqrt{m}=n\sqrt{m}$.
\end{proof}

\begin{corollary}\label{cor:coisobound}
For an $n\times m$ matrix $R$ whose rows are orthonormal and $n\leq m$, we then know that $\sum_{k=1}^m\max_{\tilde{O}}Re\{\sum_{i=1}^n[\tilde{O}R]_{i,i+k}\}\leq n^{2}\sqrt{m}$.
\end{corollary}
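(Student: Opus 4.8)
The plan is to recognize that this corollary is an immediate consequence of the two results that directly precede it, so the entire proof consists of chaining one inequality into another. First I would invoke the Proposition immediately above, which establishes, for any $n \times m$ matrix $R$ with $m \geq n$, the bound
\begin{equation}
\sum_{k=1}^m \max_{\tilde{O}} \Re\Big\{ \sum_{i=1}^n [\tilde{O}R]_{i,i+k} \Big\} = \sum_{k=1}^m \| B(k) \|_* \leq n \, \| R \|_{E_1}.
\end{equation}
This holds for an arbitrary such $R$, with no assumption on the rows, so it applies in particular to the orthonormal-row case of interest here.

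The second and final step is to control the entrywise $1$-norm $\| R \|_{E_1}$ using the hypothesis that the rows of $R$ are orthonormal. This is exactly the content of the Lemma immediately preceding the corollary, which gives $\| R \|_{E_1} \leq n\sqrt{m}$ whenever $R$ is $n \times m$ with $m \geq n$ and orthonormal rows. Since the hypotheses of the corollary ($n \leq m$, orthonormal rows) match the hypotheses of that Lemma exactly, I can substitute directly to obtain
\begin{equation}
\sum_{k=1}^m \max_{\tilde{O}} \Re\Big\{ \sum_{i=1}^n [\tilde{O}R]_{i,i+k} \Big\} \leq n \, \| R \|_{E_1} \leq n \cdot n\sqrt{m} = n^2 \sqrt{m},
\end{equation}
which is the claimed bound.

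There is essentially no genuine obstacle here, since the corollary is a pure composition of the Proposition and the Lemma rather than a new argument. The only thing worth checking is that the hypotheses of the two cited results are compatible and jointly satisfied under the corollary's assumptions, and they are: both require $n \leq m$, the Proposition is hypothesis-free on the rows of $R$, and the orthonormality assumption is precisely what the Lemma needs. Thus the proof reduces to writing down the two-step inequality chain above, and the ``hard part'' is merely bookkeeping to confirm that the $n$ coming from the Proposition and the factor $n\sqrt{m}$ coming from the Lemma multiply to the stated $n^2\sqrt{m}$.
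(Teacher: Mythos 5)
Your proof is correct and is exactly the derivation the paper intends: the corollary is stated without a separate proof precisely because it follows by chaining the preceding Proposition's bound $\sum_k \|B(k)\|_* \leq n\|R\|_{E_1}$ with the Lemma's bound $\|R\|_{E_1}\leq n\sqrt{m}$ for orthonormal rows. Nothing is missing.
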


The bound is observed to obey a scaling of $n^{3/2}\sqrt{m}$ in small scale numerics, suggesting that Corrollary \ref{cor:coisobound} may not be tight.
%\sa{I think this was a note-to-self maybe? We don't refer to numerics before. Maybe replace it with something else/erase it?: This is still not quite as tight as the bound we observed numerically which seems to scale as $n^{3/2}\sqrt{m}$. However, the power of $n$ is not very sharp from the numerics. The power of $m$ is very much $1/2$. Using this bound, we can derive a bound for the state-dependence measure when $V$ is co-isometric.}

Now consider a slightly more general version 
\begin{lemma}\label{lemma:upperbound}
    Let $\tilde{O}$ be an $n\times n$ unitary and $R$ be an $n\times m$ matrix with $m\geq n$. Let $B(k)$ be an $n\times n$ submatrix of $R$ by keeping the $i+k$th to $(i+k+n-1)mod(m)+1$th column of $R$, and $\tilde{D}$ be a diagonal matrix with diagonals $\sigma_1,\sigma_2,\dots,\sigma_{n}$. Then $$\sum_{k=1}^m \max_{\tilde{O}}Re\{\sum_{i=1}^n[\tilde D\tilde{O}\tilde DR]_{i,i+k}\}\leq \sum_{k=1}^m||\tilde DB(k)\tilde D||_*%\leq \sqrt{mn}(\sum_{k=1}^n\sigma_k)^2
    \leq \sqrt{m}(\sum_{k=1}^n\sigma_k)^2$$
    %When the index value $>m$, we take its value mod $m$. 
\end{lemma}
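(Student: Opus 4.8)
The plan is to mimic the proof of the preceding Proposition, but with the diagonal conjugation $\tilde D\tilde O\tilde D$ in place of $\tilde O$, and then to absorb the extra $\sigma$-weights at the end. The engine is the same two-step reduction used there: first rewrite each shifted diagonal sum as a trace so that Lemma~\ref{lemma:spectralNorm} applies, and then dominate each resulting nuclear norm by an entrywise $1$-norm via Lemma~\ref{lemma:schatten1}. For the first step I would write $S_k(M)=\sum_{i=1}^n M_{i,i+k}$ (column index mod $m$) and compute directly, pulling the left $\tilde D$ out as a row weight and the right $\tilde D$ through $\tilde O$:
\[
S_k(\tilde D\tilde O\tilde D R)=\sum_{i}\sigma_i\sum_l \tilde O_{il}\,\sigma_l R_{l,i+k}
=\sum_{i,l}\tilde O_{il}\,[\tilde D B(k)\tilde D]_{li}
=\Tr[\tilde O\,\tilde D B(k)\tilde D],
\]
using $B(k)_{li}=R_{l,i+k}$ and $[\tilde D B(k)\tilde D]_{li}=\sigma_l B(k)_{li}\,\sigma_i$. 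Since $\tilde D B(k)\tilde D$ is $n\times n$, Lemma~\ref{lemma:spectralNorm} gives $\max_{\tilde O}\Re\{S_k(\tilde D\tilde O\tilde D R)\}=||\tilde D B(k)\tilde D||_*$, so that the first inequality in the statement is in fact an equality once summed over $k$.

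For the remaining inequality I would first apply Lemma~\ref{lemma:schatten1}, giving $||\tilde D B(k)\tilde D||_*\le ||\tilde D B(k)\tilde D||_{E_1}=\sum_{i,j}\sigma_i\sigma_j|B(k)_{ij}|$, where the $\sigma$'s are nonnegative. Summing over $k$ and exchanging the order of summation, the essential combinatorial observation is that for fixed $(i,j)$ the column index $j+k$ sweeps through all $m$ columns as $k=1,\dots,m$, so every entry of row $i$ of $R$ is collected exactly once:
\[
\sum_{k=1}^m||\tilde D B(k)\tilde D||_{E_1}=\sum_{i,j}\sigma_i\sigma_j\sum_{k=1}^m|B(k)_{ij}|=\sum_{i,j}\sigma_i\sigma_j\,||R_{i,\cdot}||_1.
\]
Finally, using $||R_{i,\cdot}||_1\le \sqrt m\,||R_{i,\cdot}||_2\le \sqrt m$ (Cauchy--Schwarz together with the orthonormality of the rows of $R$, as in Corollary~\ref{cor:coisobound}), the sum factorizes into
\[
\sum_{i,j}\sigma_i\sigma_j\,||R_{i,\cdot}||_1\le \sqrt m\Big(\sum_i\sigma_i\Big)\Big(\sum_j\sigma_j\Big)=\sqrt m\Big(\sum_{k=1}^n\sigma_k\Big)^2,
\]
which is the claimed bound.

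I expect the only genuinely delicate point to be the trace identity of the first step: one must track carefully which $\sigma$ the left $\tilde D$ extracts (it carries the \emph{summed} index $i$) versus the right $\tilde D$, in order to recognize that the correct conjugated matrix is $\tilde D B(k)\tilde D$ and not, say, $B(k)\tilde D^2$. Everything after that is bookkeeping, resting on Lemma~\ref{lemma:schatten1} and the cyclic-shift identity $\sum_k|B(k)_{ij}|=||R_{i,\cdot}||_1$. I would also flag explicitly that the final $\sqrt m$ factor requires the rows of $R$ to have unit Euclidean norm; this is precisely the orthonormality hypothesis inherited from Corollary~\ref{cor:coisobound}, and without it the bound as stated cannot hold.
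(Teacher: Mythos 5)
Your proposal is correct and follows essentially the same route as the paper's proof: rewrite the shifted diagonal sum as $\Tr[\tilde O\,\tilde D B(k)\tilde D]$, apply Lemma~\ref{lemma:spectralNorm} to get the nuclear norm, dominate it by the entrywise $1$-norm via Lemma~\ref{lemma:schatten1}, use the cyclic sweep $\sum_k|B(k)_{ij}|=\|R_{i,\cdot}\|_1$, and finish with Cauchy--Schwarz on the rows. Your closing remark is also well taken: the final $\sqrt m$ step does require the rows of $R$ to have unit Euclidean norm, a hypothesis the paper's own proof invokes (``each row vector \ldots has unit 2-norm'') even though it is absent from the lemma's statement.
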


\begin{proof}
Note that the expression inside the curly bracket can be written as $\Tr[\tilde D\tilde{O}\tilde DB(k)]=\Tr[\tilde{O}\tilde DB(k)\tilde D]$. Let $B_D(k)=\tilde D B(k) \tilde D$.
    Again, from the von Neumann trace inequality.
    $$\max_{\tilde{O}}Re\{\Tr[D\tilde{O}DB(k)]\}\leq \max_{\tilde{O}}|\Tr[\tilde{O}B_D(k)]|\leq \sum_{i}\beta_i(k) = ||B_D(k)||_*.$$ Hence the left inequality follows for each $k$ and trivially so for the sum.

Now we use the lemma  \ref{lemma:schatten1} above so that $||B_D(k)||_*\leq ||B_D(k)||_{E_1}$.
    \begin{align}
        &\sum_{k=1}^{m}||[B_D(k)]||_*\leq \sum_{k=1}^m  \sum_{k}||B_D(k)||_{E_1} = \sum_k \sum_{ij}|B_D(k)_{ij}|\\
        &= \sum_{ij}(\sum_k|B_D(k)|_{ij})=  \sum_{ij}|R_{ij}|\sigma_i(\sum_k\sigma_k)
    \end{align}
For the final sum, we see that for an element in the same row, it appears in the sum $|B|$ times each time with a different weight $\sigma_k$.

Since each row vector of $\vec{u}_i=R_{ij}$ has unit 2-norm, we have $$\sum_{ij}|R_{ij}|\sigma_i= \sum_i ||u_i||_1\sigma_i\leq \sqrt{m}\sum_{i}\sigma_i.$$
Putting everything together and applying Lemma~\ref{lemma:schatten1}, the right inequality reads 
$$\sum_k||B_D(k)||_*\leq (\sum_{i}\sigma_i||u_i||_1)(\sum_k\sigma_k)\leq \sqrt{m}(\sum_{k=1}^n\sigma_k)^2.$$

%Using the improved bound in lemma~\ref{lemma:schatten1}, we have $||B_D(k)||_*\leq ||B_D(k)||_1$ and thereby removing the $\sqrt{n}$ factor at the very end.

%$$\sum_k||B_D(k)||_*\leq (\sum_{i}\sigma_i||u_i||_1)(\sum_k\sigma_k)\leq \sqrt{m}(\sum_{k=1}^n\sigma_k)^2.$$

\noindent This recovers the bound from Corollary~\ref{cor:coisobound} as a special case when $\sigma_k=1$.

\end{proof}

Now we lower bound $\sum_{k=1}^m \max_{\tilde{O}}Re\{\sum_{i=1}^n[\tilde{D}\tilde{O}\tilde{D}R]_{i,i+k}\}$. %Note that \href{https://yingzhouli.com/posts/2020-07/von-neumann-trace-inequalities.html#ruhe} {Ruhe's trace inequality} does not directly apply as $B$ is not hermitian nor positive semi-definite\cite{marshall11}. However, a simpler argument exists since we are doing maximization.

\begin{theorem}
Let $\tilde{O}$ be an $n\times n$ unitary and $R$ be an $n\times m$ matrix with $m\geq n$. Let $\tilde{D}=diag(\sigma_1,\sigma_2,\dots)$ be a diagonal matrix, $B(k)$ be an $n\times n$ submatrix of $R$ as before. Then
$$\frac{1}{n}(\sum_k\sigma_k)^2\leq\sum_{k=1}^m||\tilde{D}B(k)\tilde{D}||_*=\sum_{k=1}^m\max_{\tilde{O}}\Re\{\sum_{i=1}^n[\tilde{D}\tilde{O}\tilde{D}R]_{i,i+k}\}$$   %\CC{in progress}
\end{theorem}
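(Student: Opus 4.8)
The plan is to prove the displayed lower bound; the equality on the right is already furnished by Lemma~\ref{lemma:spectralNorm} applied for each $k$. Indeed, just as in the proof of Lemma~\ref{lemma:upperbound}, one rewrites $\sum_i[\tilde{D}\tilde{O}\tilde{D}R]_{i,i+k}=\Tr[\tilde{O}\,\tilde{D}B(k)\tilde{D}]$, and Lemma~\ref{lemma:spectralNorm} gives $\max_{\tilde{O}}\Re\{\Tr[\tilde{O}\,\tilde{D}B(k)\tilde{D}]\}=||\tilde{D}B(k)\tilde{D}||_*$ for each window. So I would reduce the whole statement to establishing
\[
\frac{1}{n}\Big(\sum_k\sigma_k\Big)^2\leq \sum_{k=1}^m||\tilde{D}B(k)\tilde{D}||_* \, .
\]

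First I would invoke the lower half of Lemma~\ref{lemma:schatten1}, applied to the $n\times n$ matrix $\tilde{D}B(k)\tilde{D}$ (so that the $\sqrt{mn}$ there becomes $n$): this yields $||\tilde{D}B(k)\tilde{D}||_*\geq ||\tilde{D}B(k)\tilde{D}||_F\geq \tfrac{1}{n}||\tilde{D}B(k)\tilde{D}||_{E_1}$. Summing over $k$ reduces the problem to a lower bound on the purely entrywise quantity $\sum_k||\tilde{D}B(k)\tilde{D}||_{E_1}$. The crucial observation is that this entrywise sum was already computed exactly in the course of proving Lemma~\ref{lemma:upperbound}: since $(\tilde{D}B(k)\tilde{D})_{ij}=\sigma_i B(k)_{ij}\sigma_j$, the same combinatorial bookkeeping gives
\[
\sum_{k=1}^m||\tilde{D}B(k)\tilde{D}||_{E_1}=\Big(\sum_k\sigma_k\Big)\sum_{i,\ell}\sigma_i\,|R_{i\ell}| \, .
\]

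Finally I would use the hypothesis that the rows $\vec{u}_i=(R_{i1},\dots,R_{im})$ of $R$ have unit Euclidean norm (the orthonormality assumption carried over from the companion lemmas). The elementary inequality $||\vec{u}_i||_1\geq ||\vec{u}_i||_2=1$ gives $\sum_\ell|R_{i\ell}|\geq 1$, hence $\sum_{i,\ell}\sigma_i|R_{i\ell}|\geq \sum_i\sigma_i$. Chaining the three steps then yields $\sum_k||\tilde{D}B(k)\tilde{D}||_*\geq \tfrac{1}{n}\big(\sum_k\sigma_k\big)\big(\sum_i\sigma_i\big)=\tfrac{1}{n}\big(\sum_k\sigma_k\big)^2$, as required.

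The main subtlety — and the one step I would check most carefully — is the combinatorial identity of the second paragraph: one must verify that each entry $R_{i\ell}$ sits in exactly $n$ of the cyclic windows $B(k)$ and that, as $k$ ranges over those windows, its local column index runs through $1,\dots,n$ exactly once, so that the column weights sum cleanly to $\sum_j\sigma_j$. Everything else is a short chain of standard norm inequalities; the only directional point to watch is that here one needs $||v||_1\geq ||v||_2$ (the reverse of the bound used in Lemma~\ref{lemma:upperbound}), which is exactly what turns the row normalization into a lower bound rather than an upper bound.
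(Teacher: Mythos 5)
Your proposal is correct and follows essentially the same route as the paper's own proof: the chain $||\tilde{D}B(k)\tilde{D}||_*\geq ||\tilde{D}B(k)\tilde{D}||_F\geq \frac{1}{n}||\tilde{D}B(k)\tilde{D}||_{E_1}$ from Lemma~\ref{lemma:schatten1}, the combinatorial identity $\sum_k||\tilde{D}B(k)\tilde{D}||_{E_1}=(\sum_k\sigma_k)\sum_{i,\ell}\sigma_i|R_{i\ell}|$, and the final step $||u_i||_1\geq ||u_i||_2=1$ are exactly the steps the paper uses. Your remarks about where the unit-row-norm hypothesis enters and about verifying the window-counting identity correctly identify the implicit assumptions the paper also relies on.
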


\begin{proof}
Recall that $B_D(k)=\tilde{D}B(k)\tilde{D}$ and $||B_D(k)||_*\geq ||B_D(k)||_F\geq \frac{||B_D(k)||_{E_1}}{n}$ from Lemma~\ref{lemma:schatten1}.

Setting $u_i$ to be the row vectors of $R_{ij}$ which have unit 2-norms,
\begin{align}&\frac{1}{n}\sum_{k}||B_D(k)||_{E_1} = \frac{1}{n}\sum_{ij}|R_{ij}|\sigma_i\sum_k\sigma_k = \frac{\sum_k\sigma_k}{n}\sum_i \sigma_i||u_i||_1\\
&\geq \frac{\sum_k\sigma_k}{n}\sum_i\sigma_i||u_i||_2=\frac{(\sum_k\sigma_k)^2}{n}\end{align}
\end{proof}

%Again I do not think this bound is tight. 
%This bound should have room for improvement. Case in point is when all $\sigma_k=1$. Empirically when the entries of $R$ is either 0 or 1, the quantity attains $n^2$, not the above lower bound $n$. %Intuitively such configurations give a lower value compared to when all entries of $R$ are $\pm 1$ up to normalization, which should yield the upper limit. Although I struggle to find counterexamples that violate the $n^2\leq $ bound, I could not prove this tighter bound. I suppose the key to proving better bounds involve finding the singular values of $B(k)$ without reference to the Schatten 1 norm.

%From Lemma ~\ref{}, it is obvious by cyclic property of trace that 

\section{Proofs of QI Statements}\label{app:qiproof}
We will order the proofs below based on dependencies, so they may not appear in the same order as they do in the main text.

\subsection{Proof of Theorem \ref{thm:d1X0}}
\label{sec:thmd1X0}

\begin{theorem}
    Let $V:A\rightarrow B$ be a linear map  and let $\chi_0$ be a unitary 1-design over $A$  such that $\chi_0=\{QPQ^{\dagger}: P\in \mathcal{P}, \, Q^{\dagger}|\psi_i\rangle=|i\rangle\}$ where $\mathcal{P}$ is the generalized Pauli group over $A$, $\{|\psi_i\rangle\}$ is the eigenbasis of $V^{\dagger}V$, and $\{|i\rangle\}$ is the computational basis spanning $A$. Then
 \begin{equation}\label{Defd1app}
        d_1(V,\chi_0) \equiv \frac{1}{|\chi_0|}\sum_{O\in \chi_0} \Big[ \min_{\tilde{O}\in U(B)} D_V(\tilde{O},O)\Big] = \frac{1}{|A|}\sum_{i,j=1}^{|A|}(\sigma_i-\sigma_j)^2
    \end{equation}
    where $\{\sigma_i\}$ are the singular values of $V$ and $|\chi_0|$ is the number of elements in the set $\chi_0$.  For the basis vectors that span the null space we take the singular value to be $0$.
\end{theorem}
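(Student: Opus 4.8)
The plan is to reduce the inner minimization over $\tilde O$ to a nuclear norm, exploit the singular value decomposition of $V$ together with the monomial (generalized permutation) structure of the generalized Pauli operators, and then carry out the average over $\chi_0$. I would fix a single $O\in\chi_0$ and evaluate $\min_{\tilde O\in U(B)}D_V(\tilde O,O)$. Using the closed form $D_V(\tilde O,O)=\frac{2}{|A|}\big(\Tr[V^\dagger V]-\mathrm{Re}\{\Tr[V^\dagger\tilde O V O^\dagger]\}\big)$ established in the main text, and rewriting $\Tr[V^\dagger\tilde O V O^\dagger]=\Tr[\tilde O\,VO^\dagger V^\dagger]$, the minimization amounts to maximizing $\mathrm{Re}\{\Tr[\tilde O\,VO^\dagger V^\dagger]\}$ over unitaries. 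Since $VO^\dagger V^\dagger$ is a square $|B|\times|B|$ matrix, Lemma~\ref{lemma:spectralNorm} gives $\max_{\tilde O}\mathrm{Re}\{\Tr[\tilde O\,VO^\dagger V^\dagger]\}=||VO^\dagger V^\dagger||_*$, so that $\min_{\tilde O}D_V(\tilde O,O)=\frac{2}{|A|}\big(\Tr[V^\dagger V]-||VO^\dagger V^\dagger||_*\big)$.

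Next I would insert the singular value decomposition $V=W\Sigma Q^\dagger$, where $\Sigma$ carries the singular values $\sigma_i$ (with $\sigma_i=0$ for the kernel directions and any index beyond the rank), and $Q$ is exactly the unitary diagonalizing $V^\dagger V$ that appears in the definition of $\chi_0$. For $O=QPQ^\dagger$ with $P\in\mathcal P$ the conjugations by $Q$ cancel and unitary invariance of the nuclear norm removes $W$, leaving $||VO^\dagger V^\dagger||_*=||\Sigma P^\dagger\Sigma^\dagger||_*$. The structural point is that each generalized Pauli $P\propto X^aZ^b$ is monomial — a single unit-modulus entry in each row and column — so $\Sigma P^\dagger\Sigma^\dagger$ is again monomial, with nonzero entries of modulus $\sigma_i\,\sigma_{(i+a)\bmod|A|}$ selected by the shift index $a$. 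Because the singular values of a monomial matrix are precisely the moduli of its nonzero entries, $||\Sigma P^\dagger\Sigma^\dagger||_*=\sum_i\sigma_i\,\sigma_{(i+a)\bmod|A|}$, depending only on the clock/shift $a$ and independent of $b$ and of the global phase.

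Finally I would average over $\chi_0$. Since the nuclear norm is blind to $b$ and to phases, the sum over $P\in\mathcal P$ collapses to an equally weighted average over the $|A|$ shift values, and $\sum_{a}\sum_i\sigma_i\,\sigma_{(i+a)\bmod|A|}=\big(\sum_i\sigma_i\big)^2$ because $(i+a)\bmod|A|$ sweeps all indices. Assembling $\min_{\tilde O}D_V$ then expresses $d_1(V,\chi_0)$ through $\sum_i\sigma_i^2$ and $\big(\sum_i\sigma_i\big)^2$, and the elementary identity $\sum_{i,j}(\sigma_i-\sigma_j)^2=2|A|\sum_i\sigma_i^2-2\big(\sum_i\sigma_i\big)^2$ recasts this into the claimed $\frac{1}{|A|}\sum_{i,j}(\sigma_i-\sigma_j)^2$.

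I expect the main difficulty to lie less in the structural reduction than in the careful accounting that makes the final constant come out exactly: the overall factors of $|A|$ threaded through the minimization and the Pauli-group average must be tracked consistently, and the rectangular shape of $\Sigma$ together with the zero singular values of the kernel must be handled so that the monomial identity and the index sums remain valid when some $\sigma_i=0$ (terms with a vanishing $\sigma$ simply drop out, which lets the sums run over all $|A|$ indices). A secondary point worth flagging, in line with the remark following the theorem, is that the optimal $\tilde O$ realizing the nuclear norm always exists, even though some $O\in\chi_0$ — for instance those rotating a kernel vector out of the kernel — admit no exact state-specific reconstruction; this is consistent because $d_1$ measures the optimized $D_V$, which is simply strictly positive for such operators.
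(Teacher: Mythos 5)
Your proposal is correct in structure and takes a genuinely different, and in several respects cleaner, route than the paper's. The paper works basis vector by basis vector: it writes $D_V(\tilde O,O)=\frac{1}{|A|}\sum_i||\tilde O V|\psi_i\rangle-VO|\psi_i\rangle||^2$ in the Schmidt basis, exhibits the optimal $\tilde O$ explicitly for the shift operators $QX^aQ^\dagger$, and then spends most of the argument on a three-way case analysis of how the induced index permutation interacts with $\ker(V)$ (kernel to kernel, kernel to complement, complement to complement), before treating the clock operators $QZ^bQ^\dagger$ separately. Your reduction $\min_{\tilde O}D_V=\frac{2}{|A|}\bigl(\Tr[V^\dagger V]-||VO^\dagger V^\dagger||_*\bigr)$ via Lemma~\ref{lemma:spectralNorm}, followed by the observation that $\Sigma P^\dagger\Sigma^\dagger$ is a generalized permutation matrix whose nuclear norm is $\sum_i\sigma_i\sigma_{i\oplus a}$, absorbs all of that case analysis in one stroke: kernel directions contribute vanishing entries to the monomial matrix and drop out of the nuclear norm automatically, and the independence of $b$ and of global phases is manifest rather than argued. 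This is essentially the same machinery the paper deploys in the proof of Theorem~\ref{thm:dvbound} to get \emph{bounds}; your point that conjugation by $Q$ diagonalizes $V^\dagger V$ and leaves a monomial matrix whose singular values can be read off is exactly what turns the bound into an exact evaluation for this particular 1-design. The identities $\sum_a\sum_i\sigma_i\sigma_{i\oplus a}=(\sum_i\sigma_i)^2$ and $\sum_{i,j}(\sigma_i-\sigma_j)^2=2|A|\sum_i\sigma_i^2-2(\sum_i\sigma_i)^2$ are both correct, as is your closing remark that the existence of the nuclear-norm-optimal $\tilde O$ is not in tension with the non-reconstructability of kernel-rotating operators.

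The one step you must not wave at is precisely the one you flagged as the main difficulty: the final constant. Carrying your own normalizations through, $\frac{1}{|\chi_0|}\sum_{a,b}\frac{2}{|A|}\bigl(\sum_i\sigma_i^2-\sum_i\sigma_i\sigma_{i\oplus a}\bigr)$ with $|\chi_0|=|A|^2$ evaluates to $\frac{1}{|A|^2}\sum_{i,j}(\sigma_i-\sigma_j)^2$, which is one factor of $|A|$ short of the right-hand side of \eqref{Defd1app}; the elementary identity you invoke cannot close that gap. (A direct check with $|A|=2$, $|B|=1$, $V=(1,0)$ gives $d_1=\frac12=\frac{1}{|A|^2}\sum_{i,j}(\sigma_i-\sigma_j)^2$, not $1$.) The paper's own proof arrives at $\frac{1}{|A|}$ only because the Haar-average prefactor $\frac{1}{|A|}$ present in \eqref{eqn:sqrdiff} is silently dropped when $\Sigma_X$ is defined, so your derivation is in fact the internally consistent one and the discrepancy is a normalization slip in the source rather than a flaw in your argument. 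Still, since you identified the constant as the delicate point, you should carry it through explicitly and either report $\frac{1}{|A|^2}\sum_{i,j}(\sigma_i-\sigma_j)^2$ or state which convention for $D_V$ (with or without the Haar $\frac{1}{|A|}$) reproduces the quoted prefactor.
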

\begin{proof}
We give a detailed proof below.    Recall that for $T=\tilde{O}V-VO$,
    \begin{align}\label{DvOO}
    D_V(\tilde{O},O) &= \int d\mu(|\psi\rangle) ||T|\psi\rangle||^2\\
    &=\int d\mu(|\psi\rangle)\Tr[T^{\dagger}T|\psi\rangle\langle\psi|]\\
    &=\frac{1}{|A|}\Tr[T^{\dagger}T]\\
    &=\frac{1}{|A|} \sum_{i=1}^{|A|}\langle\psi_i|T^{\dagger}T|\psi_i\rangle \\
    &=\frac{1}{|A|}\sum_{i=1}^{|A|}||T|\psi_i\rangle ||^2
\end{align}
in the vector 2-norm $||\cdot||$, where we have chosen a basis $\{|\psi_i\rangle\}$ such that $V|\psi_i\rangle=\sigma_i|\tilde{\psi}_i\rangle\in B$ such that $\{|\tilde{\psi}_i\rangle\}, \{|\psi_i\rangle\}$ are the Schmidt bases for the Choi state of $V$. All states $\ket{\psi_i}$ and $|\tilde{\psi}_i\rangle$ are normalized to 1. If a state $|\psi_i\rangle$ is in the kernel of $V$, we take its corresponding singular value $\sigma_i$ to be $0$.

Then \begin{equation}
    D_V(\tilde{O},O) = \frac{1}{|A|}\sum_{i=1}^{|A|}||\sigma_i\tilde{O}|\tilde{\psi}_i\rangle - VO|\psi_i\rangle||^2
\end{equation}
Now consider the sum over $O\in \chi_0$. Because $\chi_0$ is a 1-design, we can choose $\chi_0=\{QPQ^{\dagger}: P\in \mathcal{P}\}$ where $\mathcal{P}=\{\omega(a,b)Z^bX^a, a,b=0,\dots, |A|-1, \omega(a,b)\in\mathbb{C}\}$ is the generalized Pauli group  that is orthonormal under the trace inner product $\Tr[P_1^{\dagger}P_2]/|A|=1$. Here $X,Z$ are shift and clock matrices such that in the computational basis $\{|i\rangle\}$, $X|i\rangle =|i\oplus 1\rangle$ shifts the computational basis and $Z|j\rangle = \exp(i\phi_j)|j\rangle$ adds a phase where $\oplus$ is addition modulo $|A|$. $Q$ is a unitary such that it maps between the Schmidt basis and the computational basis , i.e., $\forall i, Q^{\dagger}|\psi_i\rangle = |i\rangle$.

First consider the set of unitaries $O=QX^aQ^{\dagger}$, 

\begin{equation}
D_V(\tilde{O},O) = \frac{1}{|A|} \sum_i||\sigma_i \tilde{O}|\tilde{\psi_i}\rangle - \sigma_{i\oplus a}|\tilde{\psi}_{i\oplus a}\rangle||^2   
\label{eqn:sqrdiff}
\end{equation}
Formally, we see that the sum is minimized when each term in the summand
\begin{equation}
    \sigma_i^2+\sigma_{i\oplus a}^2-2Re\{\sigma_i\sigma_{i\oplus a}\langle\tilde{\psi}_i| \tilde{O}^{\dagger}|\tilde{\psi}_{i\oplus a}\rangle\}
\end{equation} is minimized, i.e., 
$\tilde{O}^{\dagger}|\tilde{\psi}_{i\oplus a}\rangle=|\tilde{\psi}_i\rangle$. However, we need to be a bit more careful here as some of the states are null when $|A|>|B|$. In the case where the kernel is trivial, whatever operator we implement in $A$ can be reconstructed in $B$ by choosing a unitary that permutes the target states in the right way. The same is not true for the case where there's a non-trivial kernel---$O$ can permute a state in $ker^{\perp}(V)$ to a state in $ker(V)$ and vice versa. In this case, no matter what $\tilde{O}$ we choose, it cannot possibly complete all the permutations as it only acts on the image of a subspace $B\cong ker^{\perp}(V)$. However, we will see that the end result is the same by choosing $\tilde{O}$ which correctly permutes the elements that stay within the image of $ker^{\perp}(V)$. Let $S, S_{\perp}$ denote the set of indices such that for all $i\in S_{\perp}$, $|\psi_{i\in S_{\perp}}\rangle\in ker^{\perp}(V)$ and for all $j\in S, |\psi_{j\in S}\rangle \in ker(V)$.

Consider the permutation $p$ of the indices by the action of $O$. Its action on $|\psi_i\rangle$ can be classified into 3 different types: (1) $i\in S_{\perp}$ and $p(i)\in S_{\perp}$. These vectors that are mapped from $ker^{\perp}(V)$ to itself span a subspace $A_r$. Let the action of $O$ restricted to this subspace $A_r$ be $O_{r}$, which can always be written as a unitary on this subspace as it just involves a sequence of transpositions of the indices. Each transition between two given indices $i\leftrightarrow j$ is represented by a unitary with entries $U_{ij}=U_{ji}=1$, $U_{kl}=\delta_{kl}$ for $k,l\ne i,j$, and the rest of the entries vanishing. (2) The states in the null space are mapped to its orthogonal complement and vice versa. (3) The set of states are mapped from $ker(V)$ to itself.

Now we claim that as long as $\tilde{O}$ permutes the vectors in $ker^{\perp}(V)$ the same way as $O_{r}$, then it minimizes the square difference in (\ref{eqn:sqrdiff}). Denote such an operator as $\tilde{O}_{r}$. For states $|\psi\rangle\in A_r$, $\tilde{O}_r$ clearly minimizes the square difference: 
\begin{equation}
    ||\tilde{O}_r\sigma_i|\tilde{\psi}_i\rangle - \sigma_{p(i)}|\tilde{\psi}_{p(i)}\rangle||^2 = ||\sigma_i|\tilde{\psi}_{p(i)}\rangle - \sigma_{p(i)}|\tilde{\psi}_{p(i)}\rangle||^2 =  (\sigma_i- \sigma_{p(i)})^2 
\end{equation}
%The unitary reconstruction of restricted permutation $O_r$ always exists because one can simply consider a restricted bijective map $f$ that maps $A_r$ to $f(A_r)\subset B$. And the reconstruction $\tilde{O}$ is then isomorphic to $O_r$ which is the unitary action defined on the subspace $A_r$.
%
For states that are in (3), no matter what we choose for $\tilde{O}$, $\sigma_i,\sigma_{p(i)}=0$. Hence their contribution to the sum is trivial: $||\tilde{O}_r\sigma_i|\tilde{\psi}_i\rangle - \sigma_{p(i)}|\tilde{\psi}_{p(i)}\rangle||^2=0$.

Finally for the states in (2): suppose $|\psi_i\rangle\in ker(V)$ and $|\psi_{\sigma(i)}\rangle\in ker^\perp (V)$. Then 
\begin{equation}
||\tilde{O}\sigma_i|\tilde{\psi}_i\rangle - \sigma_{p(i)}|\tilde{\psi}_{p(i)}\rangle||^2 = ||\sigma_{p(i)}|\tilde{\psi}_{p(i)}\rangle||^2 =  \sigma_{p(i)}^2
\end{equation}
no matter what $\tilde{O}$ we choose because $\sigma_i=0$. Similarly, if $|\psi_i\rangle\in ker^\perp (V)$ and $|\psi_{p(i)}\rangle \in ker(V)$, then the second term in the above equation vanishes. Then for any unitary $\tilde{O}$
\begin{equation}
||\tilde{O}\sigma_i|\tilde{\psi}_i\rangle - \sigma_{p(i)}|\tilde{\psi}_{p(i)}\rangle||^2 = ||\sigma_{i}\tilde{O}|\tilde{\psi}_{i}\rangle||^2 = \sigma_{i}^2.
\end{equation}
Therefore, summing over $O=QX^aQ^{\dagger}, a=0,1,\dots |A|-1$,
\begin{equation}
    \Sigma_X:=\sum_{O=QX^a Q^{\dagger}}\min_{\tilde{O}}D_V(\tilde{O},O)= \sum_{a}\sum_i (\sigma_i-\sigma_{i\oplus a})^2
\end{equation}
Recall that $p$ is generated by $X$, therefore $p_a(i)=i\oplus a$. Performing the $a$ sum first, we must have 
\begin{equation}
    \Sigma_X = \sum_i \sum_j(\sigma_i- \sigma_{i\oplus j})^2 = \sum_{i,k}(\sigma_i-\sigma_k)^2 = \sum_{i\ne j} (\sigma_i-\sigma_j)^2
\end{equation}
where $i,j\in \mathbb{Z}_{|A|}$. Note that $\sigma_i=0$ are included in the sum for basis states that span the kernel. For the second equality, we relabel the indices of $i\oplus j=k$ because summing over $j$ clearly yields all index values $<|A|$. Then we see that $\Sigma_X$ is simply computing the differences between all pairs of singular values of $V$ (including self-pairings, but since they always produce 0, we can safely remove them and have the third equality).

Let us now consider operators $O=QZ^bQ^{\dagger}$, which are element of the generalized Pauli group with $a=0, b\ne 0$. The action of $O=QZ^bQ^{\dagger}$ adds a pure phase to $|\psi_i\rangle$ but leaves the state invariant. 
The term of interest is therefore 
\begin{equation}
||\tilde{O}V|\psi_i\rangle-VO|\psi_i\rangle||^2=    \sigma_i^2||\tilde{O}|\tilde{\psi}_i\rangle - \exp(i\phi_i)|\tilde{\psi}_i\rangle||^2
\end{equation}
For the case of trivial kernel, $\tilde{O}$ are easy to reconstruct as we just need to choose $\tilde{O}=\tilde{Q}\tilde{Z}^b\tilde{Q}^{\dagger}$ that adds the same phase to $|\tilde{\psi}_i\rangle$, where $\tilde{Q}$ are unitaries that map the Schmidt basis in $B$ to the computational basis in $B$, $\tilde{Q}|\tilde{\psi}_i\rangle=|\tilde{i}\rangle$ and $\tilde{Z}^b$ are Pauli operators on $B$.  Then for each $i$ the square difference vanishes and the entire sum $\sum_{i}||\tilde{O}V|\psi_i\rangle-VO|\psi_i\rangle||^2=0$.

For any $V$ with kernel, we can discuss two cases: either $|\psi_i\rangle$ is in the kernel or it is not. %For the case it is not, it can be reduced to an argument similar to what we had before where $\tilde{O}$ just needs to reproduce the phases for a restricted operator $O_r$ that acts on $A_r=ker^{\perp}(V)$. 
For the states $|\psi_i\rangle$ that are in the kernel, the $O$ adds a phase to the state which is then mapped to 0. The corresponding $\tilde{O}V|\psi_i\rangle=0$ also for any $\tilde{O}$, hence these terms do not contribute. For $|\psi_i\rangle$ not in the kernel, we just choose $\tilde{O}_r$ such that it adds the same phase as $O$ acting on states in $ker^{\perp}(V)$, i.e.,  $\tilde{O}|\tilde{\psi}_j\rangle = \exp(i\phi_j)|\tilde{\psi}_j\rangle$. As such,  $$\Sigma_Z:=\sum_{b=0}^{|A|-1}\sigma_i^2 ||\tilde{O}|\tilde{\psi}_i\rangle -\exp(i\phi_i)|\tilde{\psi}_i\rangle||^2=0.$$ 
They do not contribute as all $O=QZ^bQ^{\dagger}$ are reconstructable operators in a state-independent manner. 

From the above, we see that for arbitrary $a,b$, we can think of the action of elements of the generalized Pauli group on $|\psi_i\rangle$ as permuting the eigenstates followed by adding a phase. As we can just construct $\tilde{O}$ that undoes the phase change and the permutations separately, as we have shown, it is clear that the reconstruction which minimizes the norm for each $i$ is the one that undoes the permutation as well as the phase change. Because all non-trivial phase operators can always be perfectly reconstructed, the contribution of $O=\omega(a,b)QZ^bX^aQ^{\dagger}$ to the measure is identical to that of $QX^aQ^{\dagger}$. Therefore, the sum over powers $b=0,...,|A|-1$ of $Z$ for operators of the form $O=\omega(a,b)QZ^bX^aQ^{\dagger}$ in the measure \eqref{Defd1app} will simply result in a multiplicative factor $|A|$. Note that the global phase $\omega(a,b)$ will not impact our argument in any way.  Therefore, we finally obtain
\begin{equation}
    d_1(V,\chi_0)=\frac{1}{|A|^2}\sum_{O\in \chi_0}[\min_{\tilde{O}}D_V(\tilde{O},O)] = \frac{1}{|A|^2} |A| \Sigma_X = \frac{1}{|A|}\sum_{i,j}(\sigma_i-\sigma_j)^2
\end{equation}
where we used $|\chi_0|=|A|^2$.

\begin{comment}
    
Now we show that the above value is indeed the minimum over all 1-designs $X$. Without loss of generality, consider any 1-design $X'$ which can be obtained from $\chi_0$ by a left unitary multiplication so that for any $X'=\{O': O'=UO\}$. 

From (\ref{DvOO}), we see that 
\begin{equation}
    D_V(\tilde{O},O') = 2\Tr[V^{\dagger}V] - 2Re\{\Tr[V^{\dagger}\tilde{O}VO']\}
\end{equation}

Minimizing $D_V$ means we maximize \begin{equation}
    Re\{\Tr[V^{\dagger}\tilde{O}VO']\} = Re\{\sum_i\sigma_i\langle\tilde{\psi}_i|\tilde{O}VUO|\psi_i\rangle\}
\end{equation}

Again, let $O=QX^aZ^bQ^{\dagger} = \omega(a,b) QZ^b X^a Q^{\dagger}$ where $\omega(a,b)$ is some complex phase.

The quantity inside the curly bracket now reads
\begin{align}
&\omega(a,b)\sum_i \sigma_i \langle \tilde{\psi}_i| \tilde{O} VUQZ^a |\psi_{\sigma(i)}\rangle  \\
=&\omega(a,b)\sum_i\sigma_i\langle\tilde{\psi}_i|\tilde{O} V\sum_{j} U'_{j\sigma(i)}|\psi_j\rangle\\
=&\omega(a,b)\sum_{i,j} \sigma_i\sigma_{j} \langle\tilde{\psi}_i|\tilde{O}|\tilde{\psi}_j\rangle U'_{j\sigma(i)}\\
=&\omega(a,b)\sum_{ij}\sigma_i\sigma_j \tilde{O}_{ij}U'_{j\sigma(i)}
\end{align}
The permutation $\sigma$ depends on $b$ and we have rewritten $UQZ^a = U'(a)$. Note that $\tilde{O}_{ij}$ is a unitary and the element vanishes for any $i,j$ that labels null states, therefore $M_{i\sigma(i)}=\sum_j\tilde{O}_{ij}U'_{j\sigma(i)}$ must have $|M_{kl}|\leq 1$ and the norm of each row and column vector $\leq 1$. Furthermore, any row $i$ with $\sigma_i=0$ must be 0. \color{blue}{in progress}
\end{comment}
\end{proof}

\begin{remark}
    Note that some but not all observables are reconstructable even in a state-specific manner. From the above proof, we can identify which ones they are in the 1-design. For example, with respect to any state $|\psi\rangle \in ker^{\perp}(V)$, the $O_Z=QZ^bQ^{\dagger}$ is reconstructable as $\tilde{O}_Z=\tilde{Q}\tilde{Z}^b\tilde{Q}^{\dagger}$ because we see that $VO_Z|\psi_i\rangle= \tilde{O}_ZV|\psi_i\rangle$ for all $|\psi_i\rangle\in ker^{\perp}(V)$. However, for any state $|\psi_i\rangle\in ker(V)$ such that $QX^aQ^{\dagger}|\psi_i\rangle \in ker^{\perp}(V)$ (or for any $|\psi_i\rangle\in ker^{\perp}(V)$ such that $QX^aQ^{\dagger}|\psi_i\rangle \in ker(V)$), then clearly there is no $\tilde{O}$, not even a state-specific one, for which $\tilde{O}V|\psi_i\rangle =V QX^a Q^{\dagger}|\psi_i\rangle$.
\end{remark}

\subsection{Proof of Theorem~\ref{thm:dvbound}}
\label{app:proof_thm2.2}
\begin{proposition}
Let $\chi(U)=U\chi_0$ be a 1-design where $\chi_0$ is defined in Theorem~\ref{thm:d1X0}, and  $V:A\rightarrow B$ is a linear map with singular values $\sigma_1,\sigma_2,\dots$. Define
\begin{align}
f_U(\{\sigma_k\},|A|,|B|)&:=2|A|\sum_{k=1}^{|B|}\sigma_k^2-\frac{2}{|B|}(\sum_{k=1}^{|B|}\sigma_k)^2\\
f_L(\{\sigma_k\},|A|,|B|)&:=\max\{0,2\Big[|A|(\sum_{k=1}^{|B|}\sigma_k^2)-\sqrt{|A||B|}(\sum_{k=1}^{|B|} \sigma_k)^2\Big]\} 
\end{align} 
then $f_L\leq  d_1(V,\chi(U))\leq f_U$ for any $U$.
\end{proposition}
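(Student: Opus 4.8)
The plan is to convert each inner minimization over $\tilde O$ into a nuclear norm, sum the result over the design, and then recognize that sum as exactly the object bounded by Lemma~\ref{lemma:upperbound} and its companion lower-bound theorem (the last theorem of Appendix~A). First I would start from the identity $D_V(\tilde O,O)=\frac{2}{|A|}\big(\Tr[V^\dagger V]-\Re\Tr[V^\dagger\tilde O V O^\dagger]\big)$ recorded in the main text. Since minimizing $D_V$ over $\tilde O$ is the same as maximizing $\Re\Tr[\tilde O\,(VO^\dagger V^\dagger)]$, and $VO^\dagger V^\dagger$ is a square $|B|\times|B|$ matrix while $\tilde O$ ranges over $U(B)$, Lemma~\ref{lemma:spectralNorm} gives $\max_{\tilde O}\Re\Tr[\tilde O\,VO^\dagger V^\dagger]=||VO^\dagger V^\dagger||_*$. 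Using $|\chi(U)|=|A|^2$ this yields
\begin{equation}
d_1(V,\chi(U))=\frac{2}{|A|}\Tr[V^\dagger V]-\frac{2}{|A|^3}\sum_{O\in\chi(U)}||VO^\dagger V^\dagger||_* .
\end{equation}
The first term is fixed and supplies the $\sum_k\sigma_k^2$ contribution to both $f_U$ and $f_L$; the whole problem reduces to controlling the sum of nuclear norms.

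Next I would insert the singular value decomposition $V=W_B\,\Sigma\,Q^\dagger$ (with $W_B\in U(B)$, $Q\in U(A)$, $\Sigma=[\,D\mid 0\,]$, $D=\mathrm{diag}(\sigma_1,\dots,\sigma_{|B|})$) and write a design element as $O=U\,Q Z^bX^a Q^\dagger$. Because the nuclear norm is invariant under left/right multiplication by unitaries, $||VO^\dagger V^\dagger||_*=||\,D\,[X^{-a}Z^{-b}N]_{|B|\times|B|}\,D\,||_*$, where $N:=Q^\dagger U^\dagger Q\in U(A)$ and $[\cdot]_{|B|\times|B|}$ denotes the top-left block. The clock factor $Z^{-b}$ enters as a diagonal unitary commuting with $D$, so each nuclear norm is independent of $b$ and the sum over $b$ only contributes an overall factor $|A|$. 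The shift $X^{-a}$ cyclically permutes the rows of $N$, so the surviving block equals (up to a transpose, under which the nuclear norm is invariant) the $|B|\times|B|$ submatrix $B(a)$ built from the consecutive columns $a,\dots,a+|B|-1$ of the matrix $R:=\tilde N^{T}$, where $\tilde N$ is the first $|B|$ columns of $N$. The decisive point is that the rows of $R$ are exactly the first $|B|$ columns of the unitary $N$, hence orthonormal, so $R$ is precisely an $|B|\times|A|$ matrix with orthonormal rows. Collecting everything gives $\sum_{O\in\chi(U)}||VO^\dagger V^\dagger||_*=|A|\sum_{a=0}^{|A|-1}||D\,B(a)\,D||_*$.

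Finally I would feed this into the two norm estimates with $n=|B|$, $m=|A|$, $\tilde D=D$, and $R$ as above. The companion lower-bound theorem $\frac{1}{|B|}\big(\sum_k\sigma_k\big)^2\le\sum_a||D B(a)D||_*$ furnishes an upper bound on the subtracted term and hence the upper bound $d_1(V,\chi(U))\le f_U$. Lemma~\ref{lemma:upperbound}, together with the entrywise/Schatten comparison of Lemma~\ref{lemma:schatten1}, bounds $\sum_a||D B(a)D||_*\le\sqrt{|A|}\big(\sum_k\sigma_k\big)^2\le\sqrt{|A||B|}\big(\sum_k\sigma_k\big)^2$ from above, producing the lower bound $d_1(V,\chi(U))\ge f_L$; the outer $\max\{0,\cdot\}$ merely records that $d_1\ge 0$ by construction, which is automatic since $d_1$ is an average of non-negative quantities.

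The main obstacle is the middle step: proving that left-multiplication by an arbitrary $U$ followed by the generalized-Pauli design really does reproduce the cyclic-submatrix structure required by Lemma~\ref{lemma:upperbound}, and in particular that the relevant matrix $R$ inherits orthonormal rows from the unitarity of $N$. This requires careful block-restriction and index bookkeeping (including wrap-around in $X^{-a}$ and the zero padding in $\Sigma$), and a verification that the clock sum genuinely decouples as a multiplicative factor $|A|$ rather than entangling with the reconstruction optimization over $\tilde O$. Once that identification is secured, the two bounds follow immediately from the Appendix~A lemmas, and specializing to $U=I$ (so $N=I$, $R=[\,I_{|B|}\mid 0\,]$) recovers the exact value $d_1(V,\chi_0)$ of Theorem~\ref{thm:d1X0} as a consistency check.
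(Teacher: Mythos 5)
Your proposal is correct and takes essentially the same route as the paper's proof in Appendix~\ref{app:proof_thm2.2}: after converting $\min_{\tilde O}D_V(\tilde O,O)$ into a nuclear norm (the paper does this in components after absorbing the clock phases into $\tilde O$, you do it more cleanly via cyclicity of the trace, Lemma~\ref{lemma:spectralNorm}, and unitary invariance of $||\cdot||_*$), both arguments land on the same quantity $|A|\sum_a ||D\,B(a)\,D||_*$ built from cyclic $|B|\times|B|$ submatrices of a row-orthonormal $|B|\times|A|$ matrix, and then apply Lemma~\ref{lemma:upperbound} and the companion lower-bound theorem of Appendix~A exactly as you describe (including the harmless weakening of $\sqrt{|A|}$ to $\sqrt{|A||B|}$ in $f_L$). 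The one caveat is inherited bookkeeping: your displayed formula for $d_1$ correctly carries the $1/|\chi_0|=1/|A|^2$ average from its definition, whereas the stated constants in $f_U$ and $f_L$ (and the paper's own final lines) correspond to the unaveraged sum $\sum_O\min_{\tilde O}D_V$, so carrying your normalization through literally would produce the quoted bounds divided by $|A|^2$ --- a constant-factor inconsistency already present between Theorem~\ref{thm:d1X0} and the surrounding text, not a gap in your argument.
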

\begin{proof}
From (\ref{DvOO}), we see that 
\begin{equation}
    D_V(\tilde{O},O') = \frac{2}{|A|}\Tr[V^{\dagger}V] - \frac{2}{|A|}Re\{\Tr[V^{\dagger}\tilde{O}VO'^{\dagger}]\}
\end{equation}
Now consider $UO'^{\dagger}\in U\chi_0$. Minimizing $D_V$ means maximizing \begin{equation}
    Re\{\Tr[V^{\dagger}\tilde{O}VO'^{\dagger}]\} = Re\{\sum_i\sigma_i\langle\tilde{\psi}_i|\tilde{O}VUO'^{\dagger}|\psi_i\rangle\}
    \label{eqn:realpart}
\end{equation}
Recall that Pauli operators $Z|j\rangle =\exp(i\phi_j)|j\rangle$ where $\phi_j = j \omega$ for some phase. Let $O'^{\dagger}=QX^aZ^{b}Q^{\dagger}$.
The quantity inside the curly bracket now reads
\begin{align}
&\sum_j \sigma_j \exp(i\phi_j b)\langle \tilde{\psi}_j| \tilde{O} VU  |\psi_{\mathscr{S}_a(j)}\rangle  \\
=&\sum_j\sigma_j\exp(i\phi_j b)\langle\tilde{\psi}_j|\tilde{O} V\sum_{k} U_{k\mathscr{S}_a(j)}|\psi_k\rangle\\
=&\sum_{j,k} \exp(i\phi_j b)\sigma_j\sigma_{k} \langle\tilde{\psi}_j|\tilde{O}|\tilde{\psi}_k\rangle U_{k\mathscr{S}_a(j)}\\
=&\sum_{j,k}\exp(i\phi_j b)\sigma_j\sigma_k \tilde{O}_{jk}U_{k,j \oplus a}
\end{align}
where $\mathscr{S}_a(j)=j\oplus a$ because it comes from $X^a$ acting on the computational basis states $|j\rangle \rightarrow |j\oplus a\rangle$. Again $\oplus$ is addition modulo $|A|$.

Note that $\exp(i\phi_j b)\sigma_j\sigma_k \tilde{O}_{jk}$ can be written as 
\begin{equation}
    [\tilde{M}]_{ij}=[\tilde{D}\Phi \tilde{O} \tilde{D}]_{ij}
\end{equation}
where $\tilde{D}=diag(\sigma_1,\sigma_2,\dots,\sigma_{|B|})$ and $\Phi=diag(\exp(i\phi_1 b),\exp(i\phi_2 b),\dots)$. Without loss of generality and for convenience, we order the indices such that $\sigma_1\geq \sigma_2\geq \dots$.

Minimizing $D_V$ over all $\tilde{O}$ corresponds to maximizing (\ref{eqn:realpart}). Since $\Phi$ is clearly a unitary and we will be varying $\tilde{O}$ anyway, we can absorb it into $\tilde{O}$ when we maximize $\Re\{\dots\}$ overall $\tilde{O}$. $\tilde{D}$ on the other hand is not a unitary in general.

 Simplifying the only term that has $\tilde{O}$ dependence
\begin{align}
    \sum_{O\in \chi_0}\max_{\tilde{O}} \sum_{ij} [\tilde{D}\tilde{O}\tilde{D}]_{ij} U_{j,i\oplus a} &=\sum_{a,b=0}^{|A|-1}\max_{\tilde{O}} \sum_{ij} [\tilde{D}\tilde{O}\tilde{D}]_{ij} U_{j,i\oplus a} \\
    &= |A|\sum_{a=0}^{|A|-1}\max_{\tilde{O}}\sum_{i,j=1}^{|B|} [\tilde{D}\tilde{O}\tilde{D}]_{ij} \tilde{R}(a)_{j,i} 
\end{align}
In the last equality, we have truncated all singular values with indices $>|B|$ in $\tilde{D}$ because they are zeros. Let $R_{i\leq |B|,j\leq |A|}$ be the first $|B|$ rows of $U$. Because of vanishing singular values $\sigma_{i>|B|}$ the rest of $U$ never contributes to the sum. We have also defined $\tilde{R}(a)$ to be a $|B|\times |B|$ matrix that one gets from $R$ by keeping only the $l=k+a$ to $l=k+a+|B|$th column where it is understood the indices are labelled as $(l-1)mod(|A|)+1$ thanks to the index permutation generated by $X^a$.

Applying lemma \ref{lemma:upperbound}, we get that $$S=\sum_{O\in \chi_0}\max_{\tilde{O}}\Re\{\Tr[V^{\dagger}\tilde{O}VUO'^{\dagger}]\}\leq |A|\sqrt{|A|}(\sum_{k=1}^{|B|} \sigma_k)^2$$
From the non-negativity of $d_1(V,U\chi_0)$, we must also have that $$\sum_{O\in \chi_0}\Tr[V^{\dagger}V]=|A|^2\sum_k(\sigma_k^2)-S\geq 0.$$ Hence a tighter upper bound for $S$ reads
\begin{equation}
    S\leq \min\{|A|\sqrt{|A|}(\sum_{k=1}^{|B|} \sigma_k)^2, |A|^2(\sum_{k=1}^{|B|}\sigma_k^2)\}
\end{equation}
Putting things together, \begin{equation}
   f_L(\{\sigma_k\},|A|,|B|)=\max\{0,2\Big[|A|(\sum_{k=1}^{|B|}\sigma_k^2)-\sqrt{|A|}(\sum_{k=1}^{|B|} \sigma_k)^2\Big]\} \leq d_1(V,U\chi_0)
\end{equation}
On the other hand, we apply the lower bound of $S$
\begin{equation}
    \frac{|A|}{|B|}(\sum_{k=1}^{|B|}\sigma_k)^2\leq S
\end{equation}
meaning that 
\begin{equation}
    d_1(V,\chi(U))\leq 2|A|\sum_{k=1}^{|B|}\sigma_k^2-\frac{2}{|B|}(\sum_{k=1}^{|B|}\sigma_k)^2 = f_U(\{\sigma_k\},|A|,|B|)
\end{equation}

%For a brief remark, let $||v||_1=\sum_k\lambda_k$, then the upper bound can be written as $2|A|(|A|||v||_2^2-\frac{1}{|B|}||v||_1^2)$. Since $||v||_2\geq ||v||_1/\sqrt{|B|}$, $$|A|||v||_2^2-\frac{1}{|B|}||v||_1^2\geq \frac{|A|-1}{|B|}||v||_1^2\geq 0.$$

%This bound, however, is far from tight again as the conjectured bound should not have the extra $1/|B|$ factor in the second term. If that were the case, we would have 
%$$f_U =|A|||v||_2^2-||v||_1^2\geq \frac{|A|-|B|}{|B|}||v||_1^2.$$
\end{proof}

This bound is non-trivial for $|A|\gg |B|$. It's easier to see it in a simpler instance in Sec.~\ref{app:proof_thm2.3}.

\begin{theorem}
  Let $\chi(U) = \{UW: W\in \chi_0\}$ for some unitary $U$. Then
\begin{align}
    &f_L(\{\sigma_k\},|A|,|B|)\leq d_\nu(V)\leq f_U((\{\sigma_k\},|A|,|B|).
\end{align}
\end{theorem}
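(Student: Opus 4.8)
The plan is to reduce the statement about the continuous Haar average $d_\nu(V)$ to the uniform bounds on the 1-design quantity $d_1(V,\chi(U))$ established in the preceding Proposition. The key observation is that $d_\nu(V)$ can itself be written as a Haar average of $d_1(V,\chi(U))$ over the anchor unitary $U$.

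First I would write out, directly from the definition of $\chi(U)=U\chi_0$,
\begin{equation}
d_1(V,\chi(U)) = \frac{1}{|\chi_0|}\sum_{W\in\chi_0}\Big[\min_{\tilde O\in U(B)} D_V(\tilde O, UW)\Big],
\end{equation}
and then average both sides over $U$ with respect to the (unit-mass) Haar measure $d\nu$ on the unitaries of $A$, exchanging the finite sum with the integral:
\begin{equation}
\int d\nu(U)\, d_1(V,\chi(U)) = \frac{1}{|\chi_0|}\sum_{W\in\chi_0}\int d\nu(U)\,\Big[\min_{\tilde O} D_V(\tilde O, UW)\Big].
\end{equation}
For each fixed $W\in\chi_0$ the map $U\mapsto UW$ is a measure-preserving bijection of the unitary group by right-invariance of the Haar measure, so each summand equals $\int d\nu(O)\,[\min_{\tilde O} D_V(\tilde O, O)]=d_\nu(V)$, independently of $W$. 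Summing the $|\chi_0|$ identical terms and dividing by $|\chi_0|$ then gives the factorization
\begin{equation}
d_\nu(V) = \int d\nu(U)\, d_1(V,\chi(U)).
\end{equation}

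With this identity the conclusion follows immediately: the preceding Proposition gives $f_L\le d_1(V,\chi(U))\le f_U$ for every $U$, and since the functions $f_L,f_U$ depend only on the singular values $\{\sigma_k\}$ and the dimensions $|A|,|B|$ --- and not on $U$ --- integrating these $U$-independent bounds against the normalized Haar measure preserves them,
\begin{equation}
f_L \le \int d\nu(U)\, d_1(V,\chi(U)) = d_\nu(V) \le f_U .
\end{equation}

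The main obstacle is the structural factorization step. It hinges on the translation-invariance of the Haar measure together with the fact that the inner optimization $\min_{\tilde O}$ is performed separately for each operator $O$, so it commutes with the outer averaging over $U$; this is also why no subtlety arises from the minimization being inside the integral. One should also fix the normalization of $\nu$ consistently with the $1/\mathrm{vol}(\nu)$ appearing in the definition of $d_\nu$, but once both steps are arranged the remaining argument is routine.
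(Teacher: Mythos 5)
Your proposal is correct and follows essentially the same route as the paper: both establish the identity $d_\nu(V)=\int d\nu(U)\,d_1(V,\chi(U))$ by exchanging the finite sum over $\chi_0$ with the Haar integral and invoking right-invariance of the Haar measure, and then transfer the $U$-independent bounds $f_L\leq d_1(V,\chi(U))\leq f_U$ from the preceding Proposition. The only cosmetic difference is that you isolate the factorization identity as a standalone step before applying the bounds, whereas the paper folds it into a single chain of inequalities.
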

\begin{proof}
%A heuristic proof only. 
If we can interchange the sum and the integral, then we can arrive at a lower bound for the intended state-dependence measure integrated over Haar. 
\begin{align}
    f_L(\{\sigma_k\},|A|,|B|)&\leq \frac{1}{vol(U_{|A|})}\int d\nu(U) d_1(V,\chi(U)) \\
    &= \frac{1}{vol(U_{|A|})}\int d\nu(U) \frac{1}{|\chi(U)|}\sum_{{O}\in \chi(U)} \min_{\tilde{O}} D_V(\tilde{O},O)\\
    &= \frac{1}{|\chi_0|}\sum_{W\in \chi_0}\int \frac{d\nu(U)}{vol(U_{|A|})} \min_{\tilde{O}} D_V(\tilde{O}, UW) \\
    &= \frac{1}{|\chi_0|} \sum_{W\in \chi_0}d_\nu(V) = d_\nu(V).
\end{align}
The top expression is by substituting the lower bound for any design $\chi(U)$ then integrate over a constant. The third equality assumes I can interchange sum and integral. $\chi(U)$ are isomorphic to $\chi_0$ so the volume does not change. Finally, integrating the Haar measure doesn't care whether we shift the variable by a constant unitary $W$, so they should all return the same outcome $d_\nu(V)$.

The same argument can be repeated to get the upper bound.
    
\end{proof}

\subsection{Proof of Theorem~\ref{thm:coiso_bound}}
\label{app:proof_thm2.3}

From Theorem~\ref{thm:dvbound} we can conclude that
\begin{corollary}\label{coro:coiso_bound}
    For $V$ proportional to co-isometries such that $\sigma_k=\eta$ for all $k=1,\dots, |B|$, then $$2\eta^2(|A||B|-|B|)\geq d_1(V,\chi(U))\geq \max\{0,2\eta^2(|A||B|-|A|^{1/2}|B|^{2})\}.$$
\end{corollary}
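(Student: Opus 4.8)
The plan is to specialize the general singular-value bounds established just above in this appendix---the proposition bounding $d_1(V,\chi(U))$ between $f_L$ and $f_U$, i.e.\ the $d_1$-version of Theorem~\ref{thm:dvbound}---to the flat singular spectrum produced by a rescaled co-isometry. First I would record the spectral structure. If $V$ is proportional to a co-isometry with proportionality constant $\eta$, then $V^\dagger$ equals $\eta$ times an isometry, so $VV^\dagger=\eta^2 I_B$. Hence $V^\dagger V$ has eigenvalue $\eta^2$ with multiplicity $|B|$ and eigenvalue $0$ with multiplicity $|A|-|B|$, which is exactly the hypothesis $\sigma_k=\eta$ for $k=1,\dots,|B|$, with the remaining $|A|-|B|$ singular values vanishing.

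With this spectrum, the two symmetric sums appearing in $f_U$ and $f_L$ collapse to $\sum_{k=1}^{|B|}\sigma_k^2=|B|\eta^2$ and $\big(\sum_{k=1}^{|B|}\sigma_k\big)^2=|B|^2\eta^2$. Substituting into $f_U=2|A|\sum_k\sigma_k^2-\frac{2}{|B|}\big(\sum_k\sigma_k\big)^2$ gives
\[
f_U=2|A||B|\eta^2-\tfrac{2}{|B|}\,|B|^2\eta^2=2\eta^2\big(|A||B|-|B|\big),
\]
which is the claimed upper bound. Substituting into $f_L=\max\{0,\,2[\,|A|\sum_k\sigma_k^2-\sqrt{|A|}\,(\sum_k\sigma_k)^2\,]\}$ gives
\[
f_L=\max\{0,\,2\eta^2(|A||B|-|A|^{1/2}|B|^2)\},
\]
which is the claimed lower bound. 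Since the proposition guarantees $f_L\le d_1(V,\chi(U))\le f_U$ for every $U$, these two substitutions complete the argument.

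There is essentially no obstacle here beyond bookkeeping, since all of the content is already contained in the proven proposition. The only step requiring the slightest care is confirming that a rescaled co-isometry genuinely has exactly $|B|$ equal nonzero singular values and $|A|-|B|$ vanishing ones, so that the truncated sums $\sum_{k=1}^{|B|}$ really do exhaust the nonzero part of the spectrum and the general formulas apply verbatim. I would also flag, for internal consistency, that the lower bound stated in the corollary uses the $\sqrt{|A|}$ of the $d_1$-proposition rather than the $\sqrt{|A||B|}$ appearing in the $d_\nu$-version of Theorem~\ref{thm:dvbound}; the flat-spectrum specialization reproduces the former, as it must.
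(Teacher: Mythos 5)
Your proof is correct and is exactly the paper's argument: the corollary is obtained by substituting the flat spectrum $\sigma_k=\eta$ for $k\le|B|$ (and zero for the remaining $|A|-|B|$ singular values of a rescaled co-isometry, since $VV^\dagger=\eta^2 I_B$) into the general bounds $f_L\le d_1(V,\chi(U))\le f_U$ of the appendix proposition. Your closing remark is also on target --- the corollary's $|A|^{1/2}|B|^2$ term descends from the $\sqrt{|A|}$ coefficient actually derived inside that proposition's proof, not from the $\sqrt{|A||B|}$ appearing in the displayed statement of Theorem~\ref{thm:dvbound}.
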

Note that the lower bound, though non-trivial for $|A|\geq |B|^2$, is not tight as we know that $d_1(V,\chi(U))>0$ as soon as $|A|>|B|$ from our definition. A more reasonable guess would be to put $|B|^{3/2}$ instead of $|B|^{2}$ in the second term as suggested by power counting and supported by small scale numerics. %However I had trouble proving it. It may be coming from the fact that when $||B||_1\geq ||B||_*$ is saturated when the rows vectors have a single 1. 
The upper bound for $||R||_{E_1}$ is far from tight in this case, where we should have $|B|$ instead of $|B|^{3/2}$. On the other hand, when $A$ is a normalized Hadamard matrix, then $||R||_{E_1}\leq |B|^{3/2}$ is tight, but $||B||_{E_1}\geq ||B||_*$ is not tight. Indeed, $||B||_{E_1}$ would yield $|B|^{3/2}$ whereas $||B||_*$ is only $|B|$. At least in these somewhat special limits, we see that there's an additional $\sqrt{|B|}$ factor that can be tightened.

%I conjecture the upper bound to be $2(|A|^2|B|-|A||B|^2)$ which differs from the current one by a factor of $|B|$ in the second term. It also coincides with the value of $d_1(V,\chi_0)$ exactly for co-isometries. Numerically I have not yet found counterexamples to this conjecture.

%It warrnts to repeat the above more carefully when I have time. As a corollary, this result also extends to the case where $V$ is proportional to co-isometry as we just need to pull out an overall constant from $V$ and make use of the above proof. 

For equal non-zero singular values, one can derive a tighter upper bound.
\begin{proposition}\label{prop:coiso_upperbound}
Suppose $V:A\rightarrow B$ is a co-isometry, i.e., all non-zero singular values are 1s, then
    \begin{equation*}
d_1(V,\chi(U))\leq 2|A|^2|B|-2|A||B|^2=d_1(V,\chi_0)
\end{equation*}

\end{proposition}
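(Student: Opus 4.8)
The plan is to prove the two assertions separately: the equality $d_1(V,\chi_0)=2|A|^2|B|-2|A||B|^2$, which is a direct evaluation, and the inequality $d_1(V,\chi(U))\le d_1(V,\chi_0)$, which says that the aligned design $\chi_0$ maximizes the state-dependence among all left-translates $\chi(U)=U\chi_0$. For the equality I would feed the singular-value spectrum of a co-isometry into Theorem~\ref{thm:d1X0}: a co-isometry has $|B|$ singular values equal to $1$ and $|A|-|B|$ equal to $0$, so $\sum_{i,j}(\sigma_i-\sigma_j)^2$ only receives contributions from the ordered pairs straddling the zero/nonzero split, giving $2|B|(|A|-|B|)$, and the stated value follows with the normalization used here.

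For the inequality I would first reduce $d_1(V,\chi(U))$ to a sum of nuclear norms exactly as in the proof of the Proposition in Appendix~\ref{app:proof_thm2.2}. Writing $D_V(\tilde O,O)=\frac{2}{|A|}\big(\Tr[V^\dagger V]-\Re\{\Tr[V^\dagger\tilde O VO^\dagger]\}\big)$ and maximizing over $\tilde O$ with Lemma~\ref{lemma:spectralNorm}, the clock factor $Z^b$ contributes only a phase that is absorbed into $\tilde O$ (supplying an overall factor $|A|$), so that
$$d_1(V,\chi(U))=c_0-c_1\sum_{a=0}^{|A|-1}\big\|\tilde D\,B(a)\,\tilde D\big\|_*,$$
where $c_0,c_1>0$ depend only on $\Tr[V^\dagger V]$ and on the dimensions, $\tilde D=\mathrm{diag}(\sigma_1,\dots,\sigma_{|B|})$, and $B(a)$ is the cyclic $|B|\times|B|$ column-submatrix of the matrix $R$ formed by the first $|B|$ rows of $U$ in the singular basis of $V$. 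For a co-isometry $\tilde D=I$, so only $\sum_a\|B(a)\|_*$ survives; since $c_0$ is $U$-independent, maximizing $d_1$ over $U$ amounts to minimizing $\sum_a\|B(a)\|_*$.

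The crux is then the lower bound $\sum_a\|B(a)\|_*\ge|B|^2$, which is saturated by the aligned choice $R=[\,I_{|B|}\,|\,0\,]$ corresponding to $U=I$. I would establish it from two observations. First, $R$ has orthonormal rows, so all of its singular values equal $1$; since $B(a)=RS_a$ with $S_a$ a coordinate selection ($\|S_a\|_\infty=1$), every singular value of $B(a)$ satisfies $\sigma_i(B(a))\le1$, whence $\|B(a)\|_*=\sum_i\sigma_i(B(a))\ge\sum_i\sigma_i(B(a))^2=\|B(a)\|_F^2$. Second, $\sum_a B(a)B(a)^\dagger=R\big(\sum_a S_aS_a^\dagger\big)R^\dagger=|B|\,RR^\dagger=|B|\,I$, because each column index lies in exactly $|B|$ of the cyclic windows; taking the trace gives $\sum_a\|B(a)\|_F^2=|B|\Tr[RR^\dagger]=|B|^2$. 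Combining the two yields $\sum_a\|B(a)\|_*\ge\sum_a\|B(a)\|_F^2=|B|^2$, with equality exactly when each $B(a)$ is a partial isometry — which holds for the aligned $R$ since its nonzero columns are distinct standard basis vectors. Substituting back gives $d_1(V,\chi(U))\le c_0-c_1|B|^2=d_1(V,\chi_0)$.

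The main obstacle is the inequality $\|B(a)\|_*\ge\|B(a)\|_F^2$: it is false for general matrices and works here only because being a submatrix of a co-isometry forces all singular values into $[0,1]$, where $\sigma\ge\sigma^2$. This is precisely the feature that makes the co-isometric case sharper than the general estimates of Theorem~\ref{thm:dvbound} and of the lower-bound theorem proved just before this Proposition in Appendix~\ref{app:proof_thm2.3}, whose bound $\sum_a\|B(a)\|_*\ge\frac{1}{|B|}(\sum_k\sigma_k)^2=|B|$ is too weak to identify the maximizing design. A secondary point requiring care is the bookkeeping of the $Z^b$ phase absorption and of the constants $c_0,c_1$, so that the saturated value $c_0-c_1|B|^2$ matches the $d_1(V,\chi_0)$ computed from Theorem~\ref{thm:d1X0}.
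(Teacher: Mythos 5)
Your proof is correct and follows essentially the same route as the paper's: both reduce the claim to the lower bound $\sum_a\|B(a)\|_*\ge\sum_a\|B(a)\|_F^2=|B|^2$, using that every column-submatrix of the row-orthonormal $R$ has singular values in $[0,1]$ so that $\sigma\ge\sigma^2$. Your explicit check that the aligned $U=I$ saturates the bound (each $B(a)$ a partial isometry) and the independent evaluation of $d_1(V,\chi_0)$ via Theorem~\ref{thm:d1X0} are small additions the paper leaves implicit; just be aware that the stated prefactor $2|A|^2|B|-2|A||B|^2$ reflects the appendix's unnormalized conventions, which differ from the main text's $2(|A||B|-|B|^2)$ by factors of $|A|$, so "the normalization used here" needs to be pinned down explicitly.
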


\begin{proof}
Recall that the nuclear norm $||B||_*=\sum_{i}\sigma_i\geq \sqrt{\sum_i\sigma_i^2}=||B||_F$. However, here we have something better because the singular values of the rectangular matrix $R$ are $1$s because its rows are taken from a unitary matrix. Since one obtains $B(k)$ from $R$ by removing columns, it must follow that the singular values of $B(k)$ obey $\sigma_i\leq 1$. This means that $\sum_i \sigma_i\geq \sum_i \sigma_i^2 =||B(k)||_F^2$ (without the square root).

We are interested in finding $\sum_k ||B(k)||_*$ where $B(k)$ are defined in the usual way by taking square submatrices from a rectangular matrix $R$.
Instead of applying the 1-norm lower bound, we use the tighter bound above: 
\begin{equation}
    \sum_{k=1}^{|A|} ||B(k)||_*\geq \sum_{k}||B(k)||_F^2=|B|||R||_F^2=|B|^2.
\end{equation}
The last equality follows because $R$ is made up of orthonormal row vectors whose 2-norms are 1. Adding $|B|$ rows yields the above. 

Then $\sum_{O\in U\chi}\max_{\tilde{O}}\Re\{\Tr[V^{\dagger}\tilde{O}VUO]\}\geq |A||B|^2$. Combining with the constant piece, we have that for co-isometries
\begin{equation}
d_1(V,\chi(U))\leq 2|A|^2|B|-2|A||B|^2
\end{equation}
%which is precisely the  conjectured upper bound.
%
For maps $V$ that are proportional to co-isometries by some constant $\eta$, we simply pull out these factors at the beginning and get 
$d_1(V,\chi(U))\leq (2|A|^2|B|-2|A||B|^2)\eta^2$
\end{proof}
From this, it follows that for maps that are co-isometries or proportional to co-isometries, $d_1(V,\chi_0)$ indeed plays a special role as it saturates the upper bound of all $d_1(V,\chi(U))$ and hence that of the Haar averaged measure $d_\nu(V)$.

%One should be able to generalize the above proof to unequal eigenvalues where instead of $|B|^2$ we would have $\sum_{ij}\lambda_i^2\lambda_j^2/\lambda_{\rm max}^2$ where $\lambda_{\rm max}$ is again the largest singular value of $V$. The tricky part is how the singular values of $B$ transform under conjugation by $D$ as our proof requires that the singular values to be bounded by 1, we need to rescale by the right amount. A somewhat trivial upper bound is of course $\sigma_i(DBD)\leq \sigma_i(B)\lambda_{\rm max}^2$ as $D$ is diagonal with max singular value $\lambda_{\rm max}$.

Combining Corollary~\ref{coro:coiso_bound} and Proposition~\ref{prop:coiso_upperbound}, we obtain Theorem~\ref{thm:coiso_bound}.

\bibliographystyle{JHEP}
\bibliography{references}
\end{document}